\documentclass[acmsmall,authorversion,nonacm]{acmart}
\settopmatter{printfolios=false,printccs=false,printacmref=false}

\setcopyright{none}
\bibliographystyle{ACM-Reference-Format}

\citestyle{acmauthoryear}   

\usepackage{booktabs}   
\usepackage{subcaption} 
\usepackage{listings}
\usepackage{todonotes}
\usepackage{physics}
\usepackage{bm}
\usepackage{wasysym}
\usepackage{xspace}
\usepackage{multirow}
\usepackage{subcaption}

\definecolor{forestgreen}{rgb}{0.13, 0.55, 0.13}
\colorlet{myblue}{blue!50!black}
\colorlet{mygreen}{green!40!black}
\colorlet{comgreen}{green!10!gray}

\newcommand{\sdql}{SDQL\xspace}
\newcommand{\lang}{SDQLite\xspace}
\newcommand{\system}{$\nabla$SD\xspace}
\newcommand{\codestyle}{\ttfamily\color{black}}
\newcommand{\commentstyle}{\ttfamily\color{comgreen}}
\newcommand{\defines}{\triangleq}
\newcommand{\codekwstyle}{\ttfamily\bfseries\color{myblue}}
\newcommand{\codemacrostyle}{\ttfamily\bfseries\color{mygreen}}
\newcommand{\code}[1]{\textrm{\codestyle{}#1}}
\newcommand{\codekw}[1]{\textrm{\codekwstyle{}#1}}
\newcommand{\codemacro}[1]{\textrm{\codemacrostyle{}#1}}
\newcommand{\zeromacro}[1]{\codemacro{zero}\code{[}#1\code{]}}

\newcommand{\outertype}{\codemacro{$\otimes$}}

\newcommand{\tmult}[1]{\codemacro{*}$^T$[#1]}

\newcommand{\dfsmooth}{$\text{d}\widetilde{\textsc{f}}$\xspace}

\newcommand{\forwardbegin}{$\mathcal{F} \llbracket$}
\newcommand{\forwardend}{$ \rrbracket$}
\newcommand{\fadttang}[1]{$\mathcal{F}\llbracket#1\rrbracket$}
\newcommand{\fadttype}[1]{\fadttang{#1}}
\newcommand{\fadtctx}[1]{\fadttang{#1}}

\newcommand{\fvadtprime}[1]{#1}
\newcommand{\fvadtgen}[2]{$\mathcal{D}_{#1}\llbracket #2 \rrbracket$}
\newcommand{\fvadttang}[1]{\fvadtgen{\tau}{#1}}
\newcommand{\fvadttype}[1]{\fvadttang{#1}}
\newcommand{\fvadtctx}[1]{\fvadttang{#1}}
\newcommand{\tengengradbegin}[1]{$\mathcal{D}_{#1}\llbracket$}
\newcommand{\tengradbegin}{\tengengradbegin{\tau}}
\newcommand{\tengradend}{$\rrbracket$}
\newcommand{\tengrad}[1]{\fvadttang{#1}}
\newcommand{\smartpara}[1]{\noindent\textbf{#1.}}

\lstdefinelanguage{sdql}{
  morekeywords={if,then,else,let,in,not,%
  true,false,subarr,%
  sum,range,merge,%
  dense,unique,%
  int,real,dense_int,bool,string},%
  emph={%
    zero,gradient,ingrad,onehot,tensor,%
    },emphstyle=\codemacrostyle,
  sensitive,%
  morecomment=[l]//,%
  morecomment=[s]{/*}{*/},%
  morestring=[b]",%
  showstringspaces=false,%
  breaklines=true,%
  mathescape=true,%
  showspaces=false,
  showtabs=false,
  showstringspaces=false,
  breakatwhitespace=true,
  xleftmargin=1em,
  aboveskip=1pt,
  belowskip=1pt,
  lineskip=-0.2pt,
  basicstyle=\codestyle,
  keywordstyle=\codekwstyle,%
  columns=fullflexible,
 commentstyle=\commentstyle,
  escapeinside={(*@}{@*)}
}[keywords,comments,strings]%

\lstset{language=sdql}
\lstMakeShortInline[columns=fixed, keepspaces=true, language=sdql]!

\newcommand{\specialcell}[2][c]{%
  \begin{tabular}[#1]{@{}l@{}}#2\end{tabular}}

  \newcommand{\specialcellc}[2][c]{%
  \begin{tabular}[#1]{@{}c@{}}#2\end{tabular}}

  \newcommand{\tab}{\;\;\;}
  \newcommand{\grammarcomment}[1]{\textit{\small #1}}

\newcommand{\sem}[1]{\llbracket #1 \rrbracket}
\newcommand{\RR}{\mathbb{R}}
\newcommand{\NN}{\mathbb{N}}
\newcommand{\BB}{\mathbb{B}}

\begin{document}

\title{\system{}: Differentiable Programming for Sparse Tensors}

\author{Amir Shaikhha}
\email{amir.shaikhha@ed.ac.uk}
\affiliation{
  \institution{University of Edinburgh}            
  \country{United Kingdom}                    
}
\author{Mathieu Huot}
\email{mathieu.huot@stx.ox.ac.uk}
\affiliation{
  \institution{University of Oxford}            
  \country{United Kingdom}                    
}
\author{Shideh Hashemian}
\email{shideh.hashemian@aut.ac.ir}
\affiliation{
  \institution{Amirkabir University of Technology}            
  \country{Iran}                    
}


\begin{abstract}
Sparse tensors are prevalent in many data-intensive applications, yet existing differentiable programming frameworks are tailored towards dense tensors. 
This presents a significant challenge for efficiently computing gradients through sparse tensor operations, as their irregular sparsity patterns can result in substantial memory and computational overheads. 
In this work, we introduce a novel framework that enables the efficient and automatic differentiation of sparse tensors, addressing this fundamental issue. Our experiments demonstrate the effectiveness of the proposed framework in terms of performance and scalability, outperforming state-of-the-art frameworks across a range of synthetic and real-world datasets. Our approach offers a promising direction for enabling efficient and scalable differentiable programming with sparse tensors, which has significant implications for numerous applications in machine learning, natural language processing, and scientific computing.
\end{abstract}

\keywords{Sparse Tensor Algebra, Automatic Differentiation, Semi-Ring Dictionaries}

\maketitle

\section{Introduction}
Sparse tensors are essential in many scientific and engineering applications, such as natural language processing, computer vision, and graph analytics. Unlike dense tensors, which store all of their elements regardless of their value, sparse tensors only store non-zero values, resulting in significant memory savings and computational efficiency. Sparse tensors also enable efficient representation and manipulation of high-dimensional data structures, which are often encountered in modern machine learning and scientific computing, such as sparse tensors representing the frequency of words in a document or corpus in natural language processing,  adjacency matrices of large and sparse graphs in network/relational analysis, or sparse user-item interaction matrices for collaborative filtering in recommender systems.
This has inspired recent interest in developing better support for sparse tensors~\cite{DBLP:journals/pieee/StroutHO18,kjolstad:2017:taco,DBLP:journals/cgf/TangSKPLP20}.

Automatic differentiation (AD) is a fundamental technique in modern machine learning and scientific computing that enables efficient computation of the gradient of a function. This is crucial for optimization, parameter estimation, and many other applications in which gradient-based optimization methods are employed. While AD tools for dense tensors are abundant and well-established, the lack of efficient AD tools for sparse tensors hinders the wider adoption of these techniques and represents a major research challenge in this field. Libraries such as TensorFlow, PyTorch, and JAX provide efficient and scalable implementations of gradient computation for dense tensor operations, but their support for sparse operations is limited~\cite{tensorflow:sp:issue,jax:sp:issue,pytorch:sp:issue}. As a result, there have been various efforts to manually provide differentiation for particular sparse tensor kernels~\cite{nytko2022optimized}.

AD for sparse tensor algebra is more challenging than for dense tensor algebra for several reasons. Firstly, the structure of sparse tensors is more complex than that of dense tensors, and their sparsity patterns are often irregular and vary across different operations. This can make it challenging to efficiently propagate gradients through the computation graph and to identify which elements of the sparse tensor are relevant for the gradient computation. Secondly, sparse tensor operations often require the use of specialized data structures and algorithms, such as compressed sparse row (CSR) or compressed sparse column (CSC) formats, which are not natively supported by most AD-enabled frameworks or have incomplete AD rules.

This paper presents \system{}, a novel differentiable programming framework that supports the automatic differentiation of arbitrary sparse computations. To the best of our knowledge, \system{} is the first framework that provides this capability.
As opposed to the existing frameworks that offer AD support for a limited number of sparse kernels~\cite{nytko2022optimized,tensorflow2015-whitepaper,Paszke_PyTorch_An_Imperative_2019,jax2018github}, \system{} allows the AD of an arbitrary sparse computation expressible in tensor algebra. 

The key insight is to \textit{perform differentiation over a logical representation of a sparse tensor}. 
This means that there is a clear separation of concerns between the semantics of differentiation over a program on the one hand, and optimizations and data layout representations on the other (cf. Figure~\ref{fig:archhigh}). To see the simple example of the dot product of two vectors, see Example 1 (cont.) for its logical form before differentiation in \S\ref{sub:sparse-tensors-semiring-dico}, after differentiation in \S\ref{sub:tensorized-fad}, its physical coordinate format (COO) representation in \S\ref{sub:storage-comp}, and in \S\ref{sub:dps-and-cpp} for the generated code.

The physical representation of sparse tensors (e.g., COO/CSR/CSC) involves multiple arrays storing a compressed representation of the matrix (cf. Figure~\ref{fig:bowexample:2csr}). The computations over such representations involve \textit{imperative} nested loops over these arrays. However, our logical representation uses a nested dictionary, where sparse computations are expressed \textit{functionally} as nested summations over them. The logical representation can be later fused with a concrete physical storage format (cf. Section~\ref{sub:storage-comp}).

In more detail, the contributions of this paper are as follows:


\begin{figure}
    \centering
    \includegraphics[width=0.96\linewidth]{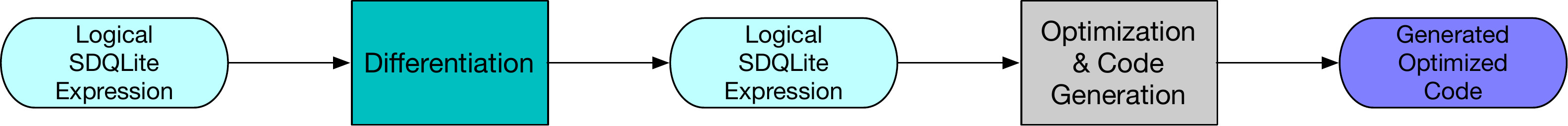}
    \caption{The high-level overview of transformations in \system{}.}
    \label{fig:archhigh}
\end{figure}

\begin{itemize}
    \item We present \system{}, the first framework with systematic support for the automatic differentiation of sparse tensors. \system{} is based on \lang{}~\cite{DBLP:journals/pacmpl/ShaikhhaHSO22,schleich2022optimizing}, a recently introduced intermediate language that can express sparse tensor workloads by \textit{separating the tensor computations from the storage specifications} (Section~\ref{sec:overview}). 
    \item We introduce a novel \textit{tensorized forward-mode AD} that computes the gradients in a batch (Section~\ref{sec:diff}). Our automatic differentiation transform is over the logical part of the language, which we call Logical \lang{}, without worrying about the physical storage formats. 
    \item The differentiated program is then optimized by leveraging the following AD-agnostic transformations: (1) \textit{sparsity-aware rewrite rules}, (2) composing with the physical storage formats, and (3) algebraic rewrite rules applied in a cost-based manner using \textit{equality saturation}~\cite{tate2009equality,DBLP:journals/pacmpl/WillseyNWFTP21} (Section~\ref{sec:opt}). In addition, \system{} performs low-level transformations for removing unnecessary intermediate tensors appearing in nested loops before generating low-level C++ code.
    \item We show the correctness of our approach (Section~\ref{sec:semantics}). That is, AD is a well-typed transformation, computes derivatives  of programs, and our optimizations are sound with respect to our denotational semantics.
    \item We experimentally validate the effectiveness of \system{} in comparison with the state-of-the-art frameworks (Section~\ref{sec:exp}). We demonstrate that \system{} scales the gradient computation to large matrices with many zero elements over both real-world and synthetic datasets.
\end{itemize}

\section{Background}
\label{sec:bg}
\subsection{Automatic Differentiation}

Automatic differentiation (AD) is a powerful and widely used technique in machine learning and scientific computing that enables efficient computation of the gradient of a function. The gradient is a crucial quantity in many optimization, parameter estimation, and machine learning algorithms, and its computation is often a bottleneck in the training process. AD provides a computationally efficient and accurate way of computing the gradient by breaking down a function into a series of elementary operations and applying the chain rule to compute the derivatives of each operation. The result is an exact gradient that is computed with a similar computational cost as the original function, with no need for approximate methods such as finite differences, or manual derivation.

\smartpara{Forward Mode}
One method of computing the gradient of a function is the forward-mode AD which involves computing the derivatives of each operation in the forward direction through the computational graph. The method starts with the input variables and propagates the values and their derivatives through the graph, one operation at a time, until the output variable is reached. At each operation, the derivative of the output variable with respect to each input variable is computed using the chain rule, and these derivatives are stored in a computational graph that can be used to compute the gradient of the function.

\smartpara{Example 1 - Vector Dot Product} Consider the following function:
\[f([x_1, x_2], [y_1, y_2]) = x_1y_1 + x_2y_2\]

\noindent which takes two pairs of input variables $x_1$, $x_2$ and $y_1$, $y_2$ and computes the dot product of the vectors $x = [x_1, x_2]$ and $y = [y_1, y_2]$. To compute the gradient of $f$ with respect to the inputs using forward-mode AD, we start by converting the program into ANF~\cite{DBLP:conf/pldi/FlanaganSDF93}:

\begin{tabular}{r l}
$f([x_1, x_2], [y_1, y_2])=$ 
& \texttt{let} $t_1 = x_1y_1$ \\
& \texttt{let} $t_2 = x_2y_2$ \\
& \texttt{let} $t_3 = t_1 + t_2$ \\
& $t_3$ \\
\end{tabular}

The forward-mode AD lifts every variable to a dual number by associating a tangent variable $v'$ to each input and intermediate variable $v$. Then, each intermediate tangent variable is computed by following the chain rule. In the previous example, the function $f$ is transformed as follows:

\begin{tabular}{r l}
$f'([x_1, x_2], [y_1, y_2], [x_1', x_2'], [y_1', y_2'])=$  
& \texttt{let}  $t_1 = x_1y_1$ \\
& \texttt{let} $t_1' = x_1'y_1 + x_1y_1'$ \\
& \texttt{let} $t_2 = x_2y_2$ \\
& \texttt{let} $t_2' = x_2'y_2 + x_2y_2'$ \\
& \texttt{let} $t_3 = t_1 + t_2$ \\
& \texttt{let} $t_3' = t_1' + t_2'$ \\
& $t_3'$ \\
\end{tabular}

To compute the partial derivative of $f$ with respect to each input we need to set the corresponding tangent variable to $1$ and the other input tangent variables to $0$. For example, the gradient of $f$ with respect to the first vector is computed by the following partial derivative computations:

\begin{tabular}{r l l}
$f'([a_1,a_2],[b_1,b_2],[1,0],[0,0]) \to^*$ & $b_1 =$ & $\frac{\partial \llbracket f\rrbracket}{\partial x_1}(a_1,a_2,b_1,b_2)$ \\
$f'([a_1,a_2],[b_1,b_2],[0,1],[0,0]) \to^*$ & $b_2 =$ & $\frac{\partial \llbracket f\rrbracket}{\partial x_2}(a_1,a_2,b_1,b_2)$ \\
\end{tabular}

\smartpara{Reverse-mode AD}
Forward mode AD is computationally expensive for the derivative computation of scalar-valued functions with tensor inputs, which among other use cases appear in training machine learning models by optimizing an objective function. This is due to the fact that when differentiating a program representing a function $\mathbb{R}^n\to\mathbb{R}$, as is often the case in these contexts, one needs $n$ runs of the program transformed by forward-mode to obtain the whole gradient.
The reverse-mode technique, which computes the gradient of such functions in one run, is then more appropriate, and is massively used in deep learning frameworks~\cite{tensorflow2015-whitepaper,Paszke_PyTorch_An_Imperative_2019,jax2018github}. There has been recent interest on bridging the gap between theoretical correctness guarantees and more practical, efficient implementations with complexity guarantees \cite{DBLP:journals/pacmpl/KrawiecJKEEF22,DBLP:journals/pacmpl/SmedingV23,DBLP:journals/corr/abs-2212-09801}.

\smartpara{Example 1 (cont.)} Consider a generalization of the previous function, where $\cdot$ denotes the dot product of two vectors:
 \[f(V_1, V_2) = V_1 \cdot V_2\]

\noindent If each input vector has $m$ elements, then the cost of forward-mode AD is $O(m^2)$ as it requires $m$ forward passes, each costing $O(m)$. However, reverse-mode AD can compute the gradient by one forward pass to compute the primal values and one reverse pass to compute the gradient values, resulting in an $O(m)$ overall complexity.  

\smartpara{Vector Forward-mode AD}
Inspired by the use cases that require the computation of the full Jacobian matrix~\cite{more2006levenberg}, there have been efforts on batch computations of forward-mode AD~\cite{DBLP:journals/oms/KhanB15,DBLP:journals/pacmpl/ShaikhhaFVJ19}. It has been experimentally demonstrated that by leveraging rewrite rules, one can even recover the asymptotic performance of reverse-mode AD on vectorized forward-mode AD~\cite{DBLP:journals/pacmpl/ShaikhhaFVJ19}. 

\begin{figure}
    \centering
    \includegraphics[width=\linewidth]{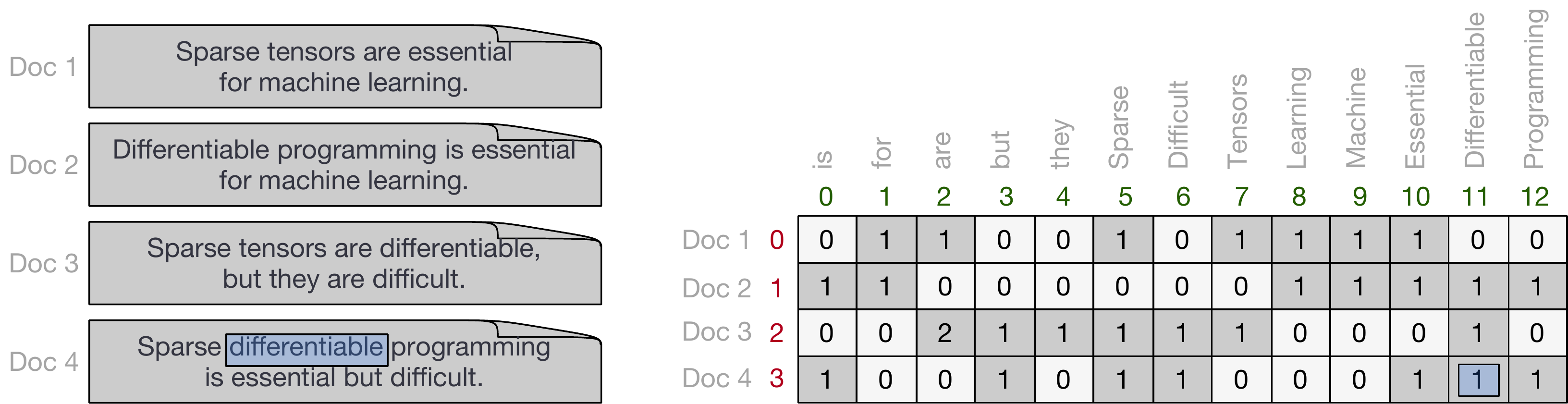}
    \caption{The bag-of-words representation of documents using a matrix with documents as rows and words as columns. Note that the matrix has many zero elements, i.e., it is a sparse matrix.}
    \label{fig:bowexample:1}
\end{figure}

\subsection{Sparse tensors and semi-ring dictionaries}
\label{sub:sparse-tensors-semiring-dico}

\smartpara{Sparse Tensors} Sparse tensors are a type of data structure that is commonly used to represent high-dimensional data that have a majority of zero values. Sparse tensors have a compact representation that only stores the non-zero values and their corresponding indices, which makes them more memory-efficient than dense tensors for large-scale data. Sparse tensors are used in many domains, including natural language processing, computer vision, and scientific computing. For example, in natural language processing, sparse tensors can be used to represent text data as a bag-of-words or term frequency-inverse document frequency (TF-IDF) matrix, where the rows correspond to documents and the columns correspond to words (Figure~\ref{fig:bowexample:1}).

Sparse tensors can be manipulated using a variety of specialized algorithms and data structures, such as compressed sparse row (CSR) and compressed sparse column (CSC) formats, which enable efficient matrix-vector multiplication and other operations. However, the irregular sparsity patterns of sparse tensors pose significant challenges for automatic differentiation.

\smartpara{Example 1 (cont.)}
In the previous example, if the majority of the elements of the input vectors of sizxe $m$ are zeros (the number of non-zero elements, denoted by $nnz$, is such that $nnz<<m$), one can use the CSR representation shown in Figure~\ref{fig:bowexample:2csr}. In this representation, the array !pos! is a compressed representation of rows, whereas !idx! and !val! show the columns and values of non-zero elements. For example, !pos(0)=0,pos(1)=7! depicts that the indices 0 to 6 of !idx!/!val! correspond to the column/value of the elements in the row=0 of the matrix, and 7 to 14 for the positions of the row=1. However, existing linear algebra frameworks do not support gradients over this representation. Thus, rather than computing the gradient in $O(nnz)$, they compute it over the dense representation in $O(m)$.

\begin{figure}
    \centering
    \begin{subfigure}[t]{\linewidth}
    \includegraphics[width=\linewidth]{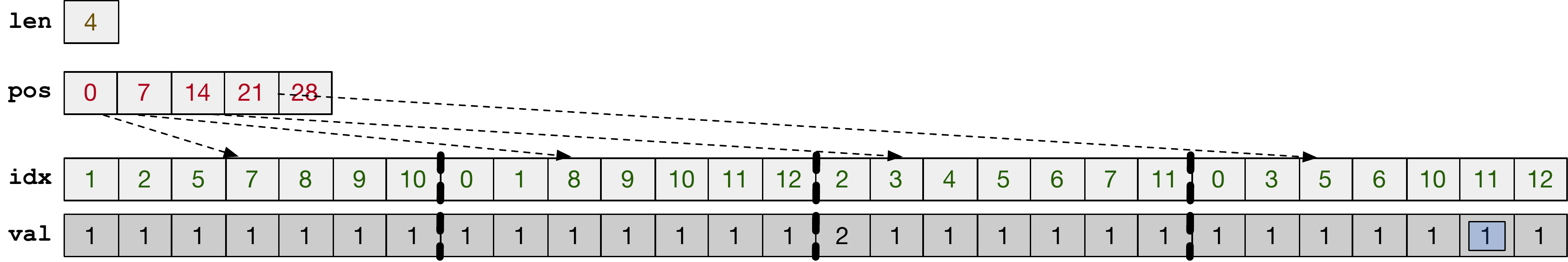}
    \subcaption{Compressed sparse row (CSR) representation.}
    \label{fig:bowexample:2csr}
    \end{subfigure}
    \begin{subfigure}[t]{0.7\linewidth}
    \includegraphics[width=\linewidth]{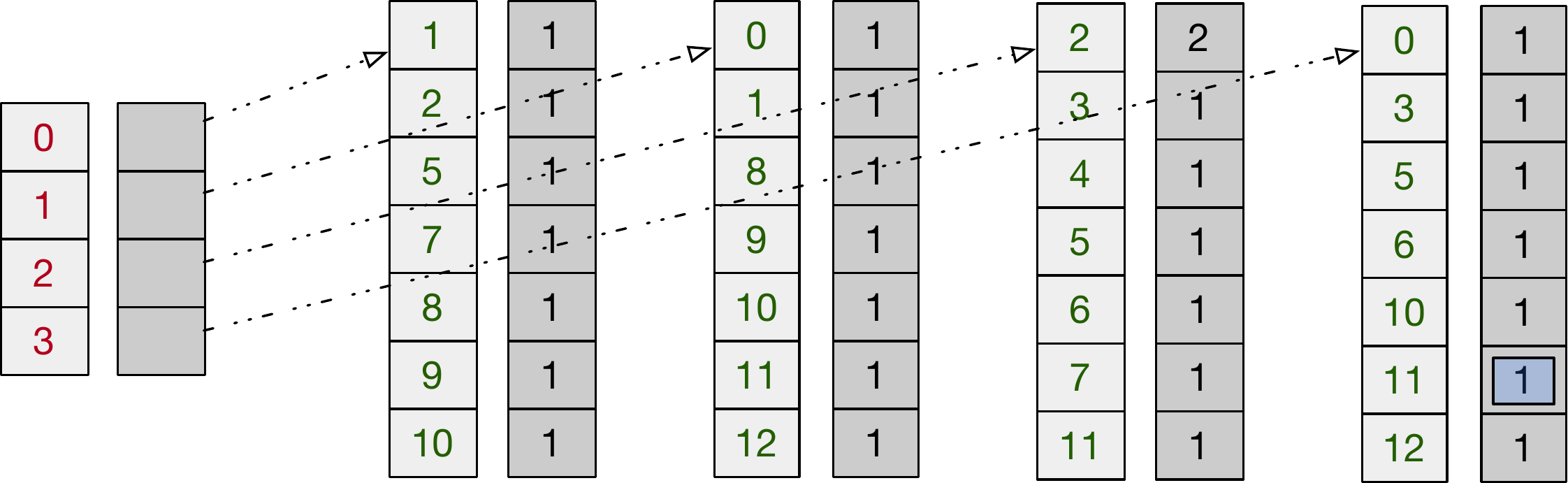}
    \subcaption{The logical semi-ring dictionary representation.}
    \label{fig:bowexample:2sd}
    \end{subfigure}
    \vspace{-0.3cm}
    \caption{Sparse representations of the previous bag-of-words matrix.}
    \label{fig:bowexample:2}
\end{figure}

\smartpara{Semi-ring Dictionaries} Semi-ring dictionaries are data structures that subsume sets, multisets, and dense/sparse tensors~\cite{DBLP:journals/pacmpl/ShaikhhaHSO22}. A semi-ring is a set with two binary operations that satisfy certain axioms, such as associativity, distributivity, and commutativity. For example, the set of non-negative integers with addition and multiplication forms a semi-ring, and the set of Booleans with logical $\vee$ and $\wedge$ forms another semi-ring. Semi-ring dictionaries are designed to represent sparse tensors as key-value pairs. In sparse vectors, the keys correspond to the vector indices and the values correspond to the non-zero elements. In sparse matrices, the keys correspond to the row indices and the values correspond to the sparse vector associated with that row. The semi-ring operations are then defined in terms of the corresponding operations on the values, such as addition or multiplication. The multiplication operator for semi-ring dictionaries has a semantics of tensor outer product, as can be observed in the next example.

\smartpara{Example 2 - Scalar-Vector Product} Consider the scalar-vector product between a scalar value !s! and a vector value !V! represented using a semi-ring dictionary. The equivalent semi-ring dictionary representation is !s * V!, where !*! has the semantics of tensor outer product.

\smartpara{\sdql} SDQL~\cite{DBLP:journals/pacmpl/ShaikhhaHSO22} is a functional language for querying against semi-ring dictionaries. SDQL is expressive enough to capture database queries and linear algebra expressions; this makes it appropriate as an intermediate language for hybrid database and machine learning workloads. SDQL provides the following constructs for manipulating semi-ring dictionaries: 
\begin{enumerate} 
    \item !dict(k)! accesses the value associated with the key !k! in !dict!, and if the key does not exist, it returns the addition identity of semi-ring ($0$ in the case of real and natural numbers).
    \item !{k -> v}! constructs a dictionary with a single key-value pair of !k! and !v!.
    \item !sum(<k,v> in dict) f(k,v)! folds over each key-value pair of !dict! and computes the summation of !f(k,v)! starting from the addition identity of the corresponding semi-ring.
\end{enumerate}

\smartpara{Example 1 (cont.)} The equivalent \sdql expression for $V1 \cdot V2$ can be one of the following two:

\begin{tabular}{l c}
\begin{lstlisting}
sum(<i, a> in V1) a * V2(i)
\end{lstlisting} & \hspace{2cm}
\begin{lstlisting}
sum(<i, a> in V2) V1(i) * a
\end{lstlisting}
\end{tabular}

\noindent
The preferred choice depends on the number of non-zero elements of !V1! and !V2!. If !V1! (resp. !V2!) has fewer non-zero elements, the left (resp. right) variant is more efficient. Otherwise, if both have the same number of non-zero elements (e.g., both are dense), both variants have the same performance.

\smartpara{\lang{}} \lang{}~\cite{schleich2022optimizing} is a dialect of SDQL tailored for sparse tensor processing; it restricts SDQL to the types required for sparse tensors while extending it with constructs required for different sparse storage formats (e.g., CSR, CSC). The following constructs are central to \lang{}:

\begin{enumerate} 
    \item !(st:en)! specifies a dense array holding the range of numbers from !st! to !en! (excluding).
    \item !arr(st:en)! specifies the sub-array of !arr! ranging from !st! to !en! (excluding).
    \item Additional annotations for guiding rewrite rules such~\cite{schleich2022optimizing}.
\end{enumerate}



\section{Languages}
\label{sec:overview}

In this section, we give an overview of the languages used in \system{}.
We divide \lang{} in two. The smaller fragment, Logical \lang{}, on which AD will be performed, and the full fragment, Physical \lang, which augments the logical subset of the language with the constructs for expressing the different sparse storage formats. The grammar and most important typing rules of these languages are shown in Figure~\ref{fig:langs}.

\begin{figure*}[t]
\setlength{\tabcolsep}{0.3em}
\centering
\begin{tabular}{|l c l|l|c|}
\hline
\multicolumn{3}{|c|}{\textbf{Construct}} & \multicolumn{1}{c|}{\textbf{Description}} & \textbf{Language} \\\hline \hline
!e! & \mbox{::=} & !sum(<x,y> in e) e!  & \grammarcomment{Dictionary Aggregation} & \multirow{10}{*}{\specialcellc{Logical \\ \lang}} \\
& $\mid$ & !{ e -> e }! \tab  $\mid$ \tab !{ }! \tab  $\mid$ \tab !e(e)! & \grammarcomment{Singleton/Empty Dictionary, Lookup} &\\
& $\mid$ & !let x = e in e! \tab $\mid$ \tab !x! & \grammarcomment{Variable Binding \& Access} &\\
& $\mid$ & !not e! \tab $\mid$ \tab !if e then e! & \grammarcomment{Negation, Conditional} &\\
& $\mid$ & !e + e! \tab $\mid$ \tab !e * e! & \grammarcomment{Addition, Multiplication} &\\
& $\mid$ & !n! \;\; $\mid$ \;\; !r! \;\; $\mid$ \;\; !false! \;\; $\mid$ \;\; !true! & \grammarcomment{Numeric and Boolean Constants} &\\ 
& $\mid$ & !op(e)! & \grammarcomment{Real unary operation such as $\cos,\exp,\sin$} & \\
& $\mid$ & !e = e! & \grammarcomment{Equality of discrete types: bool and int} & \\ 
\cline{1-4}
!T! & \mbox{::=} & !I! \;\; $\mid$ \;\; !bool! \;\; $\mid$ \;\; !D! &  \grammarcomment{Scalar and Tensor Types} &\\
!D! & \mbox{::=}  & !real! \;\; $\mid$ \;\; !{ I -> D }!  & \grammarcomment{Dictionary (Tensor) Type} &\\
!I! & \mbox{::=} & !int! & \grammarcomment{Index Types (Only Integer)} &\\
\hline 
\hline
!e! & \mbox{::=} & !...!  & \grammarcomment{Source \lang Expressions} & \multirow{5}{*}{\specialcellc{Physical \\ \lang}} \\
& $\mid$ & !e:e! \tab $\mid$ \tab !e(e:e)! & \grammarcomment{Range and Sub-Array} &\\
& $\mid$ & \textit{cf.~\cite{schleich2022optimizing}} & \grammarcomment{Other Constructs} &\\
\cline{1-4}
!T! & \mbox{::=} & !...! &  \grammarcomment{Source \lang  Types} &\\
!D! & \mbox{::=} & !...! \;\; $\mid$ \;\; !int! &  \grammarcomment{Tensor Types with Integer Value} &\\ 
!I! & \mbox{::=} & !...! \;\; $\mid$ \;\; !dense_int! &  \grammarcomment{Index Types (+ Dense Integer)} &\\ \hline
\hline 
\multicolumn{3}{|l}{Definition of \codemacro{zero}\code{[D]}:} & \multicolumn{2}{|l|}{Definition of \outertype{}:}\\
\multicolumn{3}{|l}{\zeromacro{\codekw{real}} $\defines$ \code{0}} &
\multicolumn{2}{|l|}{\codekw{real} \outertype{} \code{D} $\defines$ \code{D}} \\ 
\multicolumn{3}{|l}{\zeromacro{\code{\{I -> D\}}} $\defines$ \code{\{ \}}} &
\multicolumn{2}{|l|}{\code{\{I -> D1\}} \outertype{} \code{D2} $\defines$ \code{\{I -> D1 \outertype{} D2\}}} \\  \hline
\multicolumn{3}{|l}{Definition of \codekw{let}-tupling:} & \multicolumn{2}{|l|}{Definition of \codemacro{tensor} \code{n}:} \\
\multicolumn{3}{|l}{
\codekw{let} \code{<v1, ..., vn> = <e1, ..., en>} \codekw{in} \code{e}
}  & \multicolumn{2}{|l|}{\codemacro{tensor} \code{0} $\defines$ \codekw{real}}\\
\multicolumn{3}{|l}{ $\defines$
\codekw{let} \code{v1=e1} \codekw{in} \code{...}
\codekw{let} \code{vn=en} \codekw{in} \code{e}}  & \multicolumn{2}{|l|}{\codemacro{tensor} \code{n} $\defines$ \code{\{} \codekw{int} \code{->} \codemacro{tensor} \code{(n-1) \}} 
}\\ \hline

\end{tabular}

\begin{tabular}{|c|}
\hline
\begin{tabular}{c}
$\Gamma \vdash$ !e1!: !tensor n+1! $\quad$ $\Gamma$, !k!: !int!, !v!: !tensor n! $\vdash$ !e2!: !D!\\\hline
$\Gamma \vdash$ !sum(<k,v> in e1) e2!: !D!
\end{tabular}
\hspace{0.48cm}
\begin{tabular}{c}
$\Gamma \vdash$ !e1!: !tensor n+1! $\quad$ $\Gamma \vdash$ !e2!: !int!\\\hline
$\Gamma \vdash$ !e1(e2)!: !tensor n!
\end{tabular} \\
\begin{tabular}{c}
$\Gamma \vdash$ !k!: !int! $\,\,$ $\Gamma \vdash$ !v!: !tensor n!\\\hline
$\Gamma \vdash$ !{ k -> v }!: !tensor (n+1)!
\end{tabular}\hspace{0.5cm}
\begin{tabular}{c}
$\Gamma \vdash$  !e1!: !D1! $\quad$ $\Gamma \vdash$ !e2!: !D2! \\\hline
$\Gamma \vdash$  !e1 * e2!: !D1! \outertype{} !D2!
\end{tabular}
\\ \hline
\end{tabular}

\caption{Grammar and a subset of typing rules of the languages used in \system{}.}
\label{fig:langs}
\end{figure*}

\smartpara{Logical \lang{}} Initially, the program is expressed in a subset of \lang{} that is sufficient for expressing tensor programs at the logical level, i.e., without worrying about the storage format. 
Thus, at this stage, we do not require the support for dense arrays. Furthermore, there is no need for expressing tuples. Nevertheless, for convenience, we use tupled let-binding as a syntactic sugar for multiple let bindings.

Logical \lang{} is expressive enough to capture Einstein summations~\cite{DBLP:journals/pacmpl/ShaikhhaHSO22}. For example, the matrix-matrix product for two matrices !M1! and !M2! (both represented as nested dictionaries) is expressed as:

\begin{lstlisting}
sum(<i,row> in M1) { i ->
  sum(<j, v1> in row)
    sum(<k, v2> in M2(j)) { k ->
      v1 * v2 } }
\end{lstlisting}

\noindent However, the expressiveness goes beyond Einstein summations. As an example, !map! of function !f! over the values of a tensor !e! is expressed as !sum(<k,v> in e) { k -> f(v) }!. 


\smartpara{Physical \lang{}} The storage specifications require dense-array-related constructs such as range (!st:end!) and subarray (!arr(st:end)!). Thus, after combining the differentiated program with the storage specification, we need to include these constructs for the Physical \lang{}. We go back to this intermediate language in Section~\ref{sec:opt}. The next section focuses primarily on Logical \lang{} and the differentiation rules over its constructs.


\section{Differentiation}
\label{sec:diff}
In this section, we present the differentiation transformations applied to Logical \lang{} expressions. First, for exposition purposes, we present a variant of traditional forward-mode AD (FAD).
Then, we show a tensorized FAD that not only subsumes the traditional FAD, but also computes gradients more efficiently.
Finally, we show the high-level API exposed to the programmers.

\subsection{Scalar Forward-Mode Transformation}
Traditional FAD uses dual numbers to compute the tangent (derivative) component along with the actual (original) computation. We refer to it as scalar FAD because for each scalar expression, it stores a scalar tangent component.

Similar to other FAD frameworks, \system{} precedes the differentiation transformation with an ANF conversion~\cite{DBLP:conf/pldi/FlanaganSDF93}.
This allows for sharing sub-expressions and avoids duplication of computations for non-unary constructs such as multiplication.
Logical \lang{} does not allow function definitions nor higher-order functions; all functions need to be inlined~\cite{paszke2021getting}.

\smartpara{Scalar Constructs} Figure~\ref{fig:forward_diffrules} shows the forward-mode transformation rules.
As opposed to existing functional AD systems, \system{} does not use explicit pair construction and projection for dealing with dual numbers.
Instead, the \fadttang{} construct only computes the tangent part of differentiation and refers to the expressions in the ANF transformed program for primal components (cf. the rule for let binding).
This avoids the need to extend the target language of differentiation with pairing constructs. Furthermore, this makes the differentiation rules simpler. For every unary real operation !op!, we assume that the language has a unary real operation !op'! representing its derivative.
Finally, the differentiation for all discrete types (!int! and !bool!) is !0!.

\begin{figure}
\begin{tabular}{|lcl|}
\hline
 & & \vspace{-0.2cm} \\
\multicolumn{3}{|c|}{
\begin{tabular}{c|c}
\begin{tabular}{l c l}
\multicolumn{3}{c}{\framebox[1.2\width]{\fadttype{\code{T}}} Forward mode on Types} \\
 & & \vspace{-0.3cm} \\
   \fadttype{\code{ D }}  & = &
   \code{ D } \\ 
   \fadttype{\codekw{ bool }}  & = &
   \codekw{ real } \\ 
   \fadttype{\codekw{ int }}  & = &
   \codekw{ real } \\ 
\end{tabular} &
\begin{tabular}{l c l}
\multicolumn{3}{c}{\framebox[1.2\width]{\fadtctx{\Gamma}} Forward mode on Context} \\
 & & \vspace{-0.3cm} \\
   \fadtctx{\emptyset}  & = &
   $\emptyset$ \\ 
   \fadtctx{\Gamma, \code{x:T}}  & = &
   \fadtctx{\Gamma}$, \code{x:T}, \code{x':}$\fadttype{\code{T}} \\ 
\end{tabular}
\end{tabular}
} \\
& & \vspace{-0.2cm}\\
\multicolumn{3}{|c|}{\framebox[1.2\width]{\fadttang{\code{e}}} Forward mode on Expressions} \\
& & \vspace{-0.2cm}\\
\multicolumn{3}{|c|}{Invariant: If $\Gamma \vdash$ \code{e: T}, then \fadtctx{\Gamma} $\vdash$ \fadttang{\code{e}}\code{:} \fadttype{\code{T}}} \\
& & \vspace{-0.2cm}\\
   \fadttang{\code{ \codekw{sum}(<k,v> \codekw{in} e1) e2 }}   & = & \code{\codekw{sum}(<k,v> \codekw{in} \fvadtprime{\code{e1}})} \\
   & & \code{ \codekw{let} <k', v'> = <0, \fadttang{\code{e1(k)}}> \codekw{in} } \\
   & & \code{ \fadttang{\code{e2}}} \\
\fadttang{\code{ \codekw{let} x = e1 \codekw{in} e2 }}  & = &
   \code{\codekw{let} <x, x'> = <e1, \fadttang{\code{e1}}> \codekw{in}}
   \\ 
   & & \code{\fadttang{\code{e2}}
   } \\
   \fadttang{\code{ \codekw{if} e1 \codekw{then} e2 }}  & = &
   \code{\codekw{if} \fvadtprime{\code{e1}} \codekw{then} \fadttang{\code{e2}}
   } \\
   \fadttang{\code{ e1(e2) }}  & = &
   \fadttang{\code{e1}}\code{(}\fvadtprime{\code{e2}}\code{)} \\
   \fadttang{\code{ \{ e1 -> e2 \} }}  & = &
   \code{\{ \fvadtprime{\code{e1}} -> \fadttang{\code{e2}} \}} \\ 
   \fadttang{\code{ e1 * e2 }}  & = &
   \code{\fvadtprime{\code{e1}} * \fadttang{\code{e2}} +
   \fadttang{\code{e1}} * \fvadtprime{\code{e2}}
   } \\
   \fadttang{\code{ e1 + e2 }}  & = &
   \code{\fadttang{\code{e1}} + \fadttang{\code{e2}}
   } \\
   \fadttang{\code{ op(e) }}  & = &
   \code{op'(e) * }\fadttang{\code{e}} \\
   \fadttang{\code{ x }}  & = &
   \code{x'} \\
   \fadttang{\code{ r }} & = &
   \code{0} \\
   \fadttang{\code{ n }} = \fadttang{\codekw{false}} = \fadttang{\codekw{true}} & = &
   \code{0} \\
\hline
\end{tabular}
\caption{Forward-mode AD transformation rules for \lang expressions. In order to maximize sharing, all sub-expressions need to be let-bound, i.e., ANF transformation should be applied before differentiation. 
} 
\label{fig:forward_diffrules}
\end{figure}

\smartpara{Example 1 (cont.)} Consider again the case of the dot-product of two unrolled vectors of size two initially used in Section~\ref{sec:bg}. Applying differentiation over the ANF transformed program in \lang{} is as follows:

\begin{tabular}{l l l}
$\mathcal{F}\Biggl\llbracket$
&
\begin{lstlisting}
let t1 = x1*y1 in
let t2 = x2*y2 in
let t3 = t1+t2 in
t3
\end{lstlisting} &
$\Biggl\rrbracket$
\end{tabular}

\noindent After applying the rules in Figure~\ref{fig:forward_diffrules}, we obtain the following program:

\begin{lstlisting}
let <t1, t1'> = <x1*y1, x1*y1' + x1'*y1> in
let <t2, t2'> = <x2*y2, x2*y2' + x2'*y2> in
let <t3, t3'> = <t1+t2, t1'+t2'> in
t3'
\end{lstlisting}

\noindent Note that the let-binding constructs are syntactic sugar; there is no pair created (cf. Figure~\ref{fig:langs}).

\smartpara{Tensor Constructs} By choosing not to incorporate pairs in the language, we have eliminated the option of differentiating vectors as vectors of pairs (i.e., arrays of structs). Instead, an expression of type !tensor n! is differentiated as an expression with the same type. One of the interesting tensor-based differentiation rules is our rule for summation, where we need to access the corresponding element from the differentiated range, as we can observe in the following example.

\smartpara{Example 1 (cont.)} Let us go back to the dot-product for two vectors !V1! and !V2!. The differentiation transformation is expressed as follows:

\begin{tabular}{l l l}
\forwardbegin{} &
\begin{lstlisting}
sum(<i,a> in V2) V1(i) * a
\end{lstlisting} &
\forwardend{}
\end{tabular}

\noindent Applying differentiation rules results in the following program:

\begin{lstlisting}
sum(<i,a> in V2) 
  let <i',a'> = <0,V2'(i)>
  V1(i) * a' + V1'(i) * a
\end{lstlisting}

In order to compute the gradient of this function with respect to one of its vector inputs, say !V1!, we need to repeatedly set !V1'! into a one-hot vector that is !1! at index !j! and !0! everywhere else, and set !V2'! to be the zero vector.
This requires multiple rounds of running the forward-mode AD for different one-hot vectors, which is computationally expensive. Previous research~\cite{DBLP:journals/pacmpl/ShaikhhaFVJ19} has shown how this can be optimized by wrapping the vector construction around the forward-mode AD and applying loop optimizations. 

\smartpara{Example 2 (cont.)} Consider the case of scalar-vector product, represented as !s * V! in \lang{}. 
Applying the scalar FAD rules on this program results in:

\begin{lstlisting}
s * V' + s' * V
\end{lstlisting}

\noindent In the case of differentiation with respect to !V!, similar to the dot-product example, one needs to repeatedly pass all different one-hot vectors as !V'!.
However, the differentiation with respect to !s! can be done by setting !s'! to 1 and !V'! to !zero[tensor 1]!.

Next, we show an alternative differentiation transformation that enables native tensorized forward-mode AD.


\subsection{Tensorized Forward-Mode Transformation}
\label{sub:tensorized-fad}

As tensors are first-class citizens in \lang{}, one can directly express the differentiation with respect to tensor variables of type $\tau$, represented by \tengrad{}. This means that the derivative of an expression of type !tensor n! with respect to a tensor of type !tensor m!, will be a !tensor (n + m)!, which is the same as the outer tensor product type (cf. Figure~\ref{fig:diffrules}, the rules for types).

Figure~\ref{fig:diffrules} shows the rules for tensorized FAD.
They generalize the rules for scalar forward-mode AD, which one recovers by setting $\tau$ to be !real!. The key differences are in the rules for constant reals and multiplication. Rather than just returning a real-valued !0!, tensorized FAD returns the zero value of type $\tau$ represented as !zero[$\tau$]!.
For multiplication, if $\tau$ is a tensor type with a non-zero order, the first term still computes the multiplication of !e1! and the differentiation of !e2!. However, the second term requires re-arranging the indices of the tensors. This complication can be avoided by only allowing for the multiplication of real numbers in the input program. This is achieved by applying multiplication normalization (cf. Section~\ref{sec:normalization}).

\begin{figure}
\begin{tabular}{|lcl|}
\hline
 & & \vspace{-0.2cm} \\
\multicolumn{3}{|c|}{
\begin{tabular}{c|c}
\begin{tabular}{l c l}
\multicolumn{3}{c}{\framebox[1.2\width]{\fvadttype{\code{T}}} Tensorized FAD on Types} \\
 & & \vspace{-0.3cm} \\
   \fvadttype{\code{ D }}  & = &
   $\code{D}~~~\outertype~~~\tau$ \\ 
   \fvadttype{\codekw{ bool }}  & = &
   \codekw{real } \\ 
   \fvadttype{\codekw{ int }}  & = &
   \codekw{real } \\ 
\end{tabular} &
\begin{tabular}{l c l}
\multicolumn{3}{c}{\framebox[1.2\width]{\fvadtctx{\Gamma}} Tensorized FAD on Context} \\
 & & \vspace{-0.3cm} \\
   \fvadtctx{\emptyset}  & = &
   $\emptyset$ \\ 
   \fvadtctx{\Gamma, \code{x:T}}  & = &
   \fvadtctx{\Gamma}$, \code{x:T}, \code{x':}$\fvadttype{\code{T}} \\ 
\end{tabular}
\end{tabular}
} \\
& & \vspace{-0.2cm}\\
\multicolumn{3}{|c|}{\framebox[1.2\width]{\fvadttang{\code{e}}} Tensorized FAD on Expressions} \\
& & \vspace{-0.2cm}\\
\multicolumn{3}{|c|}{Invariant: If $\Gamma \vdash$ \code{e: T}, then \fvadtctx{\Gamma} $\vdash$ \fvadttang{\code{e}}\code{:} \fvadttype{\code{T}}} \\
& & \vspace{-0.2cm}\\
   \fvadttang{\code{ \codekw{sum}(<k,v> \codekw{in} e1) e2 }}   & = & \code{\codekw{sum}(<k,v> \codekw{in} \fvadtprime{\code{e1}})} \\
   & & \code{ \codekw{let} <k', v'> = <0, \fvadttang{\code{e1(k)}}> \codekw{in} } \\
   & & \code{ \fvadttang{\code{e2}}} \\
\fvadttang{\code{ \codekw{let} x = e1 \codekw{in} e2 }}  & = &
   \code{\codekw{let} <x, x'> = <e1, \fvadttang{\code{e1}}> \codekw{in}}
   \\ 
   & & \code{\fvadttang{\code{e2}}
   } \\
   \fvadttang{\code{ \codekw{if} e1 \codekw{then} e2 }}  & = &
   \code{\codekw{if} \fvadtprime{\code{e1}} \codekw{then} \fvadttang{\code{e2}}
   } \\
   \fvadttang{\code{ e1(e2) }}  & = &
   \fvadttang{\code{e1}}\code{(}\fvadtprime{\code{e2}}\code{)} \\
   \fvadttang{\code{ \{ e1 -> e2 \} }}  & = &
   \code{\{ \fvadtprime{\code{e1}} -> \fvadttang{\code{e2}} \}} \\ 
   \fvadttang{\code{ e1 * e2 }}  & = &
   \code{\fvadtprime{\code{e1}} * \fvadttang{\code{e2}} +
   \fvadttang{\code{e1}} \tmult{$\tau$} \fvadtprime{\code{e2}}
   } \\
   \fvadttang{\code{ e1 + e2 }}  & = &
   \code{\fvadttang{\code{e1}} + \fvadttang{\code{e2}}
   } \\
   \fvadttang{\code{ op(e) }}  & = &
   \code{op'(e) * }\fvadttang{\code{e}} \\
   \fvadttang{\code{ x }}  & = &
   \code{x'} \\
   \fvadttang{\code{ r }}  & = &
   \zeromacro{$\tau$} \\
   \fvadttang{\code{ n }} = \fvadttang{\codekw{false}} = \fvadttang{\codekw{true}} & = &
   \code{0} \\ && \\
\code{e1\tmult{\codekw{real}} e2} & $\defines$ & !e1 * e2! \\
\code{e1\tmult{\codemacro{tensor} n} e2} & $\defines$ & !sum(<i_1,r_2> in e1) ...! \\
!e1: tensor (m + n)!& & !  sum(<i_m,v> in r_m)) !\\
& & !    {i_1 -> ... { i_m -> 1 } ... } * e2 * v! \\

\hline
\end{tabular}
\caption{Tensorized Forward-mode Automatic Differentiation (FAD) rules for \lang expressions. The type $\tau$ needs to be a tensor type, i.e., it follows the grammar of \code{D}. The rules of Figure~\ref{fig:forward_diffrules} are the special case of \fadttang{}=\fvadtgen{\codekw{real}}{}. }
\label{fig:diffrules}
\end{figure}

\smartpara{Example 1 (cont.)} In our running example, tensorized FAD transformation is represented as
\tengradbegin{}!sum(<i,a> in V2) V1(i) * a!\tengradend{}. 

By applying the rules in Figure~\ref{fig:diffrules} we have:

\begin{lstlisting}
sum(<i,a> in V2) 
  let <i',a'> = <0,V2'(i)>
  V1(i) * a' + V1'(i) * a
\end{lstlisting}

\noindent As one can see, this program looks identical to the version generated by scalar FAD. The difference is in the type of !V1'! and !V2'!. In scalar FAD, their type is the same as !V1! and !V2!, i.e., !tensor 1!. 
In tensorized FAD, their type is !tensor 2!. Thus, the multiplications in the last expression are now scalar-vector multiplications.

As opposed to scalar FAD, we need to assemble the zero and one-hot vectors of all iterations together; instead of passing them vector by vector, we pass them as an entire matrix in which each row represents one of the one-hot vectors. The definition of !onehot[tensor 1]! in Figure~\ref{fig:api} specifies how one can build such a matrix for variable !x!.

\smartpara{Example 3} Consider the case of computing the trace of the matrix !M!. The tensorized FAD over this expression in \lang{} is represented as 
\tengradbegin{} !sum(<i,r> in M) r(i)!\tengradend{}.


\noindent For each row !r! of matrix !M! at index !i!, we compute the summation of diagonal elements specified by !r(i)!. After our tensorized FAD transformation, we obtain the following program:

\begin{lstlisting}
sum(<i,r> in M) 
  let <i',r'> = <0,M'(i)>
  r'(i)
\end{lstlisting}

\noindent To compute the one-hot input for a matrix, we need to generalize the case of a vector. In the case of differentiating with respect to a vector (!tensor 1!), we were passing a matrix (!tensor 2!) for the one-hot inputs. Here, in the case of differentiating w.r.t. a  matrix (!tensor 2!), we need to pass an order-4 tensor (!tensor 4!). The definition of !onehot[tensor 2]! can also be found in Figure~\ref{fig:api}.


\begin{figure}[t]
\begin{tabular}{|r c l|}
\hline
!gradient e (x: $\tau$)! & $\defines$ & !let <v1',..., vn'> = <ingrad v1 x,...,ingrad vn x> in! \\
& & \tengrad{\code{e}} \\
\textbf{where} & & $\{$!v1, ..., vn!$\} = FVS($!e!$)$\\
!ingrad (v: D) x! & $\defines$ & !zero[D]! \\
!ingrad (x: D) x! & $\defines$ & !onehot[D] x! \\
!onehot[real] x! & $\defines$ & !1! \\
!onehot[tensor 1] x! & $\defines$ & !sum(<i,_> in x) { i -> {i -> 1} } ! \\
!onehot[tensor 2] x! & $\defines$ & !sum(<i,v> in x) { i ->! \\
& & !  sum(<j,_> in v) {j ->! \\
& & !    {i -> {j -> 1} } } } ! \\
!onehot[tensor n] x! & $\defines$ & !sum(<i_1,v_2> in x) { i_1 -> ! \\
if !n > 2!& & !  sum(<i_2,v_3> in v_2) { i_2 ->! \\
& & !   ... sum(<i_n,_> in v_n)) { i_n ->!\\
& & !         {i_1 -> ... { i_n -> 1 } ... }! \\ \hline
\end{tabular}
\caption{The \codemacro{gradient} API exposed by \system{} to the programmer.}
\label{fig:api}
\end{figure}

\subsection{Putting it All Together}
Programmers who use machine learning frameworks need not concern themselves with the source-to-source transformations employed in the background. To accomplish this, \system{} provides a high-level API through the !gradient! macro, which accepts two inputs. The first input is the \lang{} expression to be differentiated, while the second input is the free variable with respect to which we perform the differentiation.

Consider the case of computing the gradient of the expression !e! with respect to !x!. This is represented as !gradient e x!.
The !gradient! macro generates let-bindings for the differentiation components of all free variables of the expression !e! ($FVS(\code{e})$). 
The RHS of let-binding for a free variable !v! is !ingrad v x!. The !ingrad! macro is responsible for computing the input zeros or one-hots, depending on whether variable !v! is different than !x! or is the same.
In the former case, the !zero! macro is used by passing the type of !v!. In the latter case, the !onehot! macro is used. For an input variable of type !tensor n!, the !onehot! assembles a sparse !tensor (2*n)! where the !n! indices of the input variable !x! are repeated twice so that the diagonals of the hypercube are set to !1!.

\smartpara{Example 1 (cont.)} Our running example is specified by the high-level API as follows:

\begin{lstlisting}
gradient (sum(<i, a> in V1) a * V2(i)) V2
\end{lstlisting}

\noindent This is expanded to the following expression:

\begin{lstlisting}
let V1' = zero[{int -> real}]
let V2' = onehot[{int -> real}] V2
sum(<i, a> in V1)
  let <i', a'> = <{}, V1'(i)> in
  a * V2'(i) + a' * V2(i)
\end{lstlisting}

\noindent Which is, in turn, expanded to:

\begin{lstlisting}
let V1' = {}
let V2' = sum(<i,_> in V2) {i -> {i -> 1}}
sum(<i, a> in V1)
  let <i', a'> = <{}, V1'(i)> in
  a * V2'(i) + a' * V2(i)
\end{lstlisting}

\noindent
In the next section, we see how this expression will be optimized. Note that by simple constant propagation, the asymptotic complexity will already be $O(n)$ on this example. In fact, we will only perform general optimizations which are not AD-specific.



\section{Performance}
\label{sec:opt}
In this section, we review the techniques we employ to improve performance. While Figure~\ref{fig:archhigh} showed the simplified  overview of our approach, Figure~\ref{fig:arch} presents in more detailed the series of transformations applied in \system{}. 

\begin{figure}
    \centering
    \includegraphics[width=0.9\linewidth]{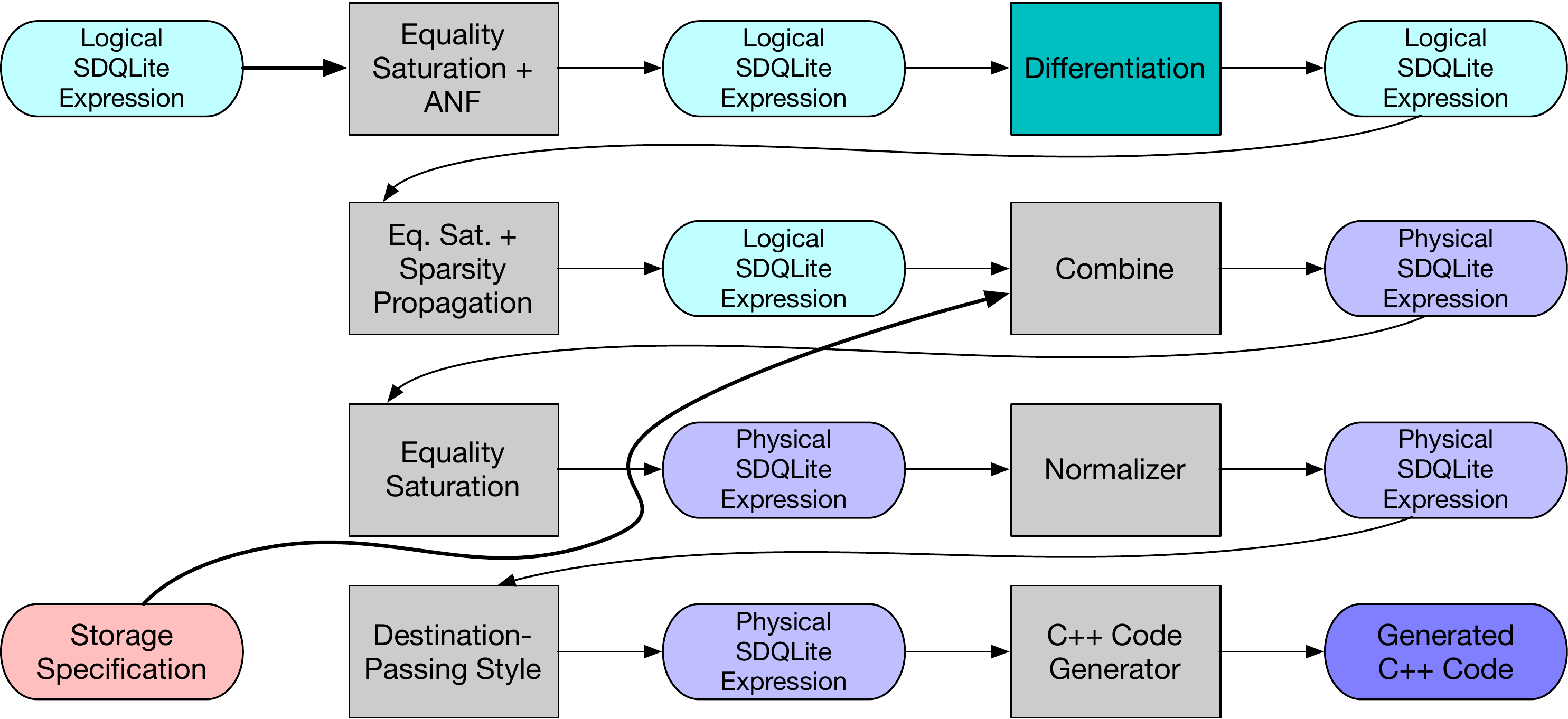}
    \caption{The transformations in \system{}.}
    \label{fig:arch}
\end{figure}

\subsection{Pre-differentiation}
\smartpara{Equality Saturation}
First,  \system{} applies algebraic rewrite rules over the input program to leverage optimizations such as factorization, loop fusion, etc~\cite{schleich2022optimizing}. We use equality saturation in order not to worry about phase ordering problems and making sure that the rewrite rules are applied globally~\cite{tate2009equality}. 
We use EGG~\cite{DBLP:journals/pacmpl/WillseyNWFTP21}, a state-of-the-art implementation of equality saturation that has been successfully used for \lang{} in the context of flexible storage specification for tensor programs~\cite{schleich2022optimizing}. We rely on the cost models and the algebraic rewrite rules specified in~\cite{schleich2022optimizing}; there is no need to specify any AD-specific cost model or rewrite rule.

\smartpara{ANF}
Then, \system{} applies an ANF transformation~\cite{DBLP:conf/pldi/FlanaganSDF93}, which ensures that sub-expressions are simple expressions, i.e. constant values or variable references. This is achieved by using a let-binding for the sub-expressions if they are not already simple expressions. The ANF transformed program is then fed into the differentiation transformation presented in Section~\ref{sec:diff}.

\begin{figure}[t]
\begin{tabular}{|r c l|l|}
\hline
\multicolumn{3}{|c|}{Rewrite Rule} & Condition \\
\hline
!zero[D1] * e2! & $\leadsto$ & !zero[D1! \outertype{} !D2]! & if !e2: D2!\\ \hline
!e1 * zero[D2]! & $\leadsto$ & !zero[D1! \outertype{} !D2]! & if !e1: D1!\\ \hline
!e1 + zero[D1]! & $\leadsto$ & !e1! & must have !e1: D1!\\ \hline
!zero[D1] + e2! & $\leadsto$ & !e2! & must have !e2: D1!\\ \hline
!zero[tensor n](e2)! & $\leadsto$ & !zero[tensor (n-1)]! & must have !n > 0!\\ \hline
!let x=zero[D] in e2! & $\leadsto$ & !e2[x$\rightarrow$zero[D]]! & \\ \hline
!sum(<k,v> in zero[D])e2!& $\leadsto$ & !zero[D2]!& if !e2:D2! \\ 
\hline \hline
!e1 * e2! & $\leadsto$ & !sum(<i_1,v_2> in e1) ... ! & if !e1: tensor n!\\
& & !  sum(<i_n,v> in v_n)) ! & and !n > 0! \\
& & !    {i_1->...{i_n-> v*e1}...}! & \\ \hline
!e1 * e2! & $\leadsto$ & !sum(<i_1,v_2> in e2) ... ! & if !e1: real!\\
& & !  sum(<i_n,v> in v_n)) ! & and !e2: tensor n! \\
& & !    {i_1->...{i_n-> e1*v}...}! & and !n > 0! \\ \hline 
\end{tabular}
\caption{Transformations used in \system{}. The first group of rewrite rules are related to sparsity propagation. The second group correspond to multiplication normalization.}
\label{fig:trans}
\end{figure}

\subsection{Post-Differentiation}
\smartpara{Sparsity Propagation}
After differentiation, many intermediate highly sparse values (e.g., zero tensors) are constructed. Even though expressing local rewrite rules for simplifying them is possible, these programs are very large and optimizing them with equality saturation requires a large search space. Thus, we applied these sparsity propagation rules as a separate pass. The first group of rewrite rules in Figure~\ref{fig:trans} correspond to the sparsity propagation rules. Finally, we again pass the program to equality saturation.

\smartpara{Example 1 (cont.)} In the dot-product example, by assuming the second variant (iteration over the second vector), the differentiated program with respect to the first vector is as follows:

\begin{lstlisting}
let V1' = sum(<i, a> in V1) {i -> {i -> 1}} 
let V2' = {}
sum(<i, a> in V2)
  let <i', a'> = <{}, V2'(i)> in
  V1(i) * a' + V1'(i) * a
\end{lstlisting}

\noindent The sparsity propagation optimization propagates !V2'! and simplifies the relevant expressions (e.g., !V1(i) * a'!) as follows:

\begin{lstlisting}
let V1' = sum(<i, a> in V1) {i -> {i -> 1}}
sum(<i, a> in V2)
  V1'(i) * a
\end{lstlisting}

\noindent After applying equality saturation, if !V1! is a dense vector, the program is optimized and reduces to simply !V2!, a slightly less optimized version of which is:

\begin{lstlisting}
sum(<i, a> in V2)
  { i -> a }
\end{lstlisting}





\begin{figure}
    \centering
    \begin{tabular}{|l|l|} \hline
    \specialcell{\textbf{Storage Format}\\ \textbf{(Inputs)}} & \textbf{Physical \lang{} Definition} \\ \hline
        \specialcell{Vector COO\\\code{(len,row,val)}} & 
        \begin{lstlisting}
sum(<_, i> in (0:len)) 
  { unique(row(i)) -> val(i) }
        \end{lstlisting}
        \\ \hline
        \specialcell{Vector Dense\\\code{(len,V)}} &         \begin{lstlisting}
sum(<_, i> in (0:len)) 
  { unique(i) -> V(i) }
        \end{lstlisting} \\ \hline
        \specialcell{Matrix CSR\\\code{(len,pos,idx,val)}} & \begin{lstlisting}
sum(<_, i> in (0:len)) 
  { unique(i) -> 
      sum(<p, j> in idx(pos(i):pos(i+1))) 
        { unique(j) -> val(p) }  }
        \end{lstlisting} \\ \hline
        \specialcell{Matrix Dense Row Major \\\code{(rows,cols,M)}}  & 
\begin{lstlisting}
sum(<_, i> in (0:rows)) 
  { unique(i) -> 
      sum(<_, j> in (0:cols)) 
        { unique(j) -> M(i*cols+j) }  }
\end{lstlisting}
        \\ \hline
    \end{tabular}
    \caption{The functional implementation of various sparse/dense formats in Physical \lang{}. Matrix CSC and Dense Column Major are expressed similarly to Matrix CSR and Dense Row Major, respectively.}
    \label{fig:storage_spec}
\end{figure}

\subsection{Storage Composition} 
\label{sub:storage-comp}

After optimizing the differentiated expression, \system{} uses the storage specifications as the definition for each input tensor. 
Figure~\ref{fig:storage_spec} shows the specification of several storage formats in \lang{}. 
\system{} generates let-bindings that bind every input tensor to the expression specifying its physical storage format.
However, this makes the performance even worse, due to unnecessary intermediate tensor allocations. Luckily, previous research~\cite{schleich2022optimizing} showed how equality saturation can not only remove these intermediate tensors but also recover well-known algorithms for sparse tensor computations.

\smartpara{Example 1 (cont.)} In the previous example, if we assume that !V1! is a dense vector and !V2! is a sparse vector using the COO representation, combining the storage specification with the differentiated program results in:

\begin{lstlisting}
let V1 = sum(<_, i> in (0:V1_len)) { unique(i) -> V1_V(i) } in
let V2 = sum(<_, i> in (0:V2_len)) { unique(V2_row(i)) -> V2_val(i) } in
sum(<i, a> in V2)
  { i -> a }
\end{lstlisting}

\noindent By applying equality saturation, \system{} returns the following program that does not introduce unnecessary intermediate tensors, as expected from \cite{schleich2022optimizing}:

\begin{lstlisting}
sum(<_, i> in (0:V2_len))
  { V2_row(i) -> V2_val(i) }
\end{lstlisting}

\subsection{Normalization}
\label{sec:normalization}
\system{} performs additional lower-level transformations on the optimized storage-format-aware program. First, multiplication normalization rewrites tensor outer products into summation expressions with scalar multiplications. The rewrite rules for multiplication normalization are shown in the second group of rules of Figure~\ref{fig:trans}. Once combined with loop fusion rules, multiplication normalization removes unnecessary intermediate tensors. The second normalization involves ANF transformations which have been previously cancelled by equality saturation and other transformations. 

\smartpara{Example 5 - BATAX} The BATAX kernel~\cite{DBLP:journals/toms/NelsonBSJN15} is represented as $\beta A^TAx$ where $\beta$ is a scalar value, $A$ a matrix, and $x$ a vector. This kernel is expressed in \lang as follows:

\begin{lstlisting}
sum(<i, r> in A)
  sum(<j, v1> in r)
    sum(<k, v2> in r)
      { j -> ((beta * v1) * v2) * (x(k)) }
\end{lstlisting}

\noindent After differentiating with respect to !x!, post-differentiation optimizations, \system produces:

\begin{lstlisting}
beta * (sum(<i, r> in A) r * r)
\end{lstlisting}

\noindent Considering a CSR representation for !A! and a dense representation for !x!, after applying storage composition and equality saturation we derive:

\begin{lstlisting}
beta * sum(<_, i> in (0:A_len))
  let r = sum(<p, j> in A_VCol(A_VRow(i):A_VRow((i + 1))))
    { j -> A_VVal(p) } in 
  r * r
\end{lstlisting}

\noindent The multiplication of !beta! with the result of summation is a scalar-vector product. Also, 
the last expression !r * r! corresponds to a vector outer product. The multiplication normalization rewrites both expressions,  the result of which is as follows:

\begin{lstlisting}
sum(<_, i> in (0:A_len))
  let r = sum(<p, j> in A_VCol(A_VRow(i):A_VRow((i + 1))))
    { j -> A_VVal(p) } in 
  sum(<i1, v1> in r) { i1 -> 
    sum(<i2, v2> in r)
      { i2 -> beta * v1 * v2 } }
\end{lstlisting}

\subsection{Destination-Passing Style \& C++ Code Generation}
\label{sub:dps-and-cpp}
As final step, \system{} generates C++ code. The index and scalar types !int! and !real! translate into !size_t! and !double!, respectively. The expressions of type !{ int -> real }! are translated to objects of type !dict_type<size_t, double>!. Nested dictionaries are recursively translated, e.g., !{ int -> { int -> real } }! $\leadsto$ !dict_type<size_t, dict_type<size_t, double>>!. We use the robinhood dictionary~\cite{rbdict} for the C++ runtime. Dictionaries with a key type !dense_int! are translated to !arr_type! instead of !dict_type!. For example, !{ dense_int -> real }! translates into !arr_type<double>! and !{ dense_int -> int }! into !arr_type<size_t>!.

The C++ code generation for most constructs of \lang{} is straightforward. The construct !dict(key)! is translated into !dict[key]!, which corresponds to a dictionary or array lookup in C++. The addition and multiplication constructs are converted to the same primitives in C++. This requires the support for !+! and !*! over dictionary types, which is provided by our C++ runtime.
The !sum! construct is translated into a for-loop. Similarly, nested summations are translated into nested for-loops. If the range expression of a summation is the range construction !(st:en)! or sub-array !arr(st:en)!, the translation produces a standard for-loop. However, if it is a dictionary, the translation generates a for-each construct over the dictionary.
Another interesting case is the code generation for the singleton dictionary construct. This construct is mostly used inside a summation. In this case, it is translated to a dictionary update as follows:

\begin{tabular}{l c l}
\begin{lstlisting}
let res = 
  sum(<k,v> in dict)
    { f(k,v) -> g(k,v) }
\end{lstlisting} & $\leadsto$ &
\begin{footnotesize}
\begin{lstlisting}[language=c++]
dict_type<size_t, double> res;
for(auto& kv : dict) { 
  res[f(kv.first, kv.second)] += g(kv.first, kv.second);
}
\end{lstlisting}
\end{footnotesize}
\end{tabular}

\noindent Note that the above translation works if !g(k,v)! outputs a dictionary. This is because !+=! is also overloaded by dictionaries.

\smartpara{Example 1 (cont.)} The generated C++ code for the differentiated program is as follows:

\begin{footnotesize}
\begin{lstlisting}[language=c++]
void VVD_wrt_V1(
  // inputs for the dense vector V1
  arr_type<double>& V1_V, size_t V1_len, 
  // inputs for the sparse vector V2
  arr_type<size_t>& V2_VRow, arr_type<double>& V2_VVal, size_t V2_len, 
  // destination dictionary
  dict_type<size_t, double>& result) {
  for(size_t i = 0; i < V2_len; i++) {
    result[V2_VRow[i]] += V2_VVal[i];
} }
\end{lstlisting}
\end{footnotesize}

In order to generate efficient C++, \system{} leverages the destination-passing style (DPS) technique~\cite{minamide1998functional} in two ways. 
First, the generated function is provided with the destination object to store the final results~\cite{shaikhha2017destination}.
Second, 
The destination-passing style transformation removes intermediate tensors created in inner loops. This is achieved by pushing all the singleton dictionary constructions into the inner loops to avoid unnecessary intermediate dictionary constructions~\cite{shaikhha2022deep}.

\smartpara{Example 4 (cont.)} In the BATAX kernel, the C++ code generation produces the following program:

\noindent
\begin{footnotesize}
\begin{lstlisting}[language=c++]
void BATAX_stg_wrt_X(
  // input scalar
  double& beta_S,
  // input for the CSR matrix A
  arr_type<size_t>& A_VRow, arr_type<size_t>& A_VCol, arr_type<double>& A_VVal, size_t A_len, 
  // input for the dense vector X
  arr_type<double>& X_V, size_t X_len, 
  // destination dictionary
  dict_type<size_t, dict_type<size_t, double>>& result) {
  for(size_t i = 0; i < A_len; i++) {
    dict_type<size_t, double> r;
    for(size_t p = A_VRow[i]; p < A_VRow[(i + 1)]; p++) {
      size_t j = A_VCol[p];
      r[j] += A_VVal[p];
    }
\end{lstlisting}
\begin{lstlisting}[language=c++,backgroundcolor=\color{red!10}]
    dict_type<size_t, dict_type<size_t, double>> tmp1;
    for(auto& iv1 : r) {
      dict_type<size_t, double> tmp2;
      for(auto& iv2 : r) {
        tmp2[iv2.first] += beta_S * (iv1.second * iv2.second);
      }
      tmp1[iv1.first] += tmp2;
    }
    result += tmp1;
\end{lstlisting}
\begin{lstlisting}[language=c++]
} }
\end{lstlisting}
\end{footnotesize}

\noindent Note that the last inner loop (highlighted) requires constructing two intermediate dictionaries !tmp1! and !tmp2!. One can use the associativity of semi-ring dictionaries in order to reorder the insertions so that all the insertions happen in the most-inner loop directly into !result!: 

\begin{footnotesize}
\begin{lstlisting}[language=c++,backgroundcolor=\color{blue!10}]
    for(auto& iv1 : r) {
      for(auto& iv2 : r) {
        result[iv1.first][iv2.first] += beta_S * (iv1.second * iv2.second);
    } }
\end{lstlisting}
\end{footnotesize}

\noindent This removes the need to allocate the intermediate dictionaries !tmp1! and !tmp2!. We will see the impact of this optimization in Section~\ref{sec:exp}.

\section{Proofs}
\label{sec:semantics}

In this section, first, we prove the type soundness of the automatic differentiation. Then, we show that programs in Logical \lang denote differentiable functions for a generalization of differentiability that conservatively extends the one between Euclidean spaces. Finally, we show the correctness of both scalar and tensorized forward-mode AD.

\subsection{Type Soundness of Differentiation}

\begin{lemma}[Well-typedness of \fvadttang{}]
\label{lem:well_typedness_D}
	If $\Gamma\vdash $ \code{e:T} then \fvadtctx{\Gamma}$\vdash$ \fvadttang{\code{e}}: \fvadttype{\code{T}}.
\end{lemma}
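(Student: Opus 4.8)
The plan is to prove the claim by structural induction on the typing derivation of $\Gamma \vdash \code{e:T}$, taking one case per transformation rule of Figure~\ref{fig:diffrules}. First I would record three auxiliary facts. (i) \emph{Weakening}: by definition \fvadtctx{\Gamma} keeps every primal binding \code{x:T} of $\Gamma$ and merely adjoins the tangent bindings \code{x'}$:$\fvadttype{\code{T}}; hence $\Gamma \vdash \code{e:T}$ entails \fvadtctx{\Gamma}$\vdash \code{e:T}$. This is what keeps well-typed the primal subexpressions the rules reuse verbatim --- the bound \code{e1} in the \codekw{let} rule, the key \code{e1} in the singleton rule, the guard of \codekw{if}, the iterated \code{e1} in \code{sum}, and the factor \code{op'(e)} in the unary rule. (ii) \zeromacro{\code{D}}$:\code{D}$ for every tensor type \code{D}, immediate from the definition of \codemacro{zero} in Figure~\ref{fig:langs}. (iii) The algebraic laws of the outer-product type operator \outertype{}: its two defining equations, the derived \emph{unfolding} stating that a differentiated order-$(n{+}1)$ tensor equals a dictionary from \codekw{int} to a differentiated order-$n$ tensor, and \emph{associativity}, $(\code{D1}\outertype\code{D2})\outertype\code{D3} = \code{D1}\outertype(\code{D2}\outertype\code{D3})$, which I would prove by a short induction on \code{D1}.

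With these in hand, most cases are direct. The leaf cases are immediate: \code{x} becomes \code{x'}, whose declared type in \fvadtctx{\Gamma} is precisely \fvadttype{\code{T}}; a real constant becomes \zeromacro{$\tau$}, which typechecks because \fvadttype{\codekw{real}} $=\codekw{real}\outertype\tau = \tau$; and integer/Boolean constants, equalities, and \code{not} become \code{0}$:\codekw{real}$, matching \fvadttype{\codekw{int}} $=$ \fvadttype{\codekw{bool}} $= \codekw{real}$. For \codekw{let} I would unfold the sugared tupled binding (Figure~\ref{fig:langs}) into two successive \codekw{let}s, type the primal \code{e1} by weakening and \fvadttang{\code{e1}} by the induction hypothesis, note that the resulting environment coincides with the translation of the extended context $\Gamma,\code{x:T1}$, and finish with the induction hypothesis on \code{e2}. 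Lookup, \code{sum}, and the singleton constructor are where the \outertype{}-unfolding does the work: for \code{e1(e2)}, the induction hypothesis gives \fvadttang{\code{e1}}$:$\fvadttype{\code{tensor(n+1)}}, which unfolds to a dictionary whose values have type \fvadttype{\code{tensor n}}, so applying it to the weakened integer index \code{e2} yields \fvadttype{\code{tensor n}} as required; \code{sum} additionally needs \fvadttang{\code{e1(k)}} to be typed and the iteration environment checked against the translation of $\Gamma, \code{k:int}, \code{v:tensor n}$. Addition, \codekw{if}, and \code{op} are routine congruences (the last using $\codekw{real}\outertype\tau = \tau$ once more).

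The substantive case, and the one I expect to be the main obstacle, is multiplication: the rule maps \code{e1 * e2} to \code{e1 *} \fvadttang{\code{e2}} \code{+} \fvadttang{\code{e1}} \tmult{$\tau$} \code{e2}. Given \code{e1:D1} and \code{e2:D2} with product type $\code{D1}\outertype\code{D2}$, I must show both summands inhabit \fvadttype{\code{D1}\outertype\code{D2}} $= (\code{D1}\outertype\code{D2})\outertype\tau$. The first summand has type $\code{D1}\outertype(\code{D2}\outertype\tau)$ by weakening and the induction hypothesis, and associativity rewrites it to the target. The second summand forces a separate \emph{typing lemma for the transposed product}: whenever $e_a : \code{D1}\outertype\tau$ and $e_b : \code{D2}$, one has $e_a$ \tmult{$\tau$} $e_b$ of type $(\code{D1}\outertype\code{D2})\outertype\tau$. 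I would prove this lemma from the definition at the bottom of Figure~\ref{fig:diffrules}, splitting on $\tau$: the base case $\tau = \codekw{real}$ is immediate since \tmult{\codekw{real}} is ordinary multiplication and $\code{D1}\outertype\codekw{real} = \code{D1}$; for $\tau$ a proper tensor, an inner induction on the order of \code{D1} (the number of nested \code{sum}s) shows that the sums peel off the \code{D1}-prefix of $e_a$ down to a $\tau$-valued residue $v$, and the body $\{i_1 \mapsto \cdots \{i_m \mapsto 1\} \cdots\} * e_2 * v$ reassembles a value of type $(\code{D1}\outertype\code{D2})\outertype\tau$, again invoking the singleton/product rules and associativity. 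This reindexing step --- the only point where the output type is not read off directly from the induction hypotheses but must instead be recovered by reassociating the outer product --- is the crux of the whole argument.
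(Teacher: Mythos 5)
Your proposal is correct and takes essentially the same route as the paper's proof: induction on the typing derivation, with multiplication and summation as the only non-routine cases, closed by associativity of \outertype{} and the typing of the transposed product \tmult{$\tau$}. You are in fact somewhat more explicit than the paper, which works directly with tensor orders (so that associativity of \outertype{} is just arithmetic on naturals) and asserts the type of \fvadttang{\code{e1}} \tmult{$\tau$} \code{e2} without spelling out the inner induction you give for it.
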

\begin{proof}
	By induction on the typing tree of \code{e} and case analysis. The interesting cases are:

\noindent \textbf{Case \code{*}:} Let us consider !e1 * e2: tensor m! and $\tau=$ !tensor n!. We need to prove that \tengrad{\code{e1 * e2}}!: tensor (m+n)!. Based on the induction hypothesis and typing rules, we have the following:
\begin{itemize}
    \item !e1: tensor m1! $\Rightarrow$ \tengrad{\code{e1}}!: tensor (m1+n)!
    \item !e2: tensor m2! $\Rightarrow$ \tengrad{\code{e2}}!: tensor (m2+n)!
    \item !e1: tensor m1, e2: tensor m2! $\Rightarrow$ !e1 * e2: tensor (m1+m2)! $\Rightarrow$ !m1+m2 = m!
\end{itemize}

By considering the sub-expressions of \tengrad{\code{e1 * e2}} we have:

\begin{itemize}
    \item !e1 * !\tengrad{\code{e2}}!: tensor (m1+(m2+n))! $\Rightarrow$ !e1 * !\tengrad{\code{e2}}!: tensor ((m1+m2)+n)!
    \item \tengrad{\code{e1}} \tmult{$\tau$} !e2: tensor ((m1+m2)+n)!
    
\end{itemize}

Thus, the addition of these two sub-expressions also has the same type, i.e., !tensor ((m1+m2)+n)!. As !m1+m2 = m!, the output type is !tensor (m+n)!.

\noindent \textbf{Case \codekw{sum}:} Let us consider $\Gamma\vdash$ !sum(<k, v> in e1) e2: tensor m! and $\tau=$ !tensor n!.
We need to prove that \tengrad{\Gamma} $\vdash$
\tengrad{\codekw{sum}\code{(<k, v> in e1) e2}}!: tensor (m+n)!. Based on the induction hypothesis and typing rules, we have the following:
\begin{itemize}
    \item $\Gamma\vdash$ !e1: tensor (m1+1)! $\Rightarrow$ \tengrad{\Gamma} $\vdash$ \tengrad{\code{e1}}!: tensor (m1+1+n)!
    \item $\Gamma'=\Gamma,$!k:int!$,$!v:tensor m1!
    \item $\Gamma''=$ \tengrad{\Gamma}$,$!k:int!$,$!v:tensor m1!$,$!k':real!$,$!v':tensor (n+m1)!
    \item $\Gamma'\vdash$ !e2: tensor m! $\Rightarrow$
     $\Gamma''\vdash$ \tengrad{\code{e2}}!: tensor (m+n)!
\end{itemize}

Let us name the body of !sum! in its differentiation as !e'!:

\begin{itemize}
    \item \tengrad{\codekw{sum}\code{(<k, v> in e1) e2}} = \\
    !sum(<k,v> in e1) let <k',v'> = <0,!\tengrad{\code{e1}}!(k)> in !\tengrad{e2} = \\
    !sum(<k,v> in e1) e'! $\Rightarrow$ \\
    !e'! = !let <k',v'> = <0,!\tengrad{\code{e1}}!(k)> in !\tengrad{e2}
\end{itemize}



By the induction hypothesis, we know that $\Gamma'\vdash$ !e': tensor (m+n)!.




As a result, we derive $\Gamma\vdash$ !sum(<k,v> in e1) e': tensor (m+n)!, which we needed to prove.
\end{proof}

The rest of this section goes into more mathematical depth and can be skipped by non-expert readers. Its aim is to formalize the argument that tensorized FAD computes gradients.
We first interpret programs of Logical \lang{} as functions and define what it means to compute derivatives and gradients for such functions. We then extend the definition of differentiability and differentials beyond the usual one for functions between Cartesian spaces. We then use a logical relations argument to show that \fvadttang{\code{e}} denotes the differential of !e! and therefore produces correct gradients.

\subsection{Semantics and Differentiability}

\smartpara{Non-smooth semantics}
We can give a denotational semantics to Logical \lang{} as a special case of the one given in \cite{DBLP:journals/pacmpl/ShaikhhaHSO22}. The type !real! is interpreted as reals $\RR$, and programs !x: real! $\vdash$ !e: real! as functions $\RR\to\RR$. We use semantic brackets $\sem{-}$ for the denotational semantics of types and terms.
We interpret !bool! as the booleans $\BB$, and !int! as the natural numbers $\NN$.
The semantics of !tensor n! is defined inductively, with $\llbracket$!{ int -> D }!$\rrbracket$ given by the tensor product of the free $\RR$ vector space on the countable set $\NN=\llbracket$!int!$\rrbracket$ with the real vector space $\llbracket$!D!$\rrbracket$. This space is only defined up to isomorphism of real vector spaces, and we are free to choose $\llbracket$!tensor n! $\rrbracket$ to be the free real vector space on !I!$\times ... \times $!I!, the $n$ fold Cartesian product of !I! with itself. 
By interpreting contexts 
$\sem{\Gamma} = \{$!x1: T1!, $\ldots$, !xn:Tn!$\}$ as $\sem{\Gamma}= \llbracket$!T1!$\rrbracket\times\ldots \times\llbracket$!Tn!$\rrbracket$, programs $\Gamma \vdash$ !e: T! denote functions $\llbracket$!e!$\rrbracket:\sem{\Gamma}\to \llbracket$!T!$\rrbracket$.

The correctness of the rewrite rules from Figure~\ref{fig:trans} is immediate, as they can all be seen as special cases of optimizations given in related work \citep{DBLP:journals/pacmpl/ShaikhhaHSO22,schleich2022optimizing}, and they rely on the semi-ring structure of our dictionaries.

 \smartpara{Smooth semantics}
The limitation of the non-smooth semantics is that it does not inform us of the differentiability of our programs. A function $f:\RR^n\to \RR^m$ is smooth if it is differentiable and its derivative is smooth.
We can refine our semantics of Logical \lang{} by interpreting our types as generalizations of Cartesian spaces, called diffeological spaces \cite{iglesias2013diffeology}, and our functions as morphisms of diffeological spaces, which can be thought of as generalized smooth functions between these spaces, closely following~\cite{huot2020correctness}.
More precisely, a diffeological space is a pair $(X,\mathcal{P}^X)$ of a set $X$ and sets of functions $\mathcal{P}_U^X\subseteq X^U$ for all subsets $U$ of Euclidean spaces $\RR^n$, called plots. Each such set of plots $\mathcal{P}_U^X$ must contain all constant functions, be closed under pre-composition by smooth functions, and be closed under countable glueing. The latter means that if $(f_i: U_i \subseteq \RR^n\to X)_{i\in\NN}$ are a countable family of plots such that for each $i,j\in \NN$, $f_i$ and $f_j$ agree on $U_i\cap U_j$, then $f:\bigcup_i U_i \to X, x\in U_i\mapsto f_i(x)$ is a plot. Morphisms $f:(X,\mathcal{P}^X)\to (Y,\mathcal{P}^Y)$ between diffeological spaces are functions $f:X\to Y$ such that for all $p\in \mathcal{P}_U^X$, $f\circ p\in  \mathcal{P}_U^Y$. $\NN$ or $\BB$ with constant functions as plots form diffeological spaces. $\RR$ with smooth functions $U\to \RR$ as plots $\mathcal{P}_U^\RR$, is a diffeological space. Morphisms of diffeological spaces $(\RR,\mathcal{P}^\RR)\to(\RR,\mathcal{P}^\RR)$ are exactly smooth functions $\RR\to\RR$.
Diffeological spaces are closed under products, where a plot $f:U\to  X\times Y$ is given by a pair of plots $(f_1,f_2)$, with $f_1\in \mathcal{P}_U^X, f_2\in  \mathcal{P}_U^Y$. We can now revise our denotational semantics by interpreting types as diffeological spaces and terms as morphisms.
For the interpretation of types, the only remaining case is that of dictionary types. The underlying set is the same as the set interpretation, and the plots are given by 
\[\mathcal{P}_U^{\sem{\{I \to D\}}}= \{f:U\to \sem{\{I \to D\}}~\mid~\exists g\in \mathcal{P}_U^{\sem{D}^n},i:\sem{D}^n\to  \sem{\{I \to D\}} \text{ linear injective}, f= i \circ g\}\]
One can easily show that all the semantics of all the constructs of the language preserve plots. 
As a corollary, we obtain the following result:

\begin{proposition}
Well-typed programs !x1: real!, $\ldots$, !xn: real! $\vdash$ !e: real! denote smooth functions $\RR^n\to \RR$. 
\end{proposition}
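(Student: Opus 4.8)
The plan is to derive the proposition directly from the plot-preservation fact stated just above: since the denotation of every language construct preserves plots, a straightforward induction on the typing derivation shows that any well-typed program $\Gamma\vdash e:T$ denotes a morphism of diffeological spaces $\sem{e}:\sem{\Gamma}\to\sem{T}$. Specializing to $\Gamma = (x_1:\mathtt{real},\ldots,x_n:\mathtt{real})$ and $T=\mathtt{real}$, this yields that $\sem{e}$ is a morphism $\sem{\Gamma}\to\sem{\mathtt{real}}$. It then remains only to identify these morphisms with the ordinary smooth functions $\RR^n\to\RR$.

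First I would unfold the two endpoints. The output diffeological space is $\sem{\mathtt{real}} = (\RR,\mathcal{P}^\RR)$, the standard real line. The input $\sem{\Gamma}$ is the $n$-fold product $\sem{\mathtt{real}}\times\cdots\times\sem{\mathtt{real}}$; by the product description of plots (a plot into a product is a tuple of plots), a map $U\to\RR^n$ is a plot of $\sem{\Gamma}$ exactly when each of its $n$ components is a smooth map $U\to\RR$, i.e. exactly when the map is smooth in the usual sense. Hence $\sem{\Gamma}$ carries precisely the standard diffeology of $\RR^n$.

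Next I would show that a morphism $f:(\RR^n,\mathcal{P}^{\RR^n})\to(\RR,\mathcal{P}^\RR)$ is the same thing as a smooth function $\RR^n\to\RR$, generalizing the one-dimensional statement already recorded in the text. For the direction we need, take the identity map $\mathrm{id}_{\RR^n}:\RR^n\to\RR^n$, which is smooth and hence a plot of $\sem{\Gamma}$ on $U=\RR^n$. Since $f$ is a morphism, $f = f\circ\mathrm{id}_{\RR^n}$ must be a plot of $\sem{\mathtt{real}}$, that is, a smooth function $\RR^n\to\RR$. (The converse, which is not strictly needed, follows from the chain rule: if $f$ is smooth then $f\circ p$ is smooth for every smooth plot $p$, so $f$ is a morphism.) Applying this to $\sem{e}$ gives the proposition. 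Note that although intermediate subterms of $e$ may traverse tensor or discrete types, this is immaterial: morphisms compose, so the whole term denotes a morphism between the input and output spaces regardless of the intermediate diffeological spaces.

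The only real work lies in the inductive step invoked above, namely that each construct's semantics preserves plots, which the text defers as routine but which genuinely requires checking every rule; the dictionary-lookup and multiplication cases, and especially the plot structure on $\sem{\{I\to D\}}$ given via the linear injective factorization through some $\sem{D}^n$, are the delicate ones. Granting that lemma, the proposition is essentially immediate: the crux is just the identity-plot trick, which upgrades an \emph{a priori} merely plot-preserving morphism out of a Euclidean space to an honestly smooth map.
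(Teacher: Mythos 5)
Your proposal is correct and follows essentially the same route as the paper: the proposition is obtained as a corollary of the (deferred) lemma that every construct's semantics preserves plots, combined with the standard diffeological facts that the product of $n$ copies of $\sem{\mathtt{real}}$ is $\RR^n$ with its usual diffeology and that plot-preserving maps out of a Euclidean space into $(\RR,\mathcal{P}^\RR)$ are exactly the smooth ones. Your write-up merely makes explicit (via the identity-plot argument) what the paper leaves implicit, and you correctly identify that the only substantive work is the per-construct plot-preservation check, which the paper also treats as routine.
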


\smartpara{Differentials}
For functions $f:\RR \to \RR$, the differential of a differentiable function $f$ is a function $df: \RR \times \RR \to \RR$ such that for any $x\in\RR$, $df(x,-):\RR\to \RR$ is the best linear approximation of $f$ at $x$, which is characterized by its slope $f'(x)$. Thus, $df(x,v)= f'(x) \cdot v$.
More generally, the differential of a differentiable function $f:\RR^n\to \RR^m$ is a function $df: \RR^n \times \RR^n\to \RR^m$ such that $df(x,v) = J_xf \cdot v$, where $\cdot$ is now a matrix-vector product. $J_xf$ is the Jacobian of $f$ at each, i.e. the $n\times m$ matrix of partial derivatives of $f$ at $x$: $J_xf = \Big(\frac{\partial}{\partial x_j}f_i(x)\Big)_{1\leq j\leq n, 1\leq i\leq m}$, where $f_i$ is the projection of $f$ on its $i$-th component. 

\subsection{Correctness of Scalar Forward-Mode}

\smartpara{Correctness on a first-order sublanguage} 
Consider the simple first-order sublanguage of Logical \lang{} that does not involve dictionaries. On this sublanguage, semantically, the forward-mode algorithm sends $f$ to its differential $df$, up to a reordering of the inputs. More precisely, if 
 !x1: real!, $\ldots$, !xn: real! $\vdash$ !e: real!, and
  !x1: real!, !x1': real!, $\ldots$, !xn: real!, !xn': real! $\vdash$ \fadttang{\code{e}}!: real!, then $\llbracket$ \fadttang{\code{e}}$\rrbracket(x_1,x_1',\ldots,x_n,x_n') = J_{(x_1,\ldots, x_n)} \llbracket$!e!$\rrbracket \cdot (x_1',\ldots, x_n')^T$, where $(-)^T$ is transposition. This can be proved by routine induction on the structure of terms.

\smartpara{Beyond differentials} Because of tensors, the semantics of our programs do not represent functions $\RR^n\to \RR^m$, and thus lack a definition of differentials. Even if we were only interested in programs of the form !x1: real!, $\ldots$, !xn: real! $\vdash$ !e: real!, !e! may use tensors internally, and a correctness proof by simple induction would fail. Thus, we need to define differentials of functions whose input or outputs can be tensors. We call $\pi:\llbracket$!tensor m!$\rrbracket\to \RR^k$ a canonical projection if it is a projection onto the free-vector space $\RR^k$ generated by a subset of size $k$ of !I!. The adjoint linear maps $i:\RR^k\to\llbracket$!tensor m!$\rrbracket$ are called canonical injections. A canonical injection embodies the intuition of tensors with a fixed shape, i.e., a guarantee on their size if they were to be stored as dense tensors. When we differentiate with respect to a tensor, we implicitly mean for a tensor of fixed shape. Likewise, when a program outputs a tensor in our language, its size cannot depend on real inputs, as we do not have non-continuous primitives from the reals to integers. Therefore, if all the integers inputs of a program are fixed, its output tensor will have fixed shape, which can be captured by a canonical projection. Another way to say it is that for programs !x: tensor m! $\vdash$ !e: tensor n!, there is a canonical projection that is injective on the image of $\llbracket$!e!$\rrbracket$. Therefore, we define differentials for functions outputting tensors as follows:

\begin{definition}
The function $dg: \RR^n \times \RR^n \to  \llbracket$!tensor m!$\rrbracket$ is the differential of $f:\RR^n\to \llbracket$!tensor m!$\rrbracket$ if, for all canonical projections $\pi:\llbracket$!tensor m!$\rrbracket\to \RR^k$, $\pi \circ dg$ is the differential of $\pi \circ f$.
\end{definition}

We can extend this definition to functions taking a tensor as input:

\begin{definition}
The function $df: \llbracket$!tensor n!$\rrbracket \times \llbracket$!tensor n!$\rrbracket \to  \llbracket$!tensor m!$\rrbracket$ is the differential of $f: \llbracket$!tensor n!$\rrbracket\to \llbracket$!tensor m!$\rrbracket$ if, for all canonical injections $i:\RR^k \to\llbracket$!tensor n!$\rrbracket$,  $dg\circ (i\times i)$ is the differential of $f\circ i$, where $i \times i: (x,y)\mapsto (i(x),i(y))$.
\end{definition}

Similarly to the case of usual differentiability, we can extend the definition to functions with multiple inputs and outputs in a pointwise style. That is: 
\begin{definition}
\label{defn:extension-differentials}
A function $dg:\sem{T}\times \sem{T}\to \sem{T1} \times \sem{T2} $ is the differential of a function $f: \sem{T} \to \sem{T1}\times \sem{T2}$ iff $\pi_i \circ dg$ is the differential of $\pi_i \circ f$. A function $dg:\sem{T1}\times \sem{T1}\times \sem{T2}\times \sem{T2}$ is the differential of a function $f: \sem{T1}\times \sem{T2} \to \sem{T}$ iff for all $x\in \sem{T1}$, $dg(x,0,-,-)$
is the differential of $f(x,-): \sem{T2} \to \sem{T}$, and for all $y\in \sem{T2}$, $dg(-,-,y,0)$ is the differential of $f(-,y): \sem{T1} \to \sem{T}$.
\end{definition}

We can now state that FAD computes differentials:
\begin{theorem}
\label{thm:simple-fad-correct}
 Let $\Gamma \vdash$! e: D!, where $\Gamma$ only contains tensor types. 
 Then, $\llbracket$\fadttang{\code{e}}$\rrbracket$ is the differential of $\llbracket$!e!$\rrbracket$.
\end{theorem}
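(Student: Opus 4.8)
The plan is to prove this by a logical relations argument over smooth curves, building on the diffeological semantics established above. Since differentials are determined by directional derivatives, it suffices to test along plots out of $\RR$, i.e.\ along smooth curves. For every type $T$ I would define a binary relation $R_T$ relating a curve $\gamma\colon\RR\to\sem{T}$ with a curve $\delta\colon\RR\to\llbracket\mathcal{F}\llbracket T\rrbracket\rrbracket$ into the tangent space, whose intended meaning is that \emph{$\delta$ is the derivative of $\gamma$}. Concretely, $R_{\mathrm{real}}(\gamma,\delta)$ holds iff $\gamma$ is smooth and $\delta=\frac{d}{dt}\gamma$; $R_{\mathrm{int}}$ and $R_{\mathrm{bool}}$ hold iff $\gamma$ is constant and $\delta=0$, matching the rule $\mathcal{F}\llbracket n\rrbracket=0$; and at a dictionary type $\{I\to D\}$ the relation holds iff both curves have the same, constant support and, at every key in that support, the induced pair of value curves is $R_D$-related. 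This last clause is precisely the fixed-shape condition guaranteed by the plot definition for dictionary types, and by construction $R_T(\gamma,\delta)$ coincides with ``$\delta$ is the differential curve of $\gamma$'' in the sense of the canonical-projection definitions given above.

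The heart of the argument is a fundamental lemma, proved by induction on the typing derivation of !e!: if $\Gamma=(x_1\colon T_1,\ldots,x_n\colon T_n)$ and $(\gamma_i,\delta_i)\in R_{T_i}$ for every $i$, then the pair consisting of $\sem{e}\circ\langle\gamma_i\rangle$ and $\llbracket\mathcal{F}\llbracket e\rrbracket\rrbracket\circ\langle\gamma_i,\delta_i\rangle$ lies in $R_T$. Most cases merely re-express standard calculus facts and follow directly from the rules of Figure~\ref{fig:forward_diffrules}: variables and constants are handled by the definition of $R$; addition uses linearity of the derivative; multiplication is the Leibniz rule; !op(e)! is the chain rule; and !let! is substitution combined with the chain rule. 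For the conditional I would use that, along a curve, the discrete guard $\sem{e_1}\circ\langle\gamma_i\rangle$ is constant, since its type is !int! or !bool! and $\RR$ is connected, so exactly one branch is taken on a neighbourhood and differentiation commutes with branch selection. The singleton-dictionary and lookup cases are analogous: the key is an integer-valued, hence constant, curve, and the transformation correctly sets its tangent !k'! to $0$.

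The main obstacle is the !sum! case, together with the treatment of dictionaries on which it rests. By the dictionary clause of $R$, the curve $\sem{e_1}\circ\langle\gamma_i\rangle$ has locally constant support: it factors through a fixed canonical injection, so the index set being iterated over does not vary with $t$, and the aggregation is a \emph{finite} sum over a fixed key set. Differentiation then commutes with this finite sum. For each key, the transformation binds !v'! to $\mathcal{F}\llbracket e_1\rrbracket$ applied to the key, which by the induction hypothesis is the derivative of the value curve $t\mapsto(\sem{e_1}\circ\langle\gamma_i\rangle)(t)$ at that key, while !k'! is set to $0$ because keys are constant. Applying the induction hypothesis to !e2! in the extended context and summing over the fixed key set yields the derivative of the whole aggregation, which is exactly $R_D$-relatedness of the outputs. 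Making rigorous the claim that smooth curves into dictionaries have locally constant support --- so that !sum! behaves as a shape-preserving finite sum --- is the technical crux, and it is precisely what the plot definition for $\sem{\{I\to D\}}$ provides.

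Finally, to conclude I would specialise the fundamental lemma. Given $\Gamma\vdash$ !e: D! with $\Gamma$ containing only tensor types, I unfold Definition~\ref{defn:extension-differentials}: it suffices to check, after pre-composing the inputs with canonical injections and post-composing the outputs with canonical projections, that $\llbracket\mathcal{F}\llbracket e\rrbracket\rrbracket$ returns directional derivatives. For a base point $x$ and a direction $v$ living in some $\RR^k$ via a canonical injection, I feed the lemma the affine curves $t\mapsto x+tv$ paired with their constant derivative curves $t\mapsto v$, which are $R$-related by construction; the output relatedness then says exactly that $\llbracket\mathcal{F}\llbracket e\rrbracket\rrbracket(x,v)=d\sem{e}(x,v)$. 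Since this holds for all such canonical injections and projections, $\llbracket\mathcal{F}\llbracket e\rrbracket\rrbracket$ is the differential of $\sem{e}$ in the required generalised sense.
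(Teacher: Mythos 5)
Your proposal follows essentially the same route as the paper: a logical-relations argument over curves out of $\RR$, relating each curve to its derivative curve type-by-type, a fundamental lemma proved by induction on typing derivations, and a final instantiation with concrete curves (your affine curves $t\mapsto x+tv$ versus the paper's single-coordinate curves $\lambda x.(x_1,\ldots,x,\ldots,x_{k-1})$ pre-composed with canonical injections --- an immaterial difference). The case analysis you sketch (linearity, Leibniz rule, chain rule, constancy of discrete guards, shape-preserving finite sums) is exactly what the paper's fundamental lemma requires.

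There is, however, one genuine flaw: your relation at dictionary types is too strong, and the fundamental lemma fails with it as stated. You require the two curves to have ``the same, constant support'' and claim this coincides with ``$\delta$ is the differential curve of $\gamma$''. It does not. Consider the program \code{x: real} $\vdash$ \code{\{ x -> x \}} with key fixed, or more simply \code{x: real, d: \{int -> real\}} $\vdash$ \code{x * d}, fed the curve $\gamma(t)=t$ (and a constant dictionary for \code{d}): the output curve is $t\mapsto\{1\mapsto t\}$, whose support is $\{1\}$ for $t\neq 0$ but empty at $t=0$, since a zero value is semantically identical to an absent key in the vector-space interpretation. This curve is not related to its derivative $t\mapsto\{1\mapsto 1\}$ under your relation, so the singleton-dictionary and multiplication cases of your induction break; the same example falsifies your claim that smooth curves into dictionaries have locally constant support, which you rightly identify as the crux. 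What the plot definition for $\sem{\{I \to D\}}$ actually provides is weaker: the curve factors as $i\circ g$ through a fixed finite-dimensional subspace via a linear injection $i$, inside which values --- and hence supports --- may vanish at isolated points. The paper's relation is accordingly defined without any support condition, pairing a plot $f$ with $g=df(-,1)$ where $df$ is the differential in the canonical-projection sense; that relation is closed under all the constructs, and your \code{sum} case still goes through because what it needs is containment in a fixed finite key set, not constancy of the support. The fix is local --- replace your dictionary clause by the paper's --- but as written the relation you induct with is false for the lemma you need.
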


The proof now follows a simple logical relations argument, as in \cite{huot2020correctness}. The logical predicates are defined by

\begin{tabular}{rcl}
    $\mathcal{R}_{\codekw{real}}$ & $=$ & $\{(f:\RR\to \RR,g:\RR\to\RR)~\mid~f\in \mathcal{P}_\RR^\RR, g=f'\}$ \\
    $\mathcal{R}_{\codekw{int}}$ & $=$  
    & $\{(f:\RR\to \NN,g:\RR\to\RR)~\mid~f\text{ constant}, g=\lambda x.0\}$ \\
    $\mathcal{R}_{\codekw{bool}}$ & $=$  
    & $\{(f:\RR\to \BB,g:\RR\to\RR)~\mid~f\text{ constant}, g=\lambda x.0\}$\\
    $\mathcal{R}_{\text{\code{\{I -> D\}}}}$ & $=$ & $\big\{(f: \RR\to \llbracket\text{\code{\{I -> D\}}}\rrbracket,g: \RR\to \llbracket\text{\code{\{I -> D\}}}\rrbracket~\mid~f\in\mathcal{P}_\RR^{\sem{\{I \to D\}}}, g=df(-,1), $\\
    && \quad$df$ differential of $f \}$
\end{tabular}

After showing the fundamental lemma for logical relations, where one needs to show that the interpretation of the forward differentiation of every construct of the language preserves the logical predicate (by post-composition), we conclude that the interpretation of every program preserves the logical predicate. We conclude the theorem as a corollary of the fundamental lemma in the same way as in \cite{huot2020correctness}. More precisely, for the predicate $\mathcal{R}_{\codekw{real}}$, we choose $f=id$. For the predicate $\mathcal{R}_{\text{\code{\{I -> D\}}}}$, for 
any linear injection $\RR^k\to\llbracket$\code{\{I -> D\}}$\rrbracket$, and any $x_1,\ldots,x_{k-1}\in\RR^{k-1}$,
we take $f_i = i\circ g_i$ for $g_i=\lambda x.(x_1,\ldots,x,\ldots,x_{k-1})$, where $x$ is at position $i$.


\subsection{Correctness of Tensorized Forward-Mode AD}

\smartpara{A change of perspectives: $\tau$-differentials} Given a differentiable function $f:\RR^n\to \RR^m$, we can equivalently see its differential as a higher-order function $df:\RR^n \times (\RR \multimap \RR^n) \to (\RR \multimap \RR^m)$, where $A \multimap B$ is the real vector space of $\RR$-linear functions from $A$ to $B$.
Given bases for the spaces $A,B$, we identify linear functions $A\multimap B$ with matrices. 
Seen this way, we have $df(x, v)= Jf_x \cdot v$, where $v$ is an $n\times 1$ matrix, and $\cdot$ is now matrix-matrix multiplication. 
Let $\tau =\RR^k$. We can generalize the previous definition by seeing the differential of $f:\RR^n\to \RR^m$ as a function $df:\RR^n \times (\tau \multimap \RR^n) \to (\tau \multimap \RR^m)$, still defined by $df(x, v)= Jf_x \cdot v$. We call these functions $\tau$-differentials. The functions $f\in \tau \multimap \RR^n$ can be seen as the $(k,1)$ velocities from \cite{betancourt2018geometric}.
For finite dimensional vector spaces $A, B$, the vector space $A\multimap B$ is isomorphic to $A \otimes B$, where $A \otimes B$ is the tensor product of vector spaces.
Therefore, an equivalent way to define the $\tau$-differential of $f:\RR^n\to \RR^m$ is as a function $df:\RR^n \times (\RR^n\otimes \tau) \to (\RR^m \otimes \tau)$.
This is, in essence, what our tensorized forward-mode AD computes. 
For instance, if $\tau=\RR^n$, by choosing $v$ to be the identity matrix, we have that $df(x,v)= J_xf$ is the entire Jacobian of $f$, whereas usual forward-mode only computes a column of the Jacobian matrix, by choosing $v$ to be a hot vector. Such a view on differentials extends to functions $f$ 
with a tensor input or output, as follows:
\begin{definition}
    The function $df: \RR^n \times (\tau \multimap \RR^n) \to (\tau \multimap \llbracket$!tensor m!$\rrbracket)$ is the $\tau$-differential of $f:\RR^n\to \llbracket$!tensor m!$\rrbracket$ if, for all canonical projections $\pi:\llbracket$!tensor m!$\rrbracket\to \RR^k$, and canonical injections $i:\RR^p\to \tau$, $\lambda (x, l). \pi\circ df(x, l\circ i)$ is the $\RR^p$-differential of $\pi\circ f$.

    Likewise, The function  $df: \llbracket$!tensor n!$\rrbracket \times (\tau \multimap \llbracket$!tensor n!$\rrbracket) \to (\tau \multimap \llbracket$!tensor m!$\rrbracket)$ is the $\tau$-differential of $f:\llbracket$!tensor n!$\rrbracket\to \llbracket$!tensor m!$\rrbracket$ if, for all canonical projections $\pi:\llbracket$!tensor m!$\rrbracket\to \RR^k$, and canonical injections $i:\RR^p\to \tau,j:\RR^q\to \llbracket$!tensor n!$\rrbracket$, $\lambda (x, l). \pi\circ df(j(x), i^*\circ l\circ i)$ is the $\RR^p$-differential of $\pi\circ f\circ j$, where $i^*$ is the linear adjoint of $i$.
\end{definition}

It then further extends to functions with multiple inputs or outputs, following Definition~\ref{defn:extension-differentials}.

We can now state that tensorized forward-mode AD computes differentials:
\begin{theorem}
Let $\tau$ be a tensor type.
 Let $\Gamma \vdash$! e: D!, where $\Gamma$ only contains tensor types. Then, $\llbracket$\fvadttang{\code{e}}$\rrbracket$ is the $\tau$-differential of $\llbracket$!e!$\rrbracket$.
\end{theorem}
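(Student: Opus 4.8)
The plan is to mirror the logical-relations argument of the scalar case (Theorem~\ref{thm:simple-fad-correct} and \cite{huot2020correctness}), systematically replacing ordinary differentials by $\tau$-differentials. By the definition of the $\tau$-differential, which reduces to an ordinary $\RR^p$-differential through an arbitrary canonical injection $i:\RR^p\to\tau$ and canonical projection $\pi$, I would take the base of my curves to be $\RR^p$ (a fixed shape for the batch dimension $\tau$ determined by $i$) and quantify over all such $i$ and $\pi$ at the very end. Recall that the transform on types appends a $\otimes\tau$ factor, \fvadttype{\code{D}} $=$ \code{D}~\outertype{}~$\tau$, so the tangent attached to a value of type \code{D} lives in $\sem{\code{D}}\otimes\tau\cong(\tau\multimap\sem{\code{D}})$; it is precisely this linear map that must record the $\tau$-differential.

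Concretely, I would define logical predicates $\mathcal{R}^\tau_{\code{T}}$ on pairs $(f,g)$ with $f:\RR^p\to\sem{\code{T}}$ a plot and $g:\RR^p\to\sem{\code{T}}\otimes\tau$ (respectively $g:\RR^p\to\RR$ for the discrete types, since \fvadttype{\codekw{int}} $=$ \fvadttype{\codekw{bool}} $=$ \codekw{real}). For $\code{T}=\codekw{real}$, I require $f\in\mathcal{P}^\RR_{\RR^p}$ and $g(x)$ to encode the $\RR^p$-differential $df(x,-)$ transported into the $\tau$-slot through $i$; for \codekw{int} and \codekw{bool}, that $f$ be constant and $g=\lambda x.0$, matching the constant rules of the transform; and for a dictionary type $\{\code{I}\to\code{D}\}$, that $g(x)$ encode, under the isomorphism $\sem{\{\code{I}\to\code{D}\}}\otimes\tau\cong(\tau\multimap\sem{\{\code{I}\to\code{D}\}})$, the $\tau$-differential of $f$ at $x$. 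These are the evident generalisations of $\mathcal{R}_{\codekw{real}}$ and $\mathcal{R}_{\{\code{I}\to\code{D}\}}$ from the scalar proof.

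The heart of the argument is the fundamental lemma: for every derivation $\Gamma\vdash\code{e:T}$, the interpretation of \fvadttang{\code{e}} preserves the predicate, i.e. post-composing related input plots yields a related output plot. I would prove this by induction on the typing derivation. The cases for variables, \codekw{let}, \codekw{if}, addition, the singleton \code{\{e1 -> e2\}} (linear in \code{e2}), and lookup (where the differential commutes with the linear lookup, matching \fvadttang{\code{e1(e2)}} $=$ \fvadttang{\code{e1}}\code{(e2)}) are routine and essentially identical to the scalar proof. The two genuinely new cases are summation and multiplication. For summation I must check that the $\tau$-differential commutes with the locally finite aggregation and that the auxiliary binding \code{v'} $=$ \fvadttang{\code{e1}}\code{(k)} supplies exactly the tangent of the bound value. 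For multiplication I must show that the rule \fvadttang{\code{e1 * e2}} $=$ \code{e1 *}~\fvadttang{\code{e2}} $+$ \fvadttang{\code{e1}}~\tmult{$\tau$}~\code{e2} implements the Leibniz rule for $\tau$-differentials, $d(f\cdot h)(x,v)=f(x)\cdot dh(x,v)+df(x,v)\cdot h(x)$.

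With the fundamental lemma established, the theorem follows as in the scalar case by instantiating the input plots. For the distinguished free variable of type $\tau$ with respect to which we differentiate I take $f=\mathrm{id}_\tau$, whose $\tau$-differential $df(x,-)=\mathrm{id}$ is encoded by the diagonal one-hot tensor produced by the \codemacro{gradient} API; for every other free variable I take a constant plot, whose differential is the zero tensor supplied by \codemacro{ingrad}. Quantifying over all canonical injections and projections then reproduces exactly the definition of the $\tau$-differential of $\sem{\code{e}}$, and the passage to several inputs and outputs is handled pointwise as in Definition~\ref{defn:extension-differentials}. I expect the multiplication case to be the main obstacle: one must track the position of the $\tau$ factor consistently through the type translation and through the isomorphism $\tau\multimap\sem{\code{D}}\cong\sem{\code{D}}\otimes\tau$, and verify that the reindexing performed by \tmult{$\tau$} is exactly what moves that factor into the rightmost slot, so that the second summand of the Leibniz rule both type-checks and denotes the intended bilinear map. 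This bookkeeping of tensor-factor positions, rather than any analytic subtlety, is where the care is required.
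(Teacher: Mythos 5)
Your proposal follows essentially the same route as the paper: the paper also proves this theorem by re-running the logical-relations argument of Theorem~\ref{thm:simple-fad-correct}, changing only the predicates so that the tangent component records a $\tau$-differential (e.g.\ its $\mathcal{R}^\tau_{\codekw{real}}$ pairs a morphism $f$ with its $\tau$-differential), then concluding via the fundamental lemma and the same instantiation you describe (a one-hot/identity seed for the distinguished variable, zero seeds for the others). The only difference is bookkeeping: the paper keeps $\sem{\tau}$ itself as the base of the predicates and takes the tangent component to be the full two-argument $\tau$-differential, whereas you work over Cartesian bases $\RR^p$ with tangent-valued plots in $\sem{\code{T}}\otimes\tau$ and quantify over canonical injections and projections at the end---an equivalent choice, since the paper's own definition of $\tau$-differential reduces to $\RR^p$-differentials through exactly those maps.
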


The proof follows a similar logical relations argument as Theorem~\ref{thm:simple-fad-correct}, where the logical predicates are slightly changed to take the type $\tau$ into account. For instance
\begin{align*}
    \mathcal{R}_{\codekw{real}}^\tau = & \big\{ f: \sem{\tau} \to \RR, g: \RR\times (\sem{\tau}\multimap \sem{\tau}) \to (\sem{\tau}\multimap \RR)~\mid~f\text{ morphism of diffeological spaces}, \\
 & g\text{ $\tau$-differential of }f \big\}
\end{align*}

\section{Experimental Results}
\label{sec:exp}


In this section, we see the effectiveness of \system in practice by using real-world and synthetic tensors. We answer the following research questions:

\begin{itemize}
    \item How does \system{} perform for tensor kernels over real-world sparse matrices in comparison with the state-of-the-art AD frameworks? Does \system{} scale to large sparse matrices?
    \item What is the impact of different physical storage formats on the performance of \system{}?
    \item What is the impact of different optimizations on the performance of \system{}?
    \item For which densities does it make sense to use sparse representations and compute the gradients using \system{}?
\end{itemize}

\subsection{Experimental Setup}
The experiments were conducted on a MacBook Pro featuring a dual-core Intel Core i7 CPU clocked at 3.5 GHz with 16 GB of LPDDR3 RAM (2133MHz), running macOS Big Sur 11.5.2. To compile the generated C++ code, we employed CLang 1205.0.22.11 and Python 3.7.4 for executing the Python code. We used TensorFlow 2.11.0 (XLA-enabled) and PyTorch 1.13.1.

\begin{table}
  \caption{Real-world matrices used in the end-to-end experiments.}
  \vspace*{-1em}
  \label{tbl:datasets}
  \begin{tabular}{l l r r l}
    \hline
    Matrix    & Dimensions                      & Density             & \# NNZ & Description \\ \hline \hline
    bcspwr10  & 5.3K $\times$ 5.3K              & $2^{-9.3}$  & 22K & Power Network \\
    nopoly    & 10K $\times$ 10K                & $2^{-9.5}$  & 70K & Undirected Weighted Graph \\
    pdb1HYS   & 36K $\times$ 36K                & $2^{-8.2}$  & 2.19M  & Undirected Weighted Graph      \\
    rma10     & 46K $\times$ 46K                & $2^{-9.8}$  & 2.37M  & Computational Fluid Dynamics      \\
    cant      & 62K $\times$ 62K                & $2^{-9.9}$  & 2.03M  & Finite Element Method      \\
    consph    & 83K $\times$ 83K                & $2^{-10.1}$  & 3.05M  & Finite Element Method      \\
    cop20k\_A & 121K $\times$ 121K              & $2^{-12.4}$  & 1.36M  & Finite Element Method      \\
    \hline
  \end{tabular}

\end{table}

We consider the following sparse matrix kernels:

\begin{itemize}
    \item \textbf{BATAX.} The BATAX kernel computed using the following formula~\cite{DBLP:journals/toms/NelsonBSJN15}: $f(j) = \Sigma_{i,k} \beta \cdot A(i,j)\cdot A(i,k)\cdot X(k)$. We consider its gradient with respect to the vector: $\frac{\partial f}{\partial X}$.
    \item \textbf{SMMM.} The summation of elements of matrix-matrix-multiplication~\cite{schleich2022optimizing}: $f = \Sigma_{i,j,k} A(i,k) \cdot B(k,j)$. We consider its gradient with respect to the second matrix: $\frac{\partial f}{\partial B}$.
    \item \textbf{SMVM.} The summation of the elements of a matrix-vector-multiplication~\cite{DBLP:journals/pacmpl/SmedingV23}: $f = \Sigma_{i,j} A(i,j) \cdot X(j)$. We consider its gradient with respect to the vector: $\frac{\partial f}{\partial X}$.
\end{itemize}

\noindent The first two kernels produce a matrix result, while the last one produces a vector. 

We also consider the following vector kernels from \dfsmooth{} (\cite{DBLP:journals/pacmpl/ShaikhhaFVJ19}):

\begin{itemize}
    \item \textbf{VVA.} The addition of two vectors: $f(i) = V_1(i) + V_2(i)$. We consider its gradient with respect to the first vector: $\frac{\partial f}{\partial V_1}$.
    \item \textbf{VVD.} The dot product of two vectors: $f = \Sigma_i V_1(i) \cdot V_2(i)$. We consider its gradient with respect to the first vector: $\frac{\partial f}{\partial V_1}$.
    \item \textbf{VSM.} The vector-scalar multiplication: $f(i) = V(i) \cdot s \cdot s$. The scalar value was multiplied by itself to make the differentiation more complicated. We consider its gradient with respect to the scalar value: $\frac{\partial f}{\partial s}$.
\end{itemize}

\noindent The first kernel produces a matrix result, while the last two produce a vector. 

As competitors for the matrix kernels, we consider the following systems:

\begin{itemize}
    \item \textbf{\system{} (dict)}: Generated C++ code by \system from tensorized forward-mode AD that uses a nested dictionary.
    \item \textbf{\system{} (arr)}: Generated C++ code by \system from tensorized forward-mode AD that also benefits from the physical storage specified by the sparse representations.
    \item \textbf{TensorFlow}: The reverse-mode AD of the TensorFlow framework, using its \texttt{gradient} and \texttt{jacobian} APIs.
    \item \textbf{PyTorch}: The reverse-mode AD of the PyTorch framework, using its \texttt{jacobian} API.
\end{itemize}

For the vector kernels, in addition to the above competitors, we consider the following systems:\footnote{The generated code were made available in \dfsmooth here: https://github.com/amirsh/autodiff2/tree/master/icfp19}

\begin{itemize}
    \item \textbf{Tapenade (R)}: Generated C code by Tapenade using reverse-mode AD.
    \item \textbf{Tapenade (F)}: Generated C code by Tapenade using forward-mode AD.
    \item \textbf{\dfsmooth}: Generated C code by \dfsmooth using its forward-mode AD.
    \item \textbf{\dfsmooth + DPS}: Generated C code by \dfsmooth using its forward-mode AD with the DPS optimization.
\end{itemize}

Both real-world and synthetic datasets were utilized in our study. To obtain the former, we gathered seven sparse matrices from the SuiteSparse Matrix Collection~\cite{DBLP:journals/toms/DavisH11}. Table~\ref{tbl:datasets} summarizes these datasets. For synthetic data, we created random matrices and vectors with different density and dimension configurations.

In all these benchmarks, we consider the gradient with respect to a dense variable. All experiments use a single-core. We take the average time of running five runs.

\begin{figure}[t]
\setlength\tabcolsep{.5pt}
    \centering
    \begin{tabular}{ccc}
         \includegraphics[width=0.33\linewidth]{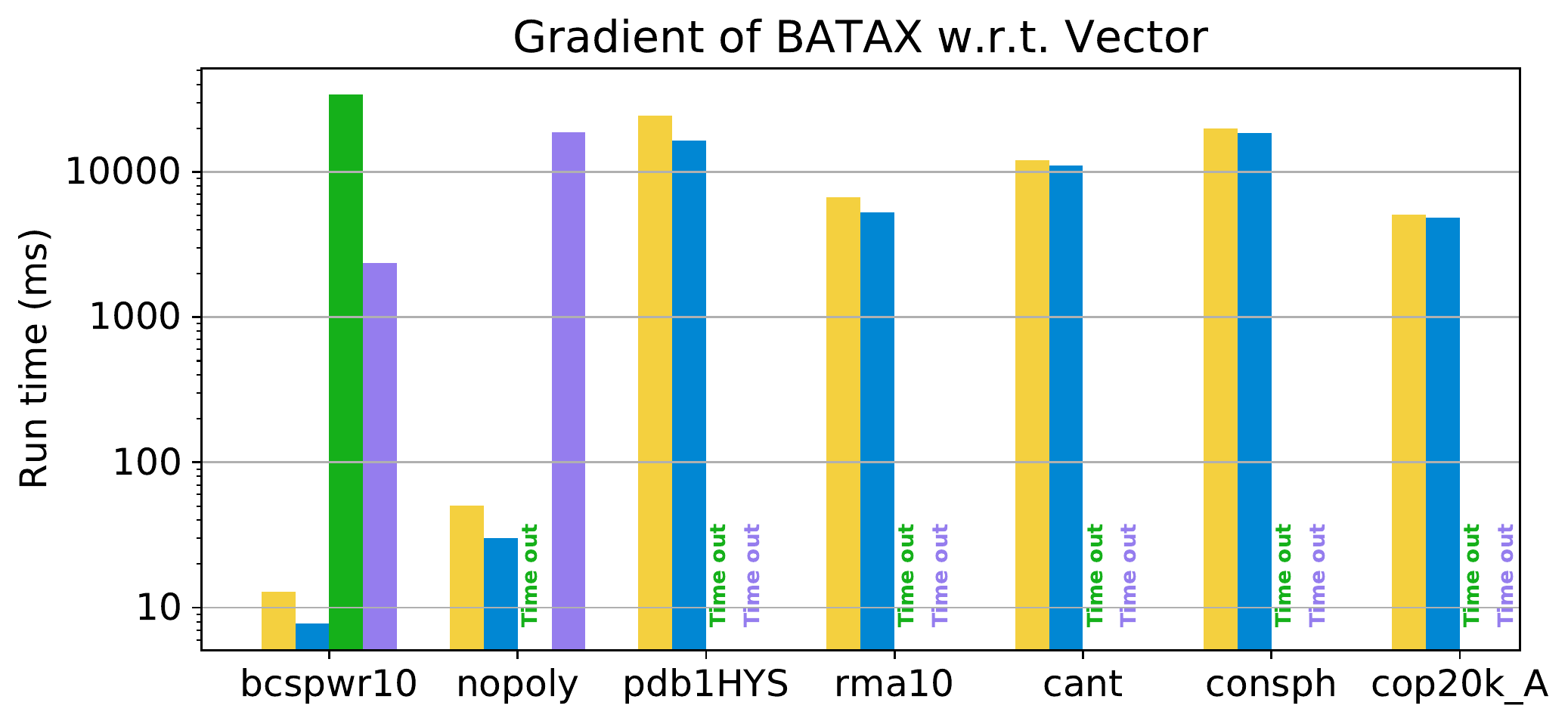} & \includegraphics[width=0.33\linewidth]{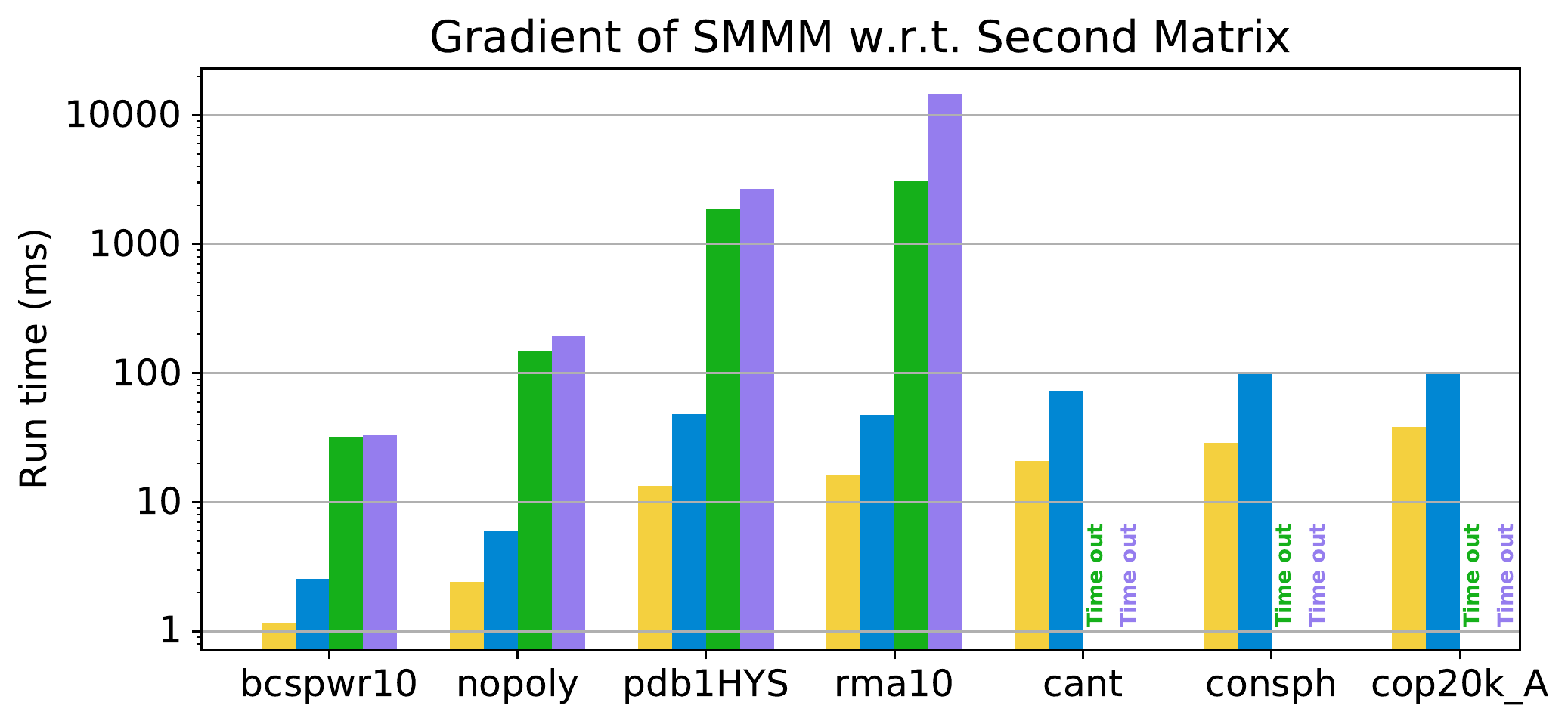} & \includegraphics[width=0.33\linewidth]{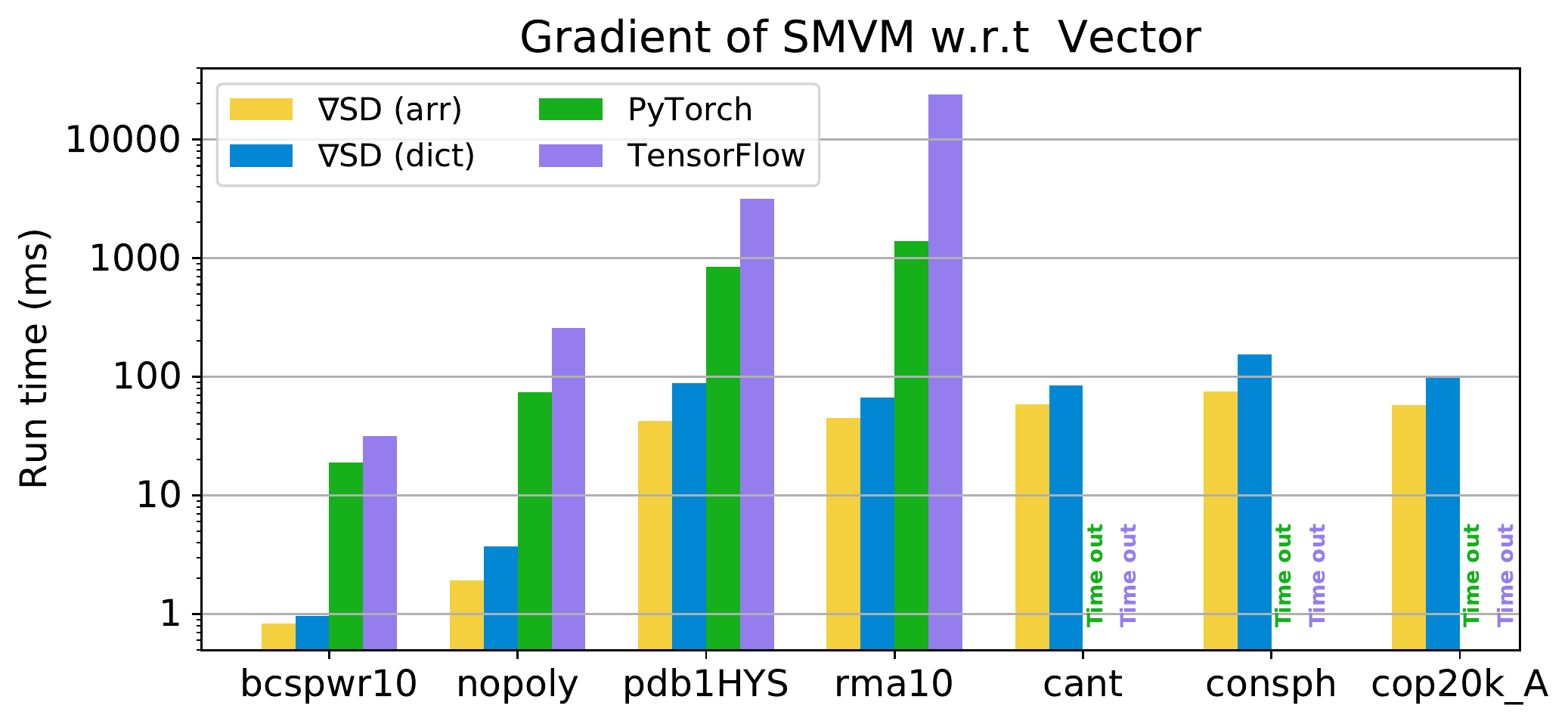}
    \end{tabular} 
    \vspace{-0.5cm}
    \caption{Performance results for differentiation of kernels over real-world sparse matrices. \system{} uses tensorized forward mode over sparse matrices, while TensorFlow and PyTorch use reverse-mode over dense matrices.}
    \label{fig:e2e:exp}
\end{figure}

\subsection{Benchmarks over Real-world Datasets}
In this section, we consider real-world sparse matrices. We compare \system with TensorFlow and PyTorch. Although both systems support sparse operations, none support AD over sparse matrices.

Figure~\ref{fig:e2e:exp} shows the results for the real-world sparse matrices. We make the following observations. First, for all kernels \system{} outperforms TensorFlow and PyTorch thanks to leveraging sparse representations. Second, the generated code by \system that leverages the array-based physical representation can run faster than the version that uses nested dictionaries in SMVM and SMMM. Finally, \textit{\system scales to large sparse matrices}, as opposed to TensorFlow and PyTorch, which do not manage to process such matrices due to storing the entire matrix including the zero elements. 

\subsection{Benchmarks over Synthetic Datasets}
In this section, we use synthetic datasets to better analyse the impact of the sparsity and dimensions. We consider two scenarios: (1) fixing the dimension and varying the sparsity, and (2) fixing the sparsity and varying the dimension.

\begin{figure}[t]
\setlength\tabcolsep{.5pt}
    \centering
    \begin{tabular}{ccc}
         \includegraphics[width=0.33\linewidth]{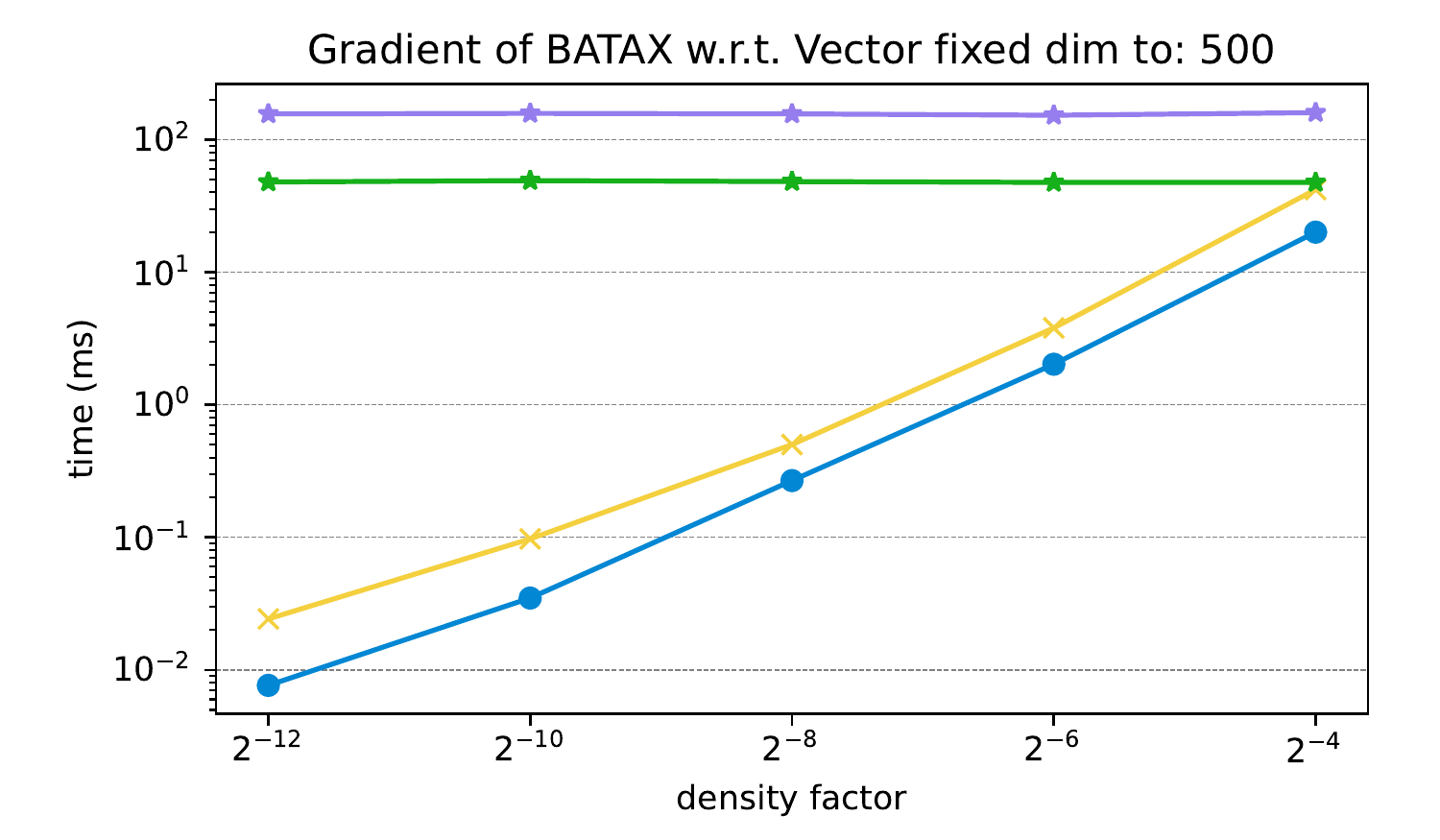} & \includegraphics[width=0.33\linewidth]{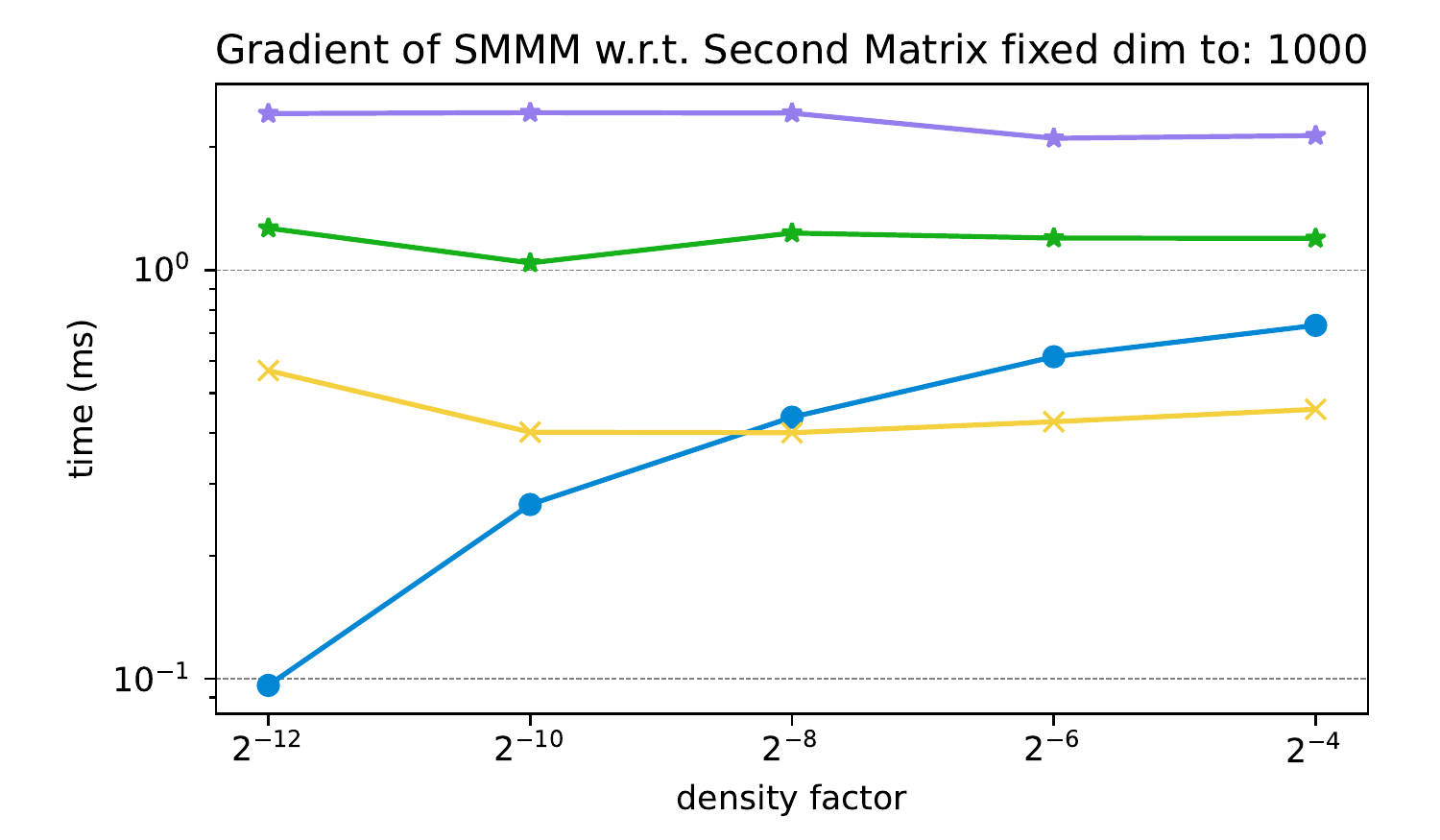} & \includegraphics[width=0.33\linewidth]{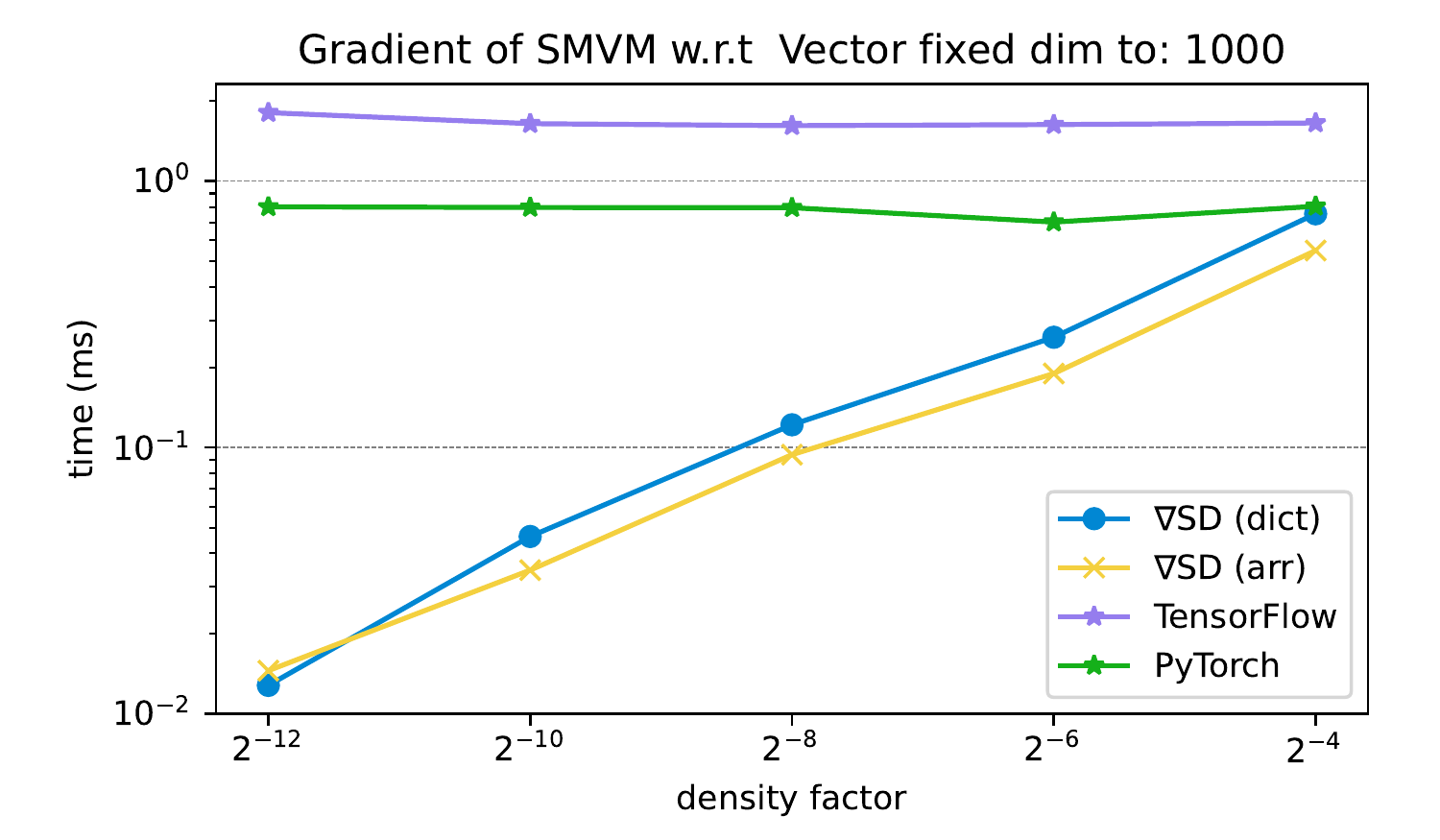} \\
         \includegraphics[width=0.33\linewidth]{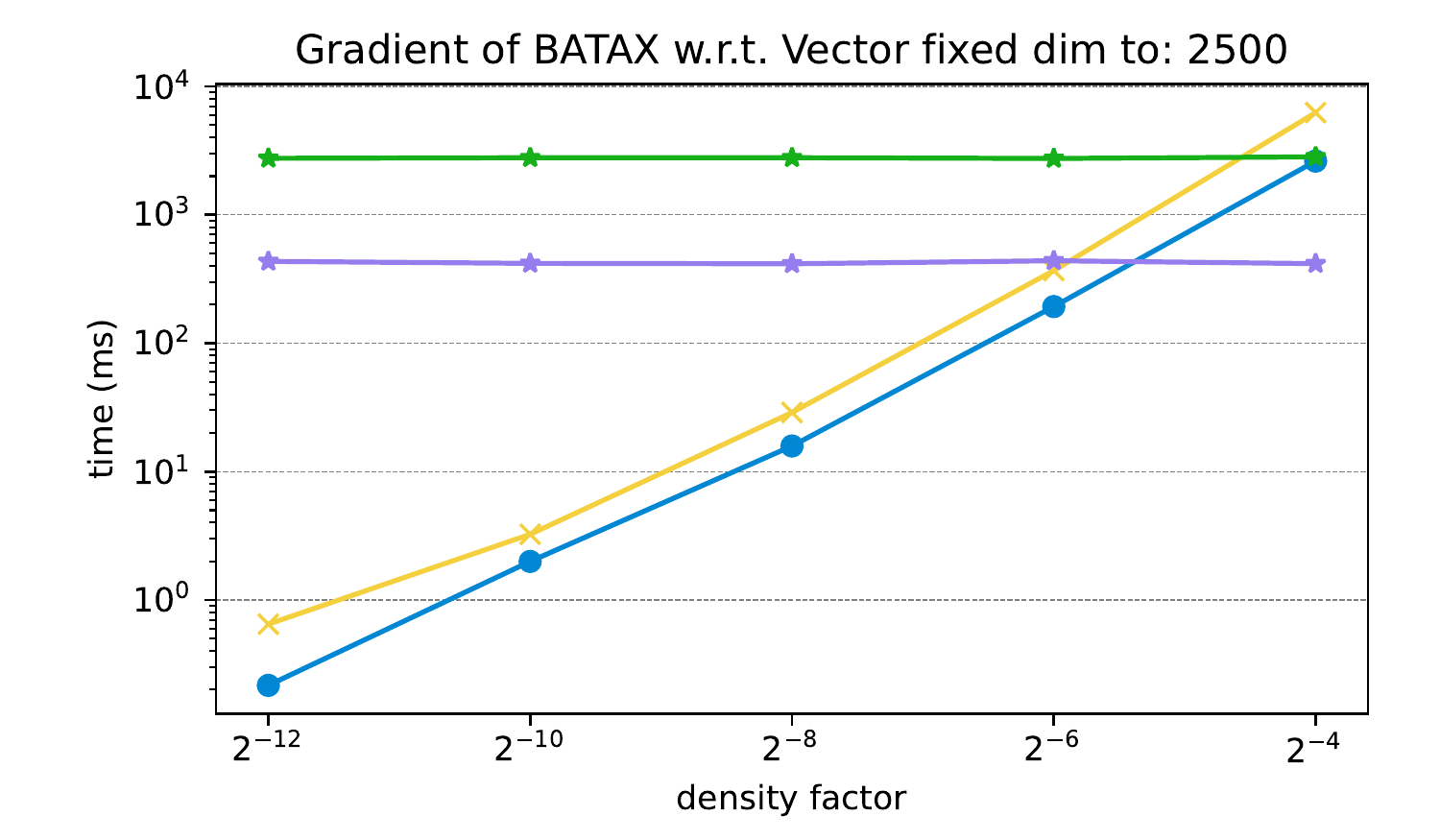} & \includegraphics[width=0.33\linewidth]{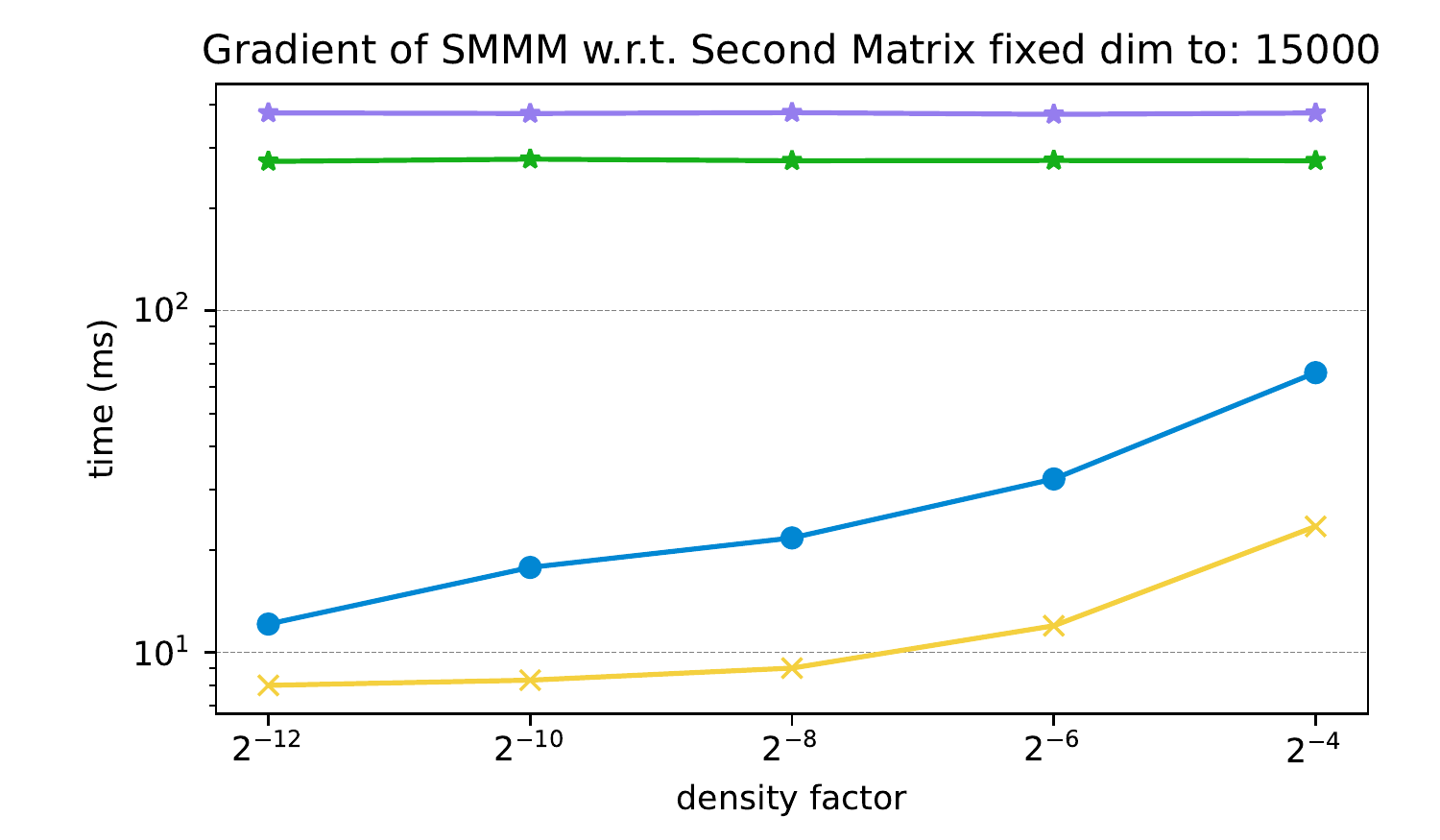} & 
         \includegraphics[width=0.33\linewidth]{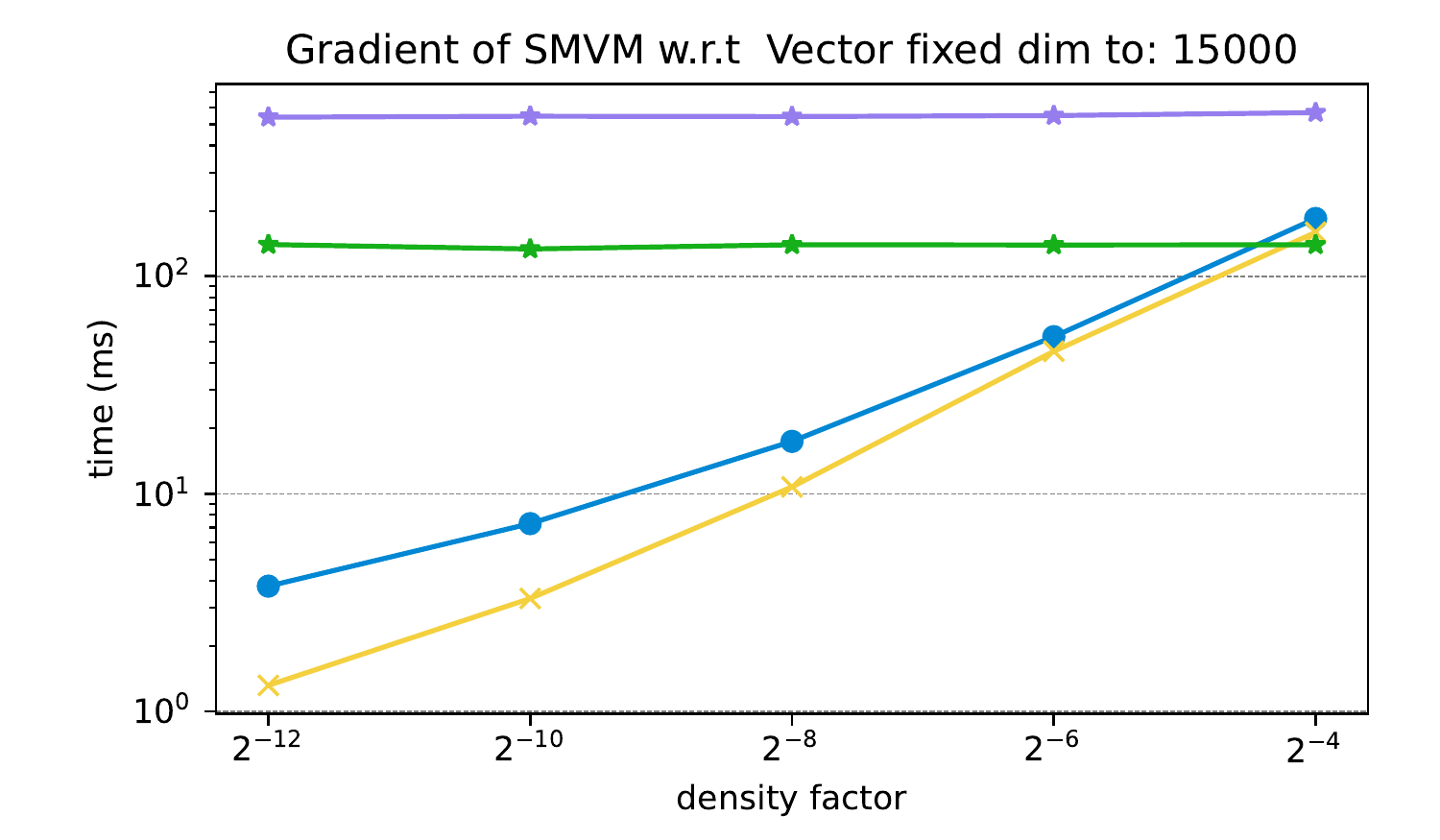}
    \end{tabular} 
    \vspace{-0.5cm}
    \caption{Performance results for differentiation of different sparse matrix kernels by varying sparsity.}
    \label{fig:matrixkernels:sp}
\end{figure}

\begin{figure}[t]
\setlength\tabcolsep{.5pt}
    \centering
    \begin{tabular}{ccc}
         \includegraphics[width=0.33\linewidth]{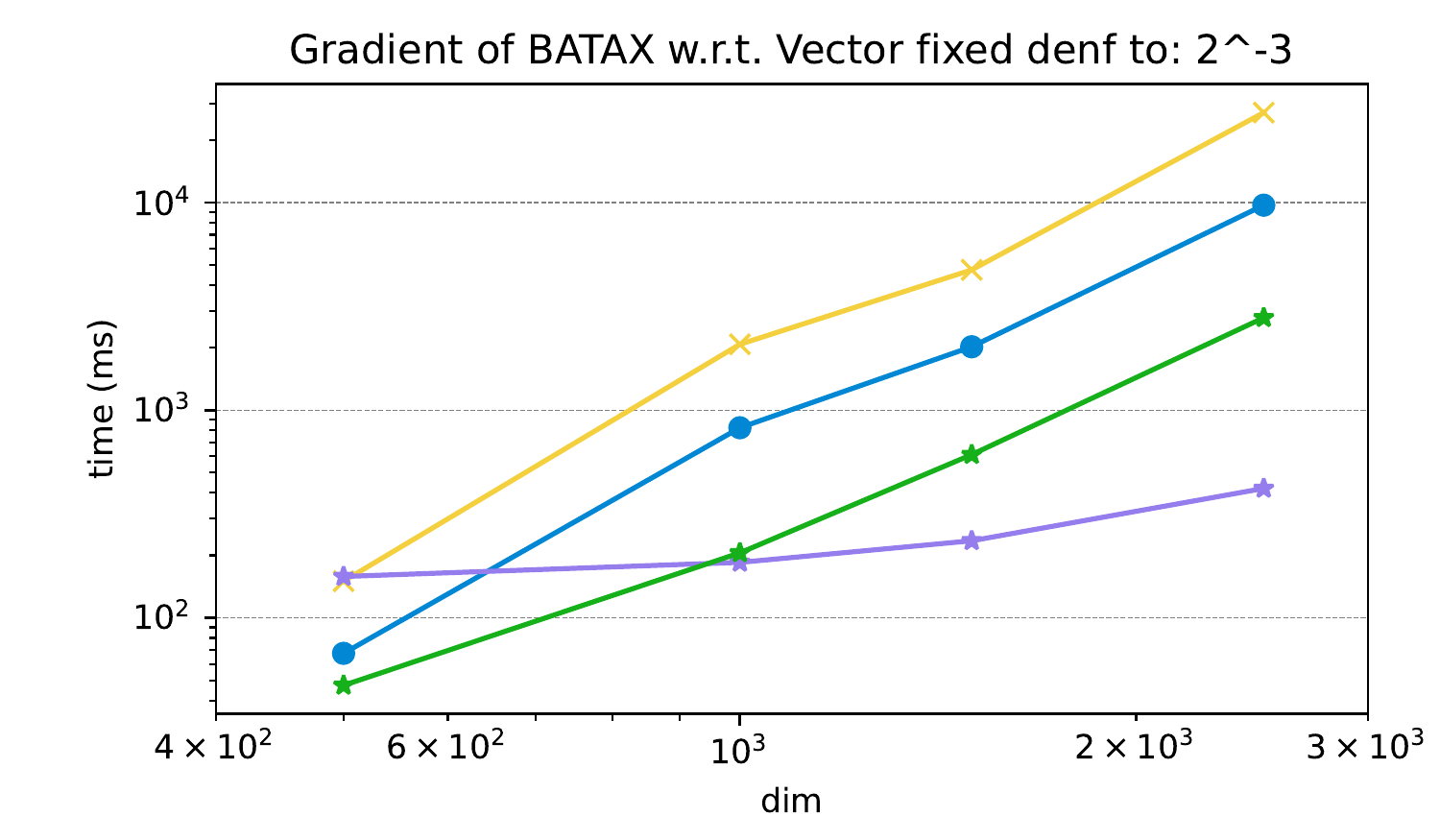} & \includegraphics[width=0.33\linewidth]{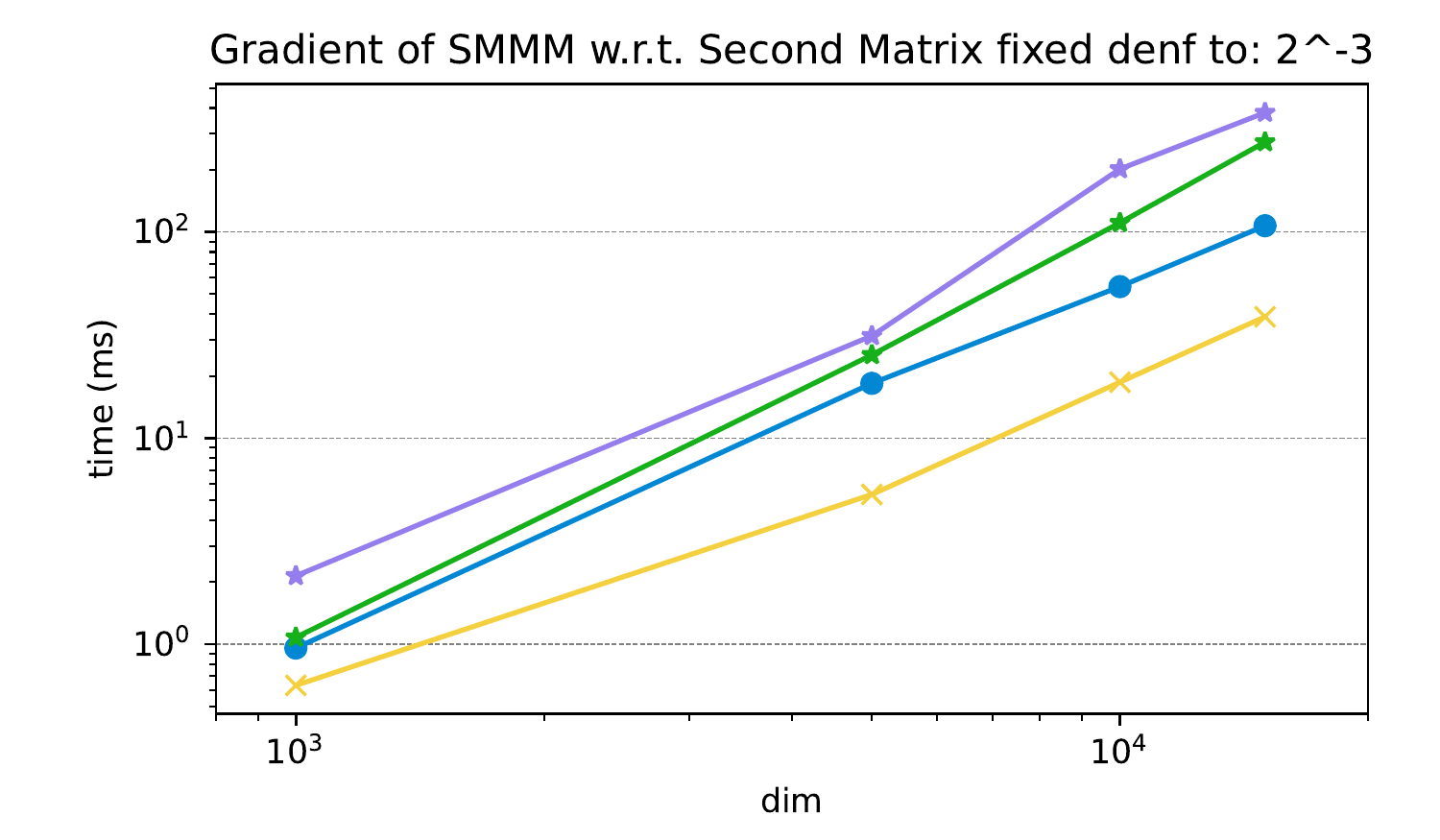} & \includegraphics[width=0.33\linewidth]{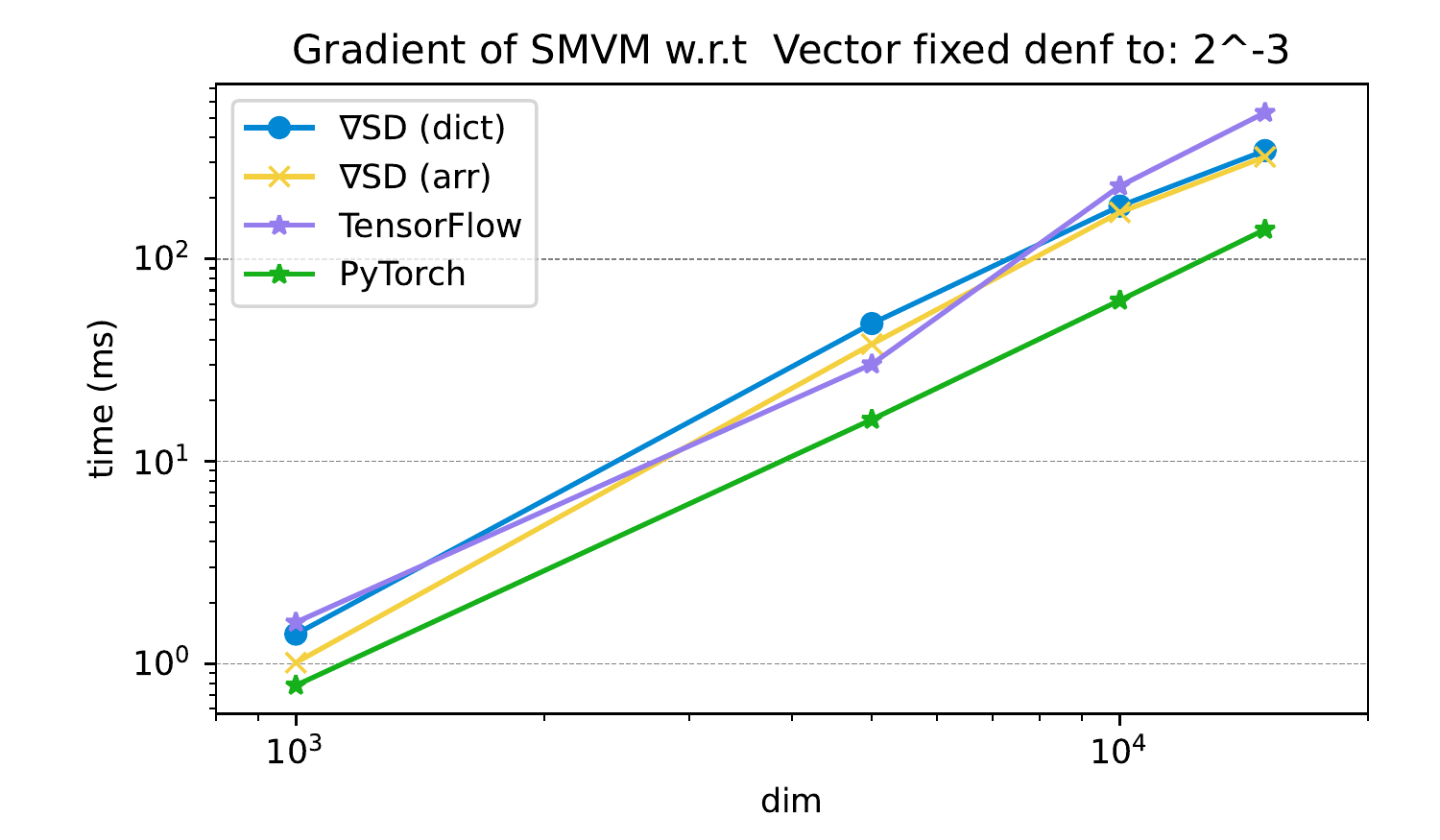} \\
         \includegraphics[width=0.33\linewidth]{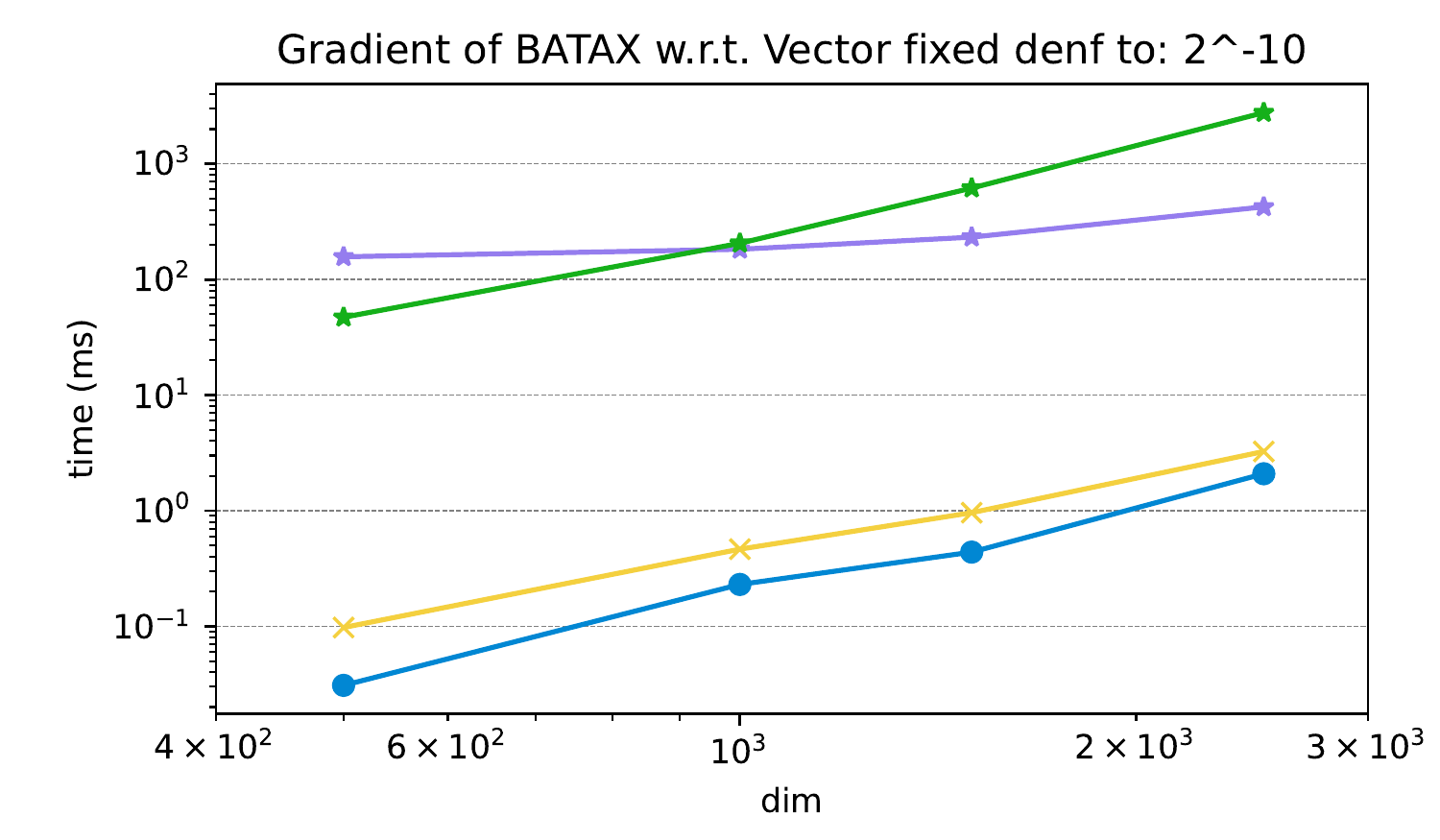} & \includegraphics[width=0.33\linewidth]{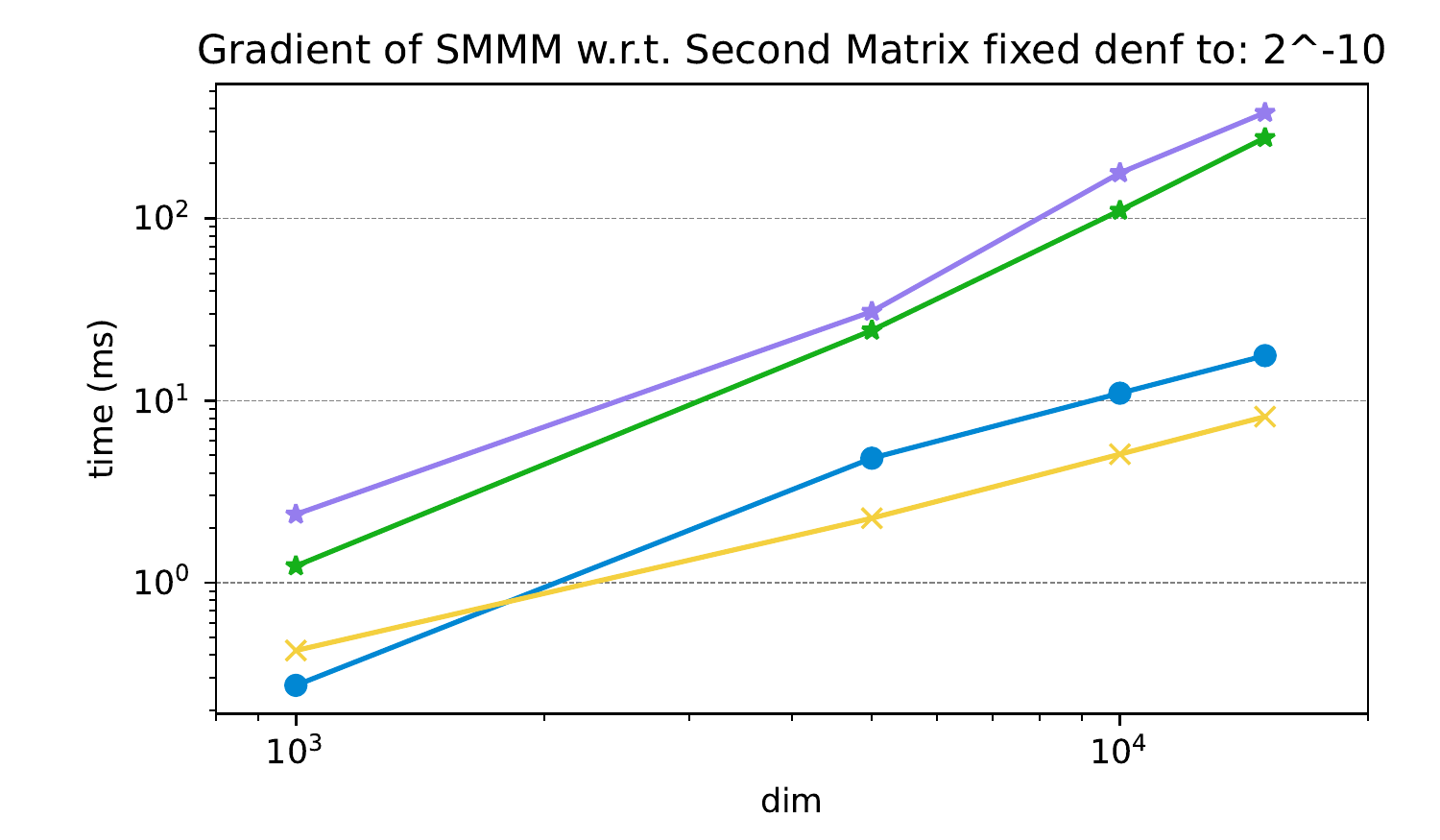} & 
         \includegraphics[width=0.33\linewidth]{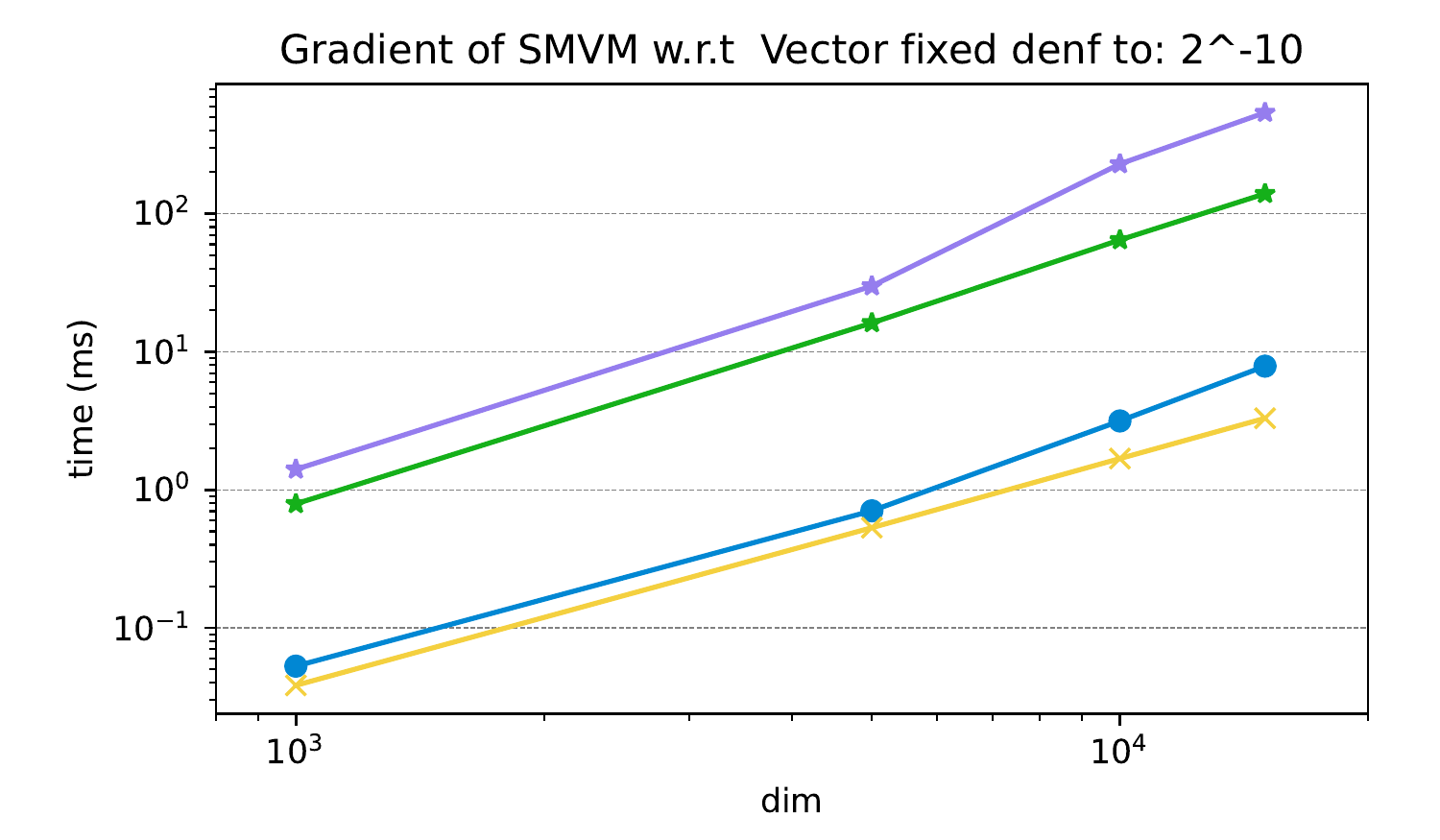}
    \end{tabular} 
    \vspace{-0.5cm}
    \caption{Performance results for differentiation of different sparse matrix kernels by varying dimensions.}
    \label{fig:matrixkernels:dim}
\end{figure}

\smartpara{Matrix Kernels} Figure~\ref{fig:matrixkernels:sp} shows the results for matrix kernels by varying the sparsity for small and large matrices. We make the following observations. First, for low densities, which are in the same range as the real-world datasets, we see a clear advantage for \system{} over TensorFlow and PyTorch. Second, the performance of TensorFlow and PyTorch is not dependent on the sparsity, as expected. As the matrices get denser, the gap between these two frameworks and \system{} becomes smaller. Finally, for the BATAX kernel, there is an advantage for the dictionary-based representation over the array-based one. However, for the other two kernels, especially for larger matrices, the array-based representation performs better.

Figure~\ref{fig:matrixkernels:dim} shows the results by varying the dimension for two different sparsities. Matrices with high densities show a better performance for TensorFlow and PyTorch over \system{}. The gap even widens for larger dimensions. However, we observe the opposite impact for a lower density.

\begin{figure}
\setlength\tabcolsep{.5pt}
    \centering
    \begin{tabular}{ccc}
         \includegraphics[width=0.33\linewidth]{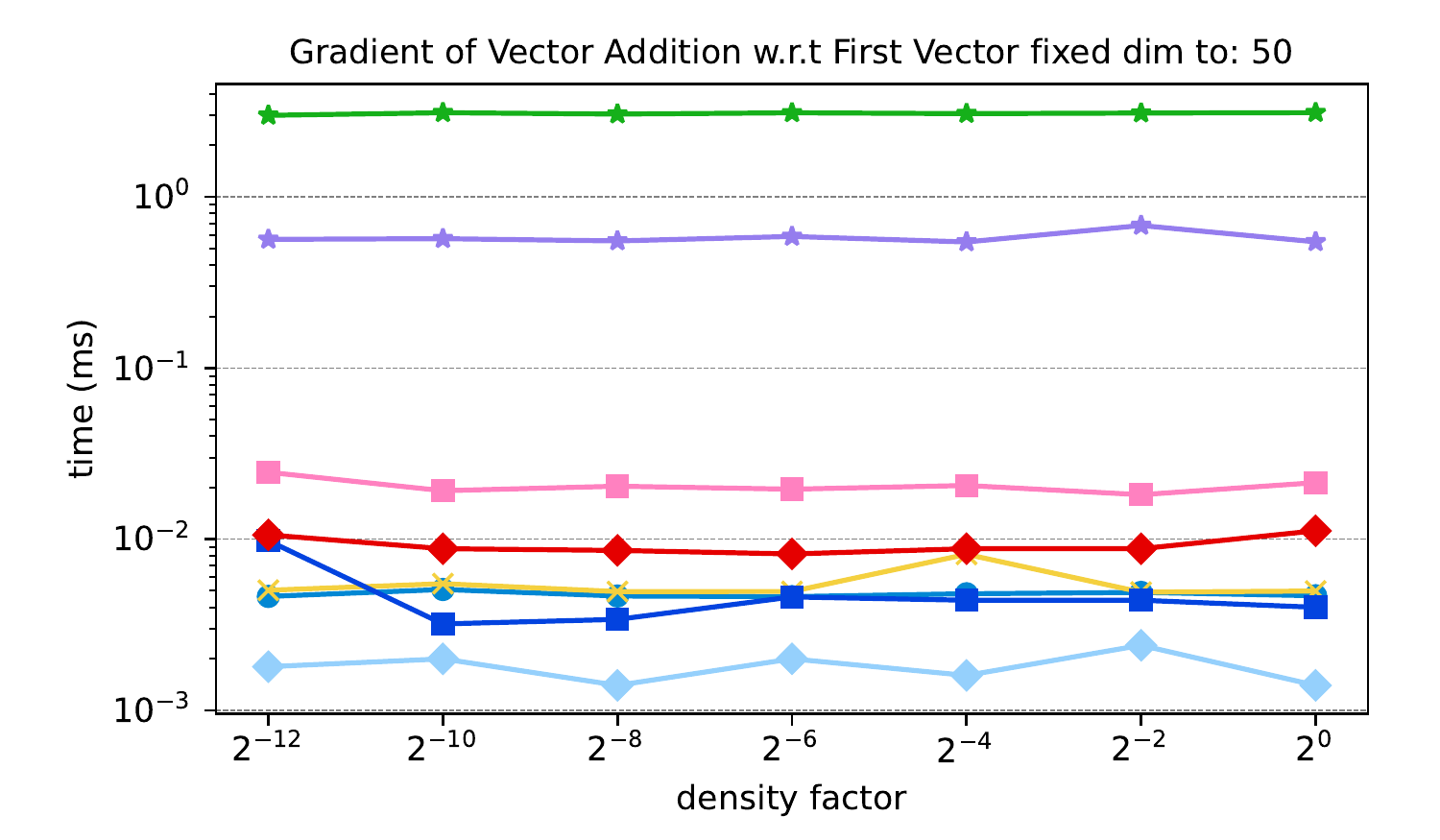} & \includegraphics[width=0.33\linewidth]{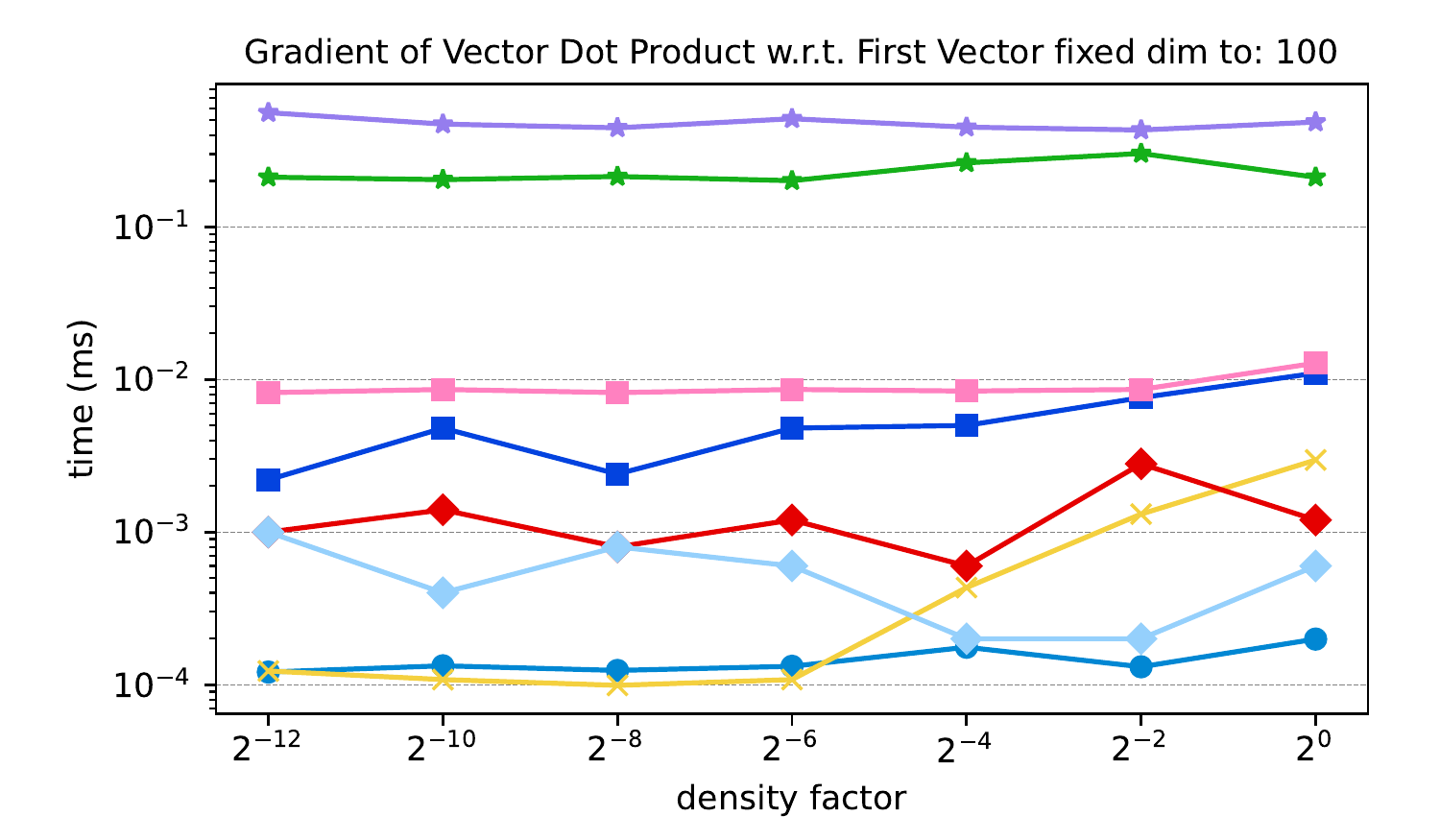} & \includegraphics[width=0.33\linewidth]{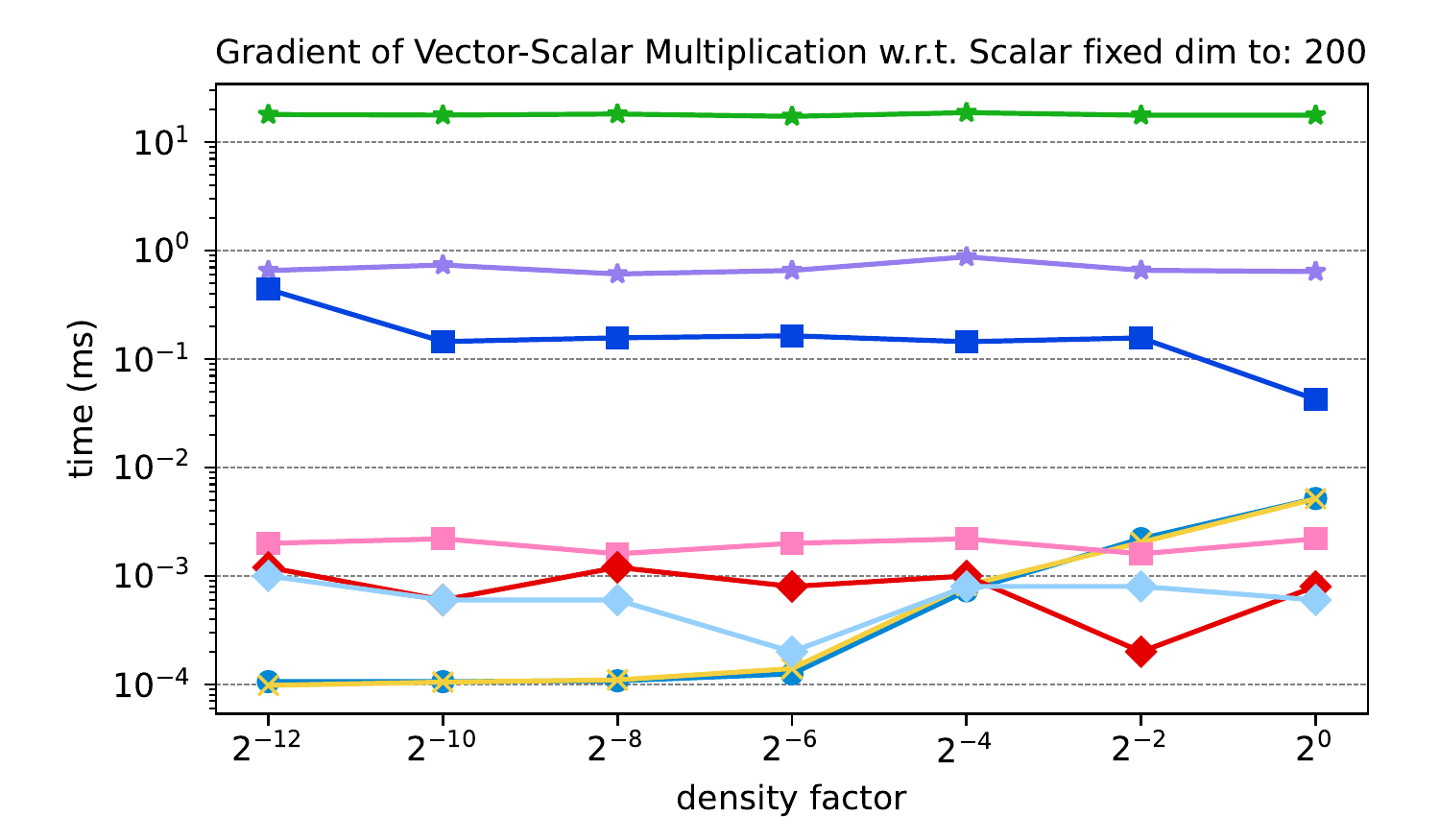} \\
         \includegraphics[width=0.33\linewidth]{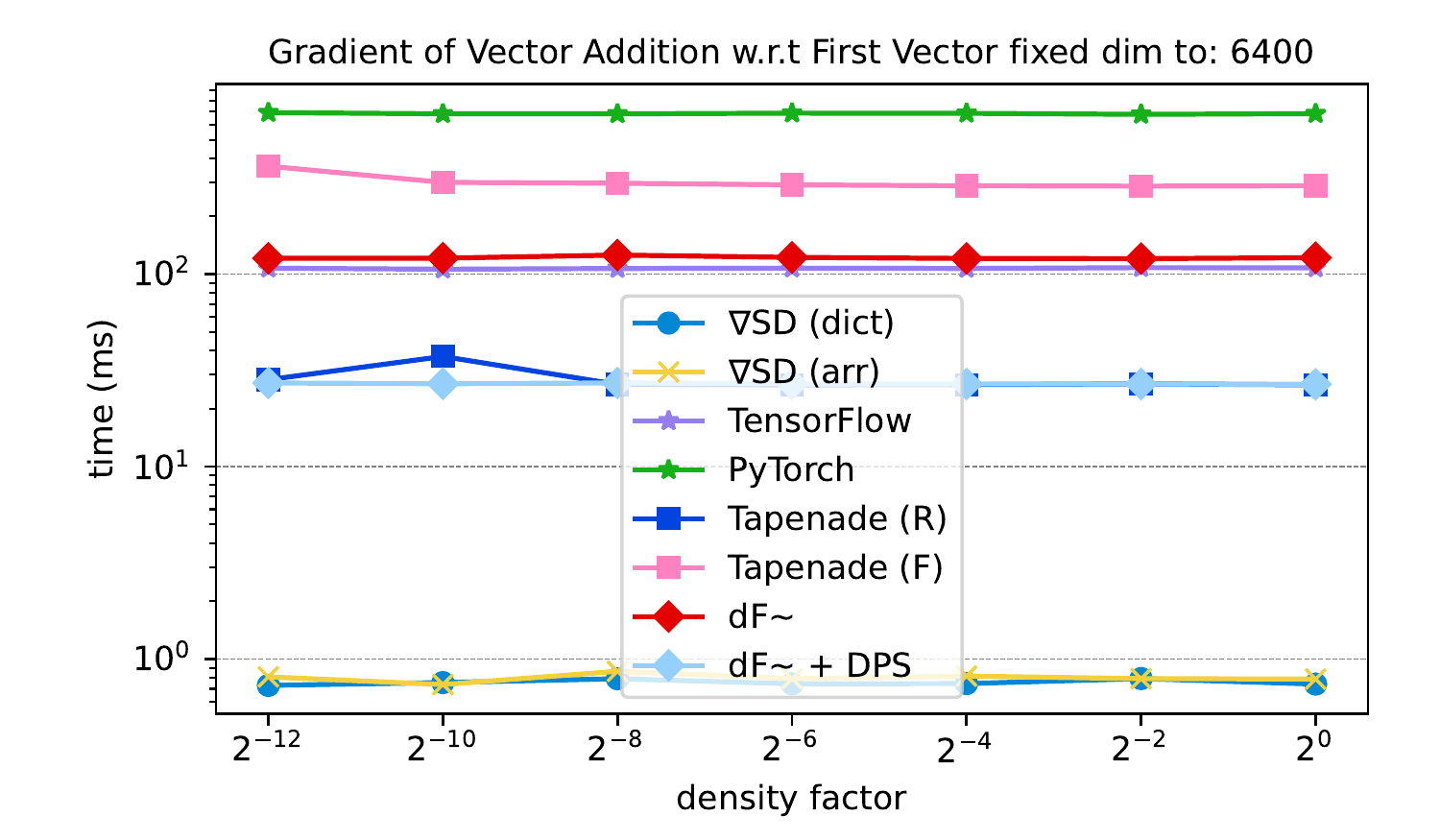} & \includegraphics[width=0.33\linewidth]{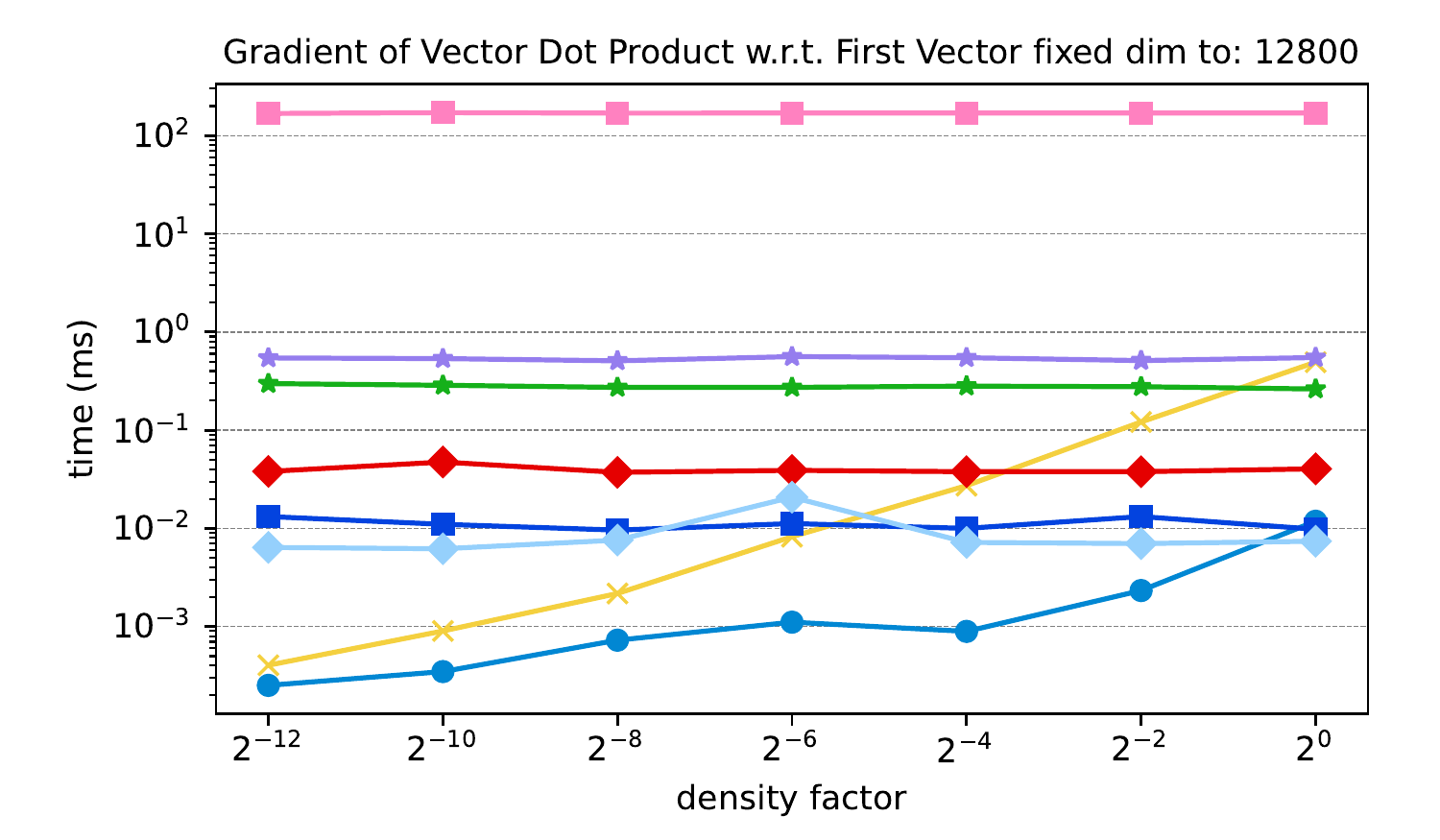} & 
         \includegraphics[width=0.33\linewidth]{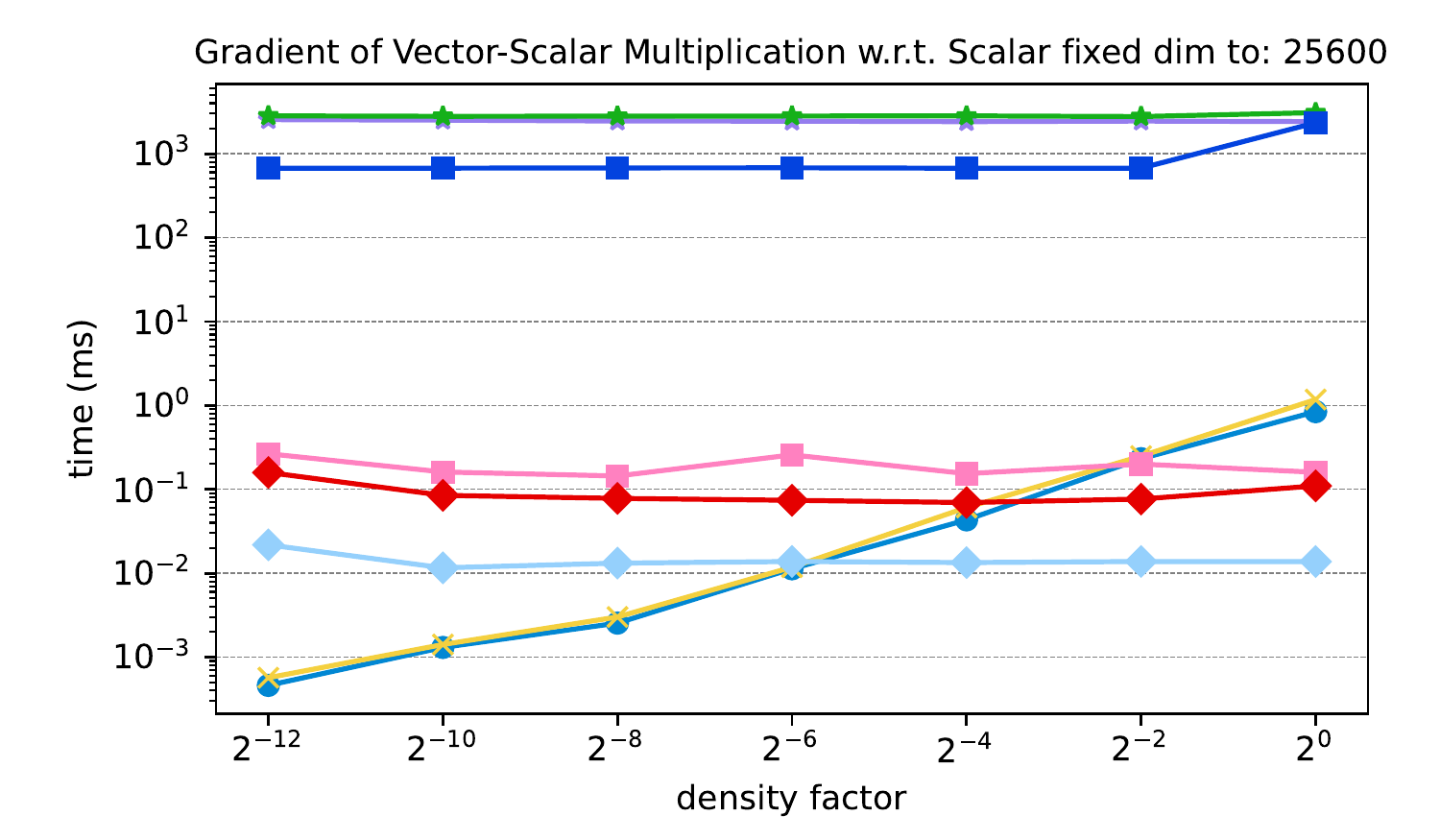}
    \end{tabular} 
    \vspace{-0.5cm}
    \caption{Performance results for differentiation of different sparse vector kernels by varying sparsity.}
    \label{fig:vectorkernels:sp}
\end{figure}

\begin{figure}
\setlength\tabcolsep{.5pt}
    \centering
    \begin{tabular}{ccc}
        \includegraphics[width=0.33\linewidth]{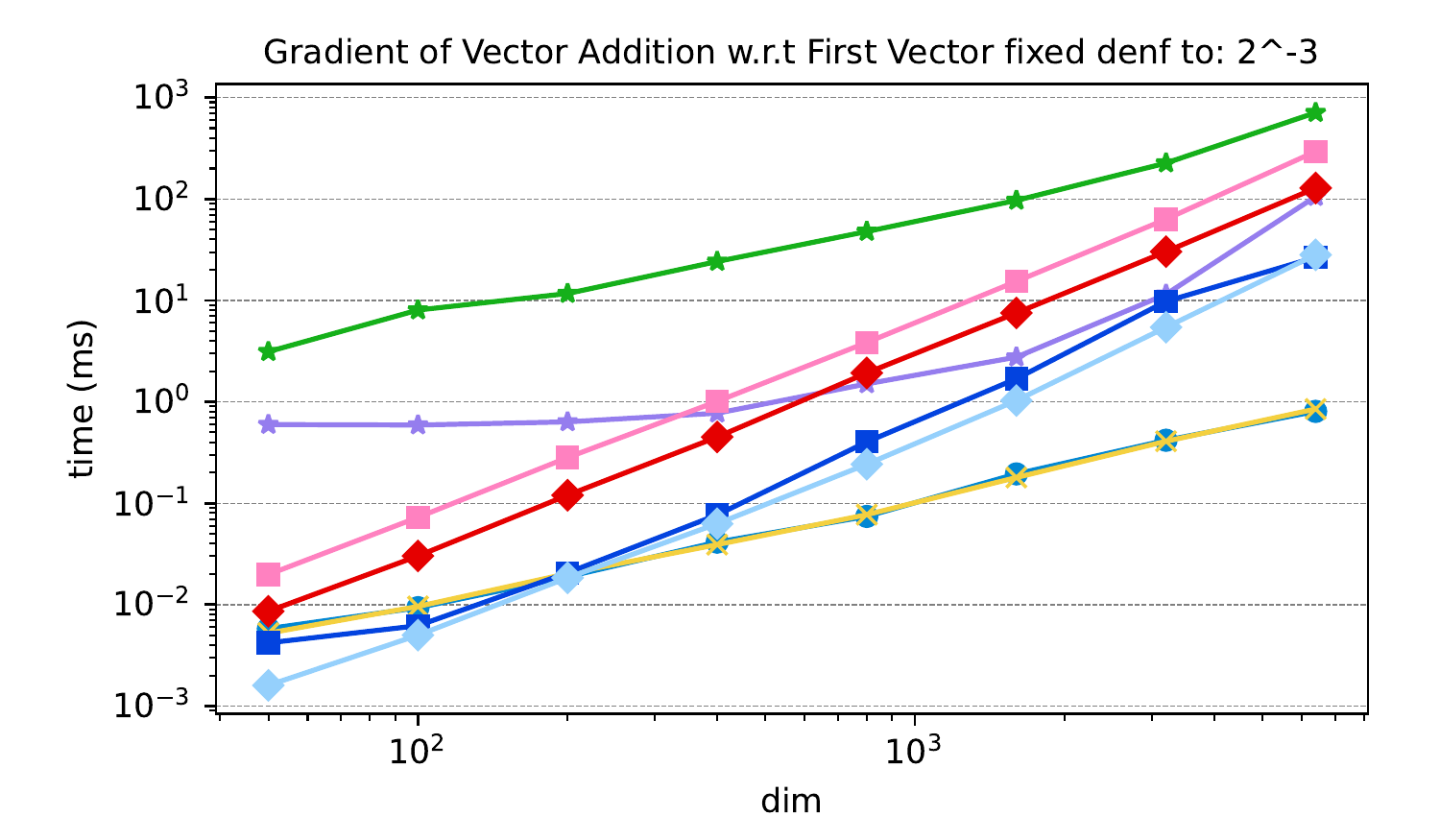} & \includegraphics[width=0.33\linewidth]{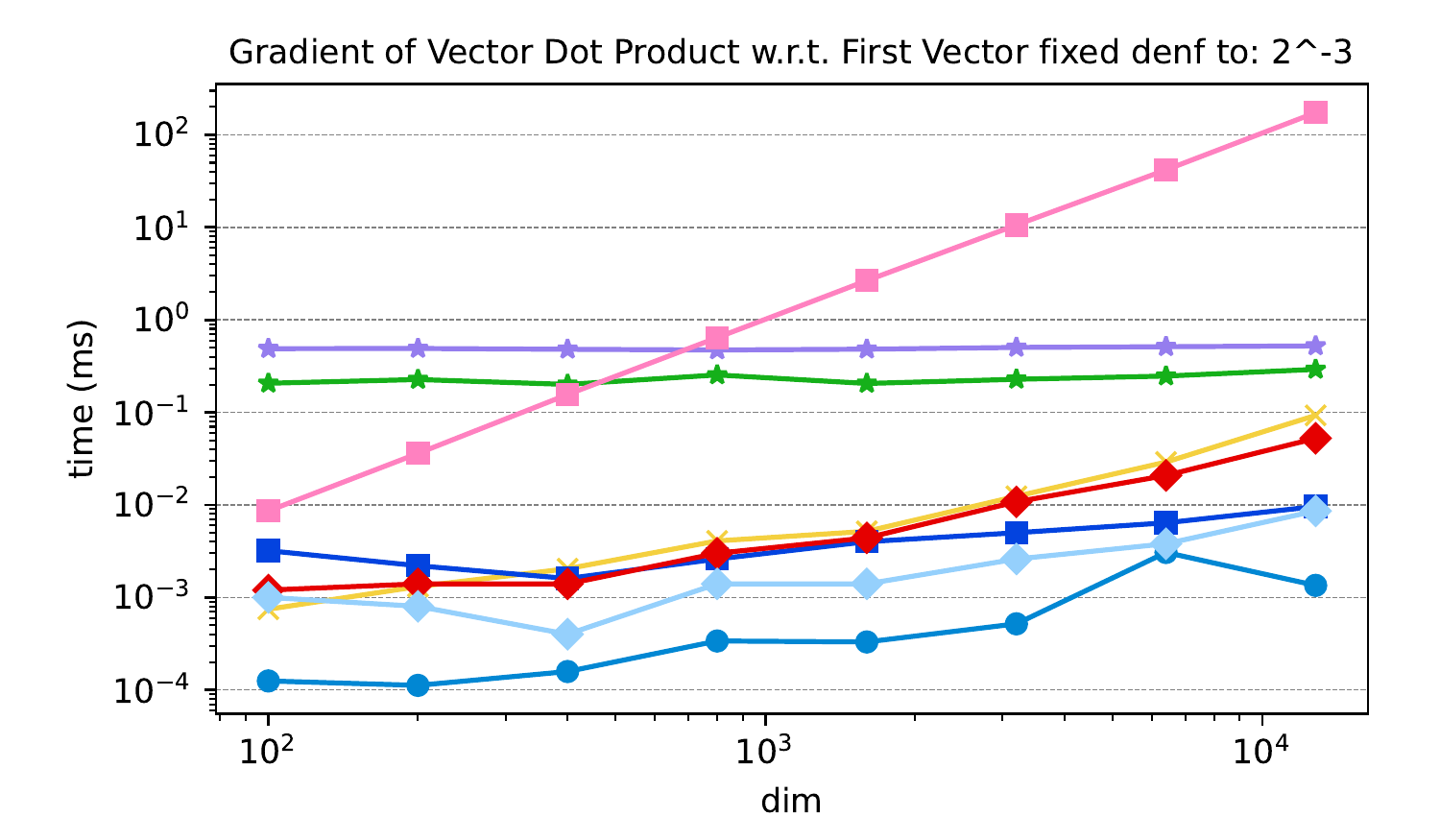} & \includegraphics[width=0.33\linewidth]{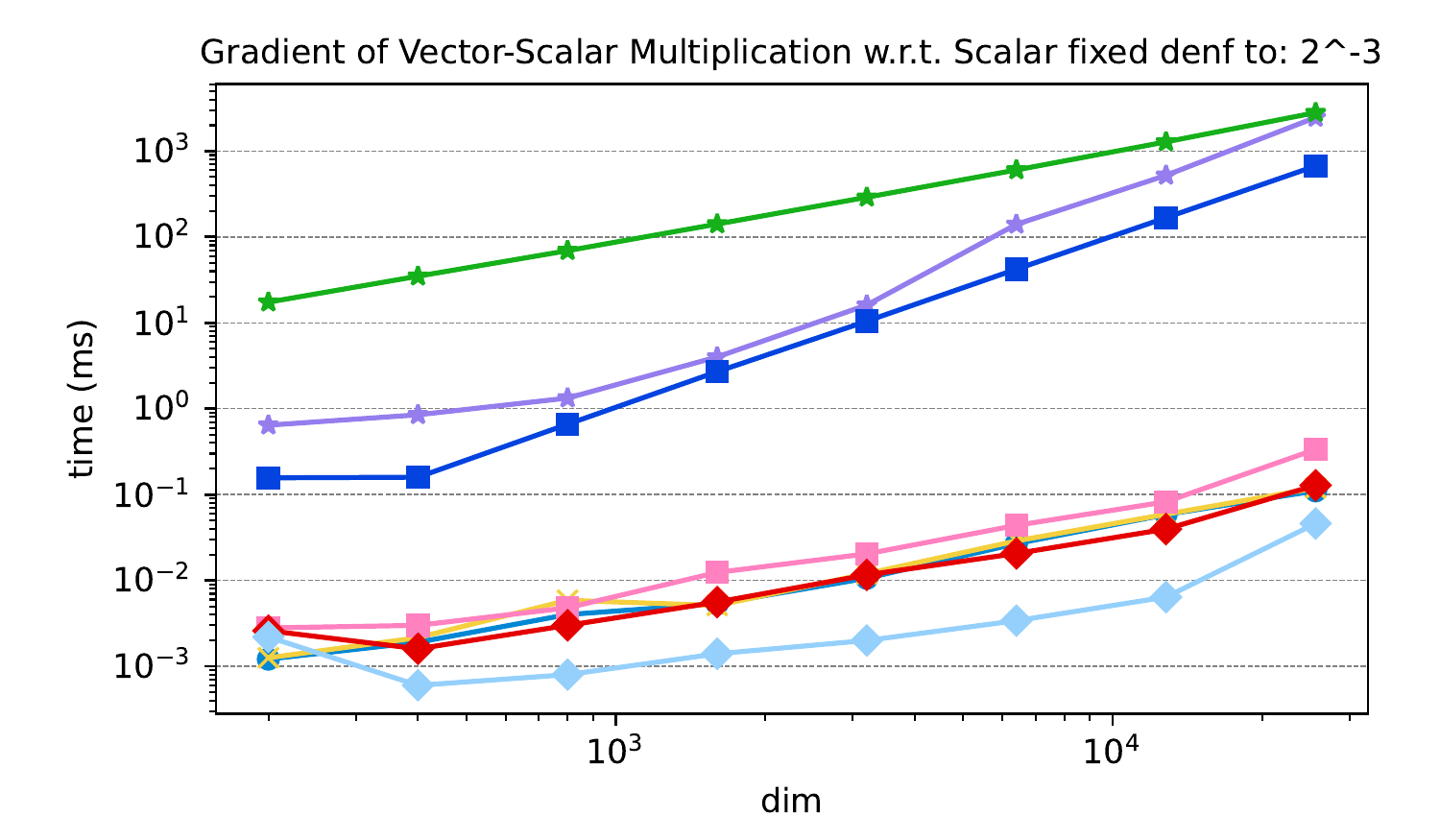} \\
         \includegraphics[width=0.33\linewidth]{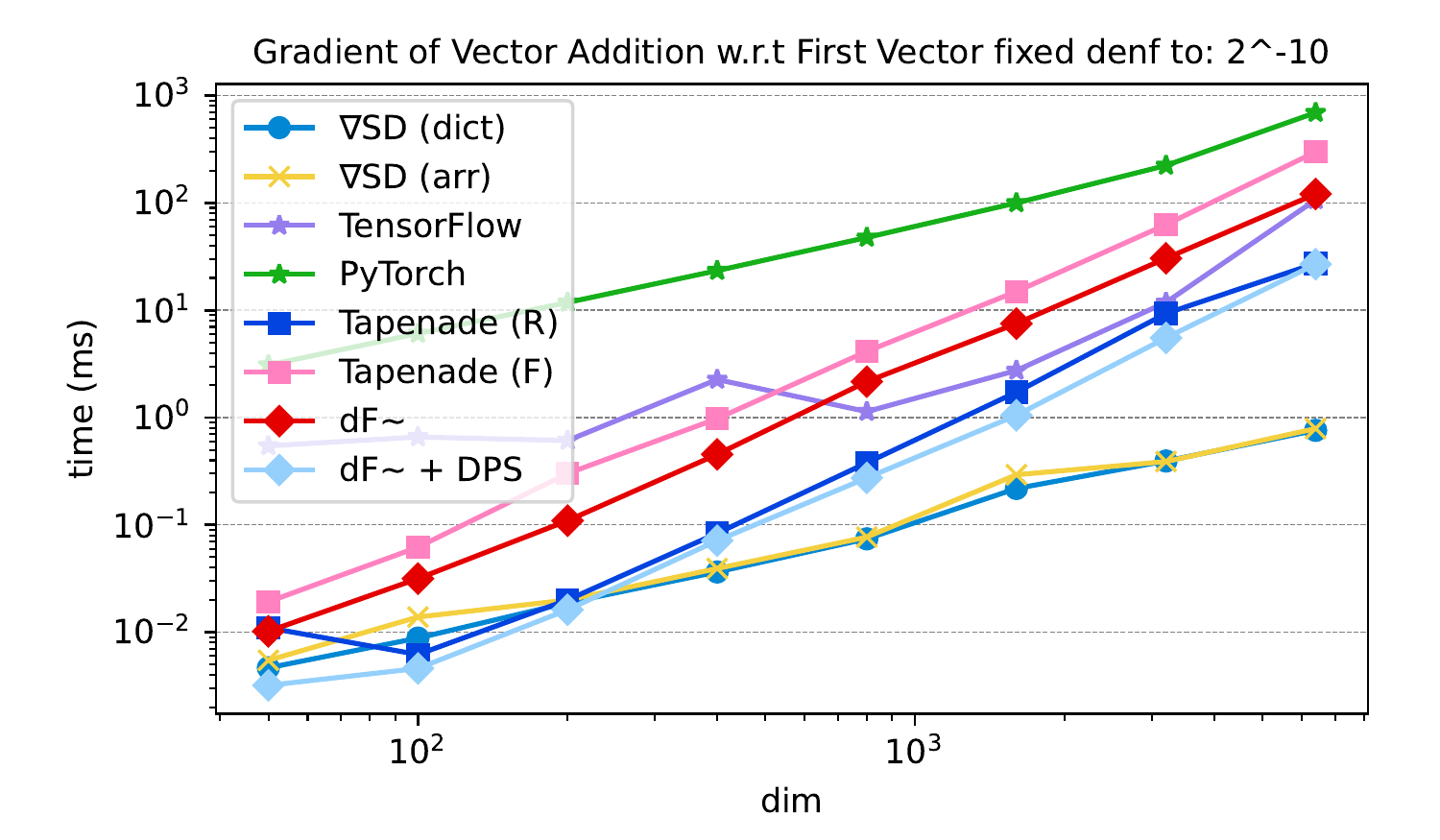} & \includegraphics[width=0.33\linewidth]{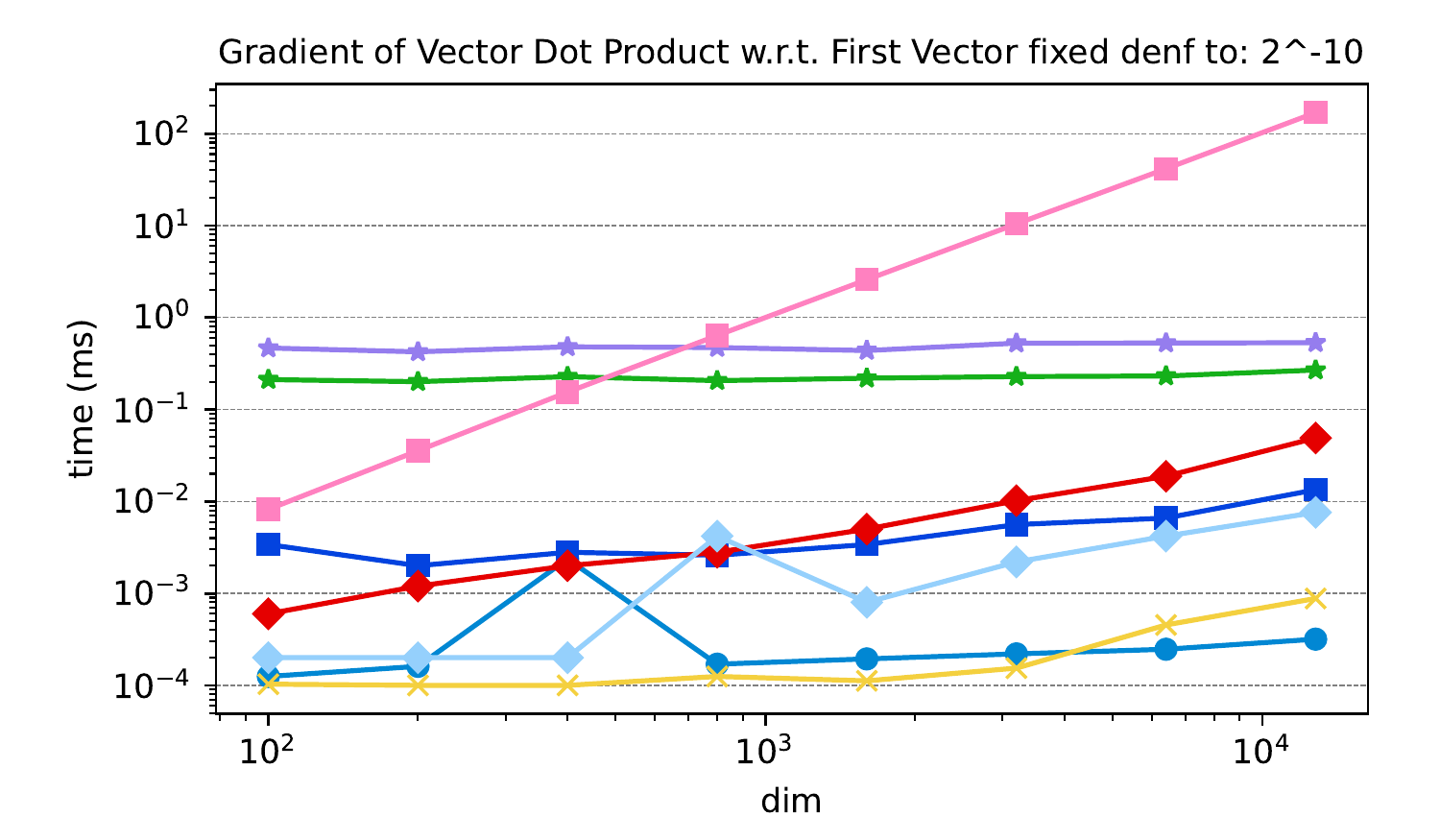} & \includegraphics[width=0.33\linewidth]{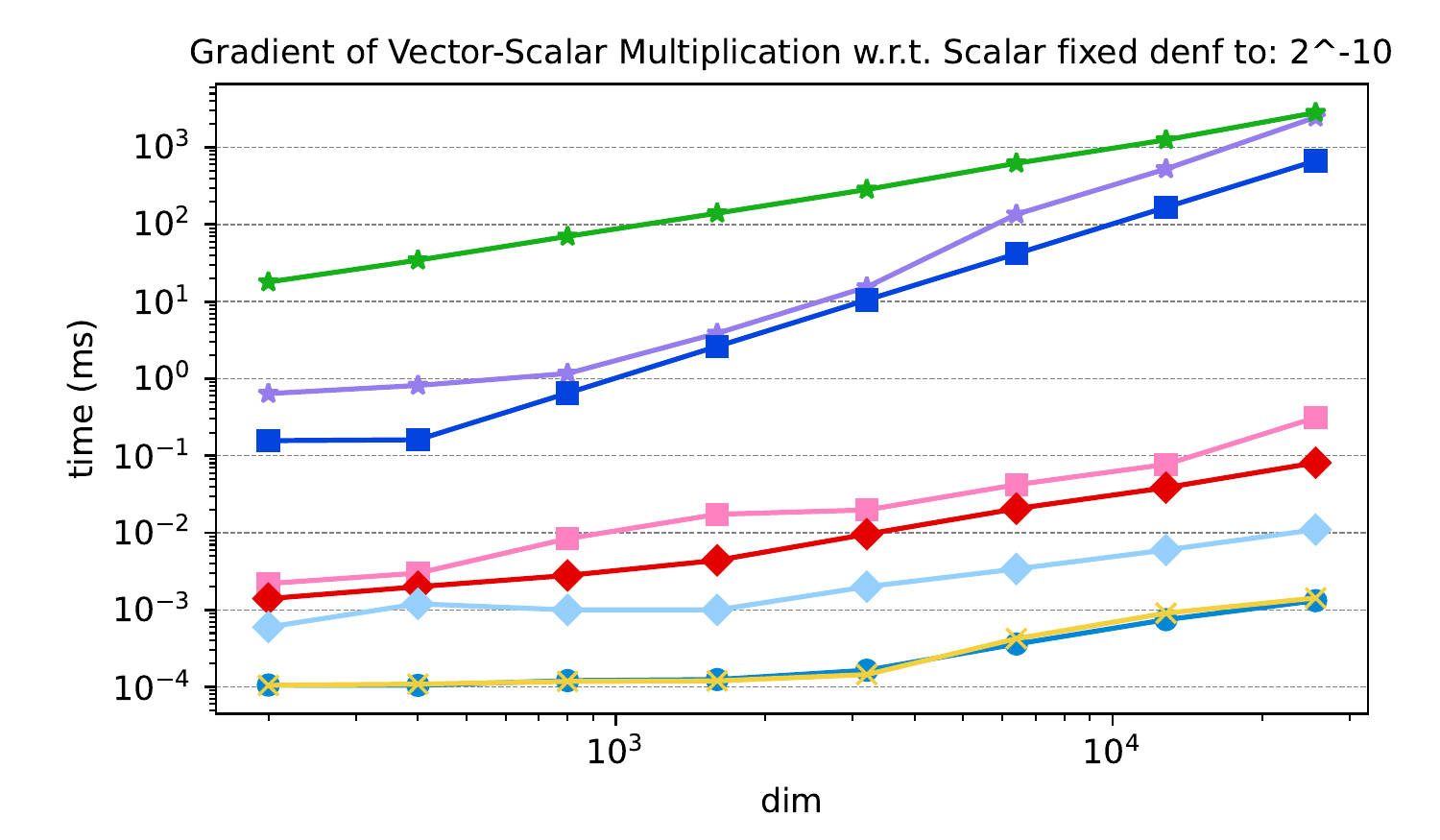} \\
    \end{tabular} 
    \vspace{-0.5cm}
    \caption{Performance results for differentiation of different sparse vector kernels by varying dimensions.}
    \label{fig:vectorkernels:dim}
    \vspace{-0.3cm}
\end{figure}

\smartpara{Vector Kernels} Figure~\ref{fig:vectorkernels:sp} and Figure~\ref{fig:vectorkernels:dim} show the results for vector kernels by varying the sparsity and density. For smaller dimensions, there is no clear advantage for \system{}; for VVA \dfsmooth performs better than \system{}. However, for larger dimensions, we observe a clear advantage for \system{}, especially for matrices with lower densities. For VSM, we observe a clear advantage for forward-mode-based systems over reverse-mode-based ones; for lower densities \system{} outperforms the rest, whereas for higher densities \dfsmooth{} is the most performant system.

\begin{figure}
\setlength\tabcolsep{.5pt}
    \centering
    \begin{tabular}{cc}
        \includegraphics[width=0.48\linewidth]{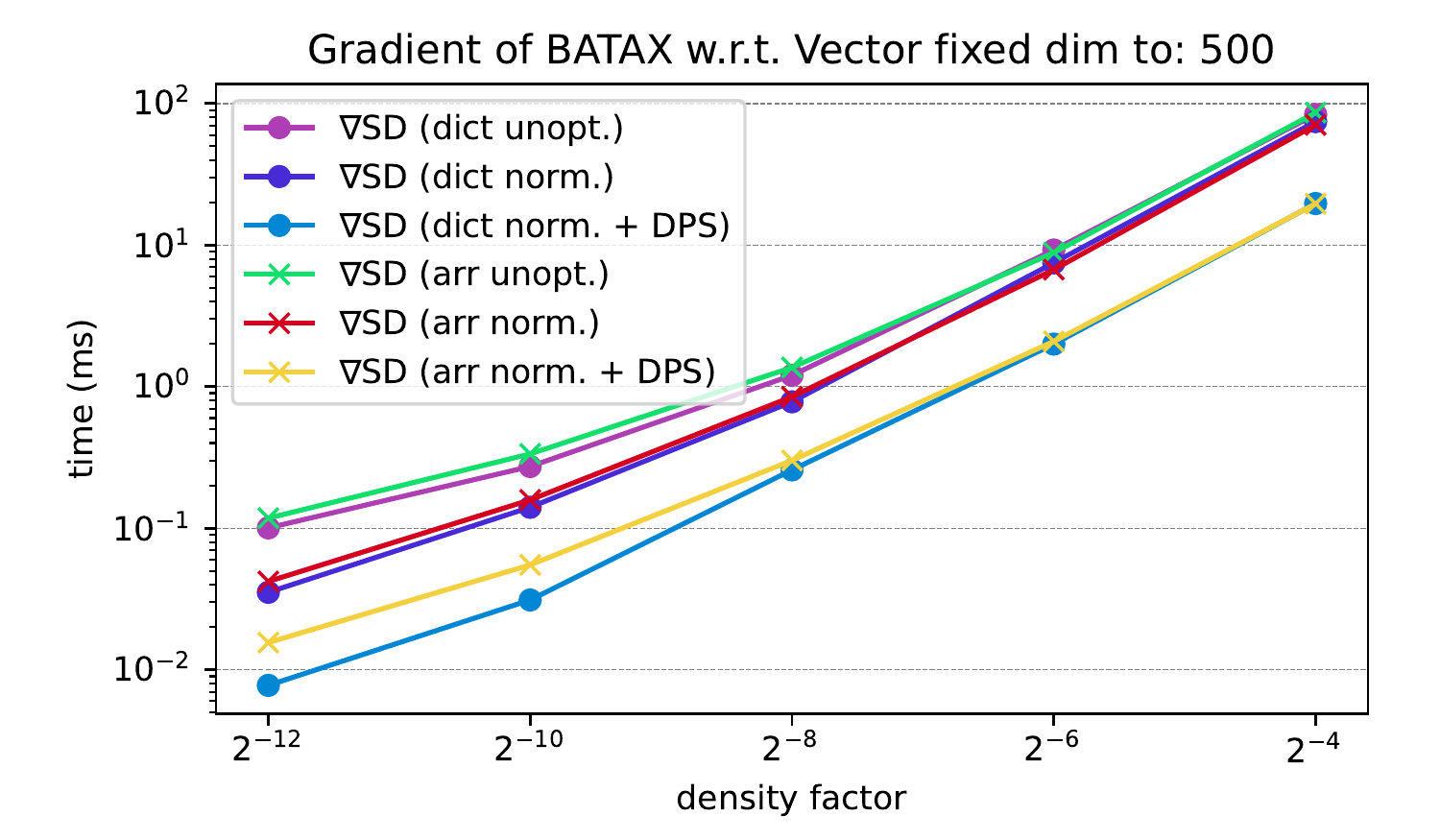} & \includegraphics[width=0.48\linewidth]{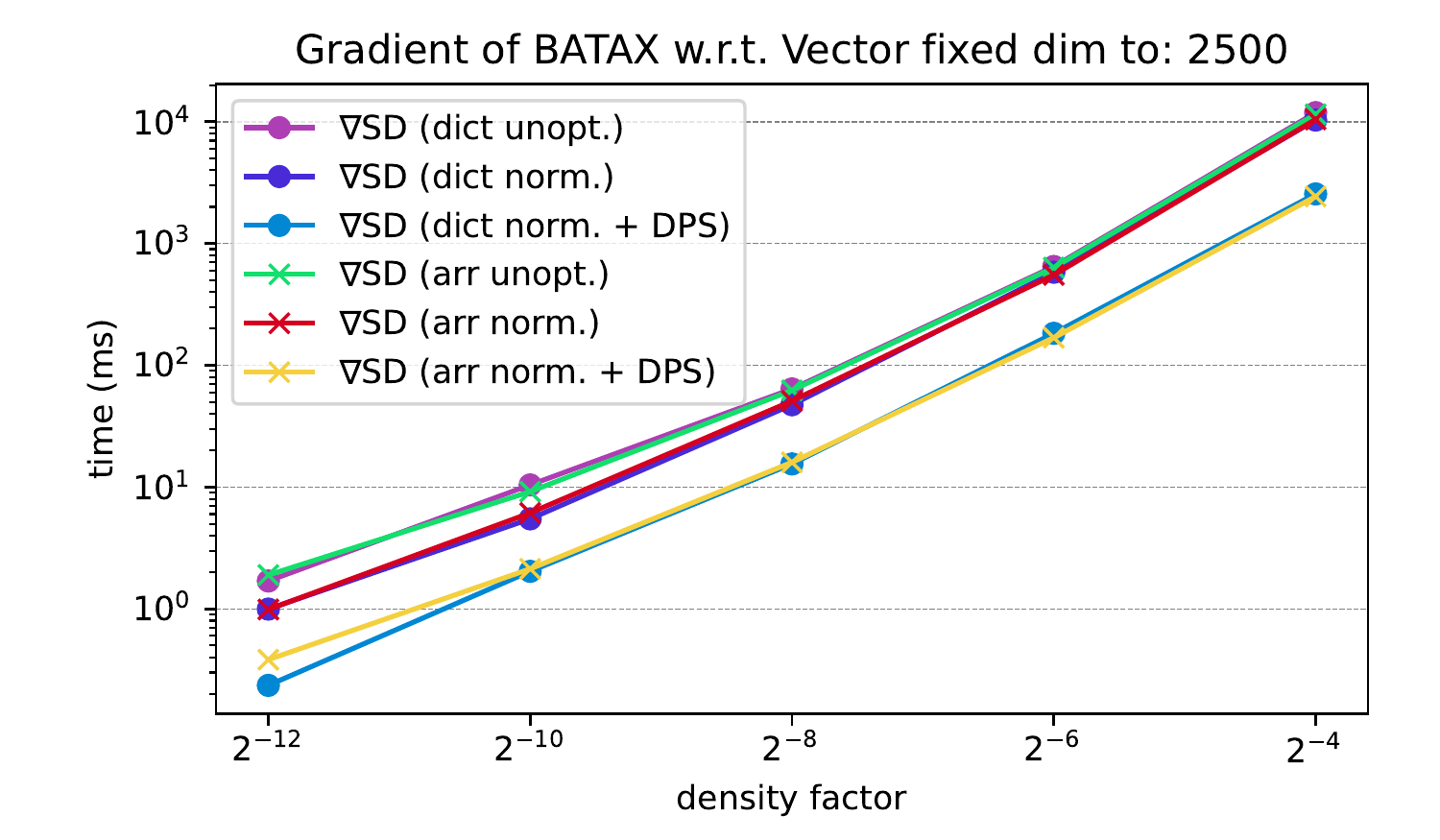} \\
        \includegraphics[width=0.48\linewidth]{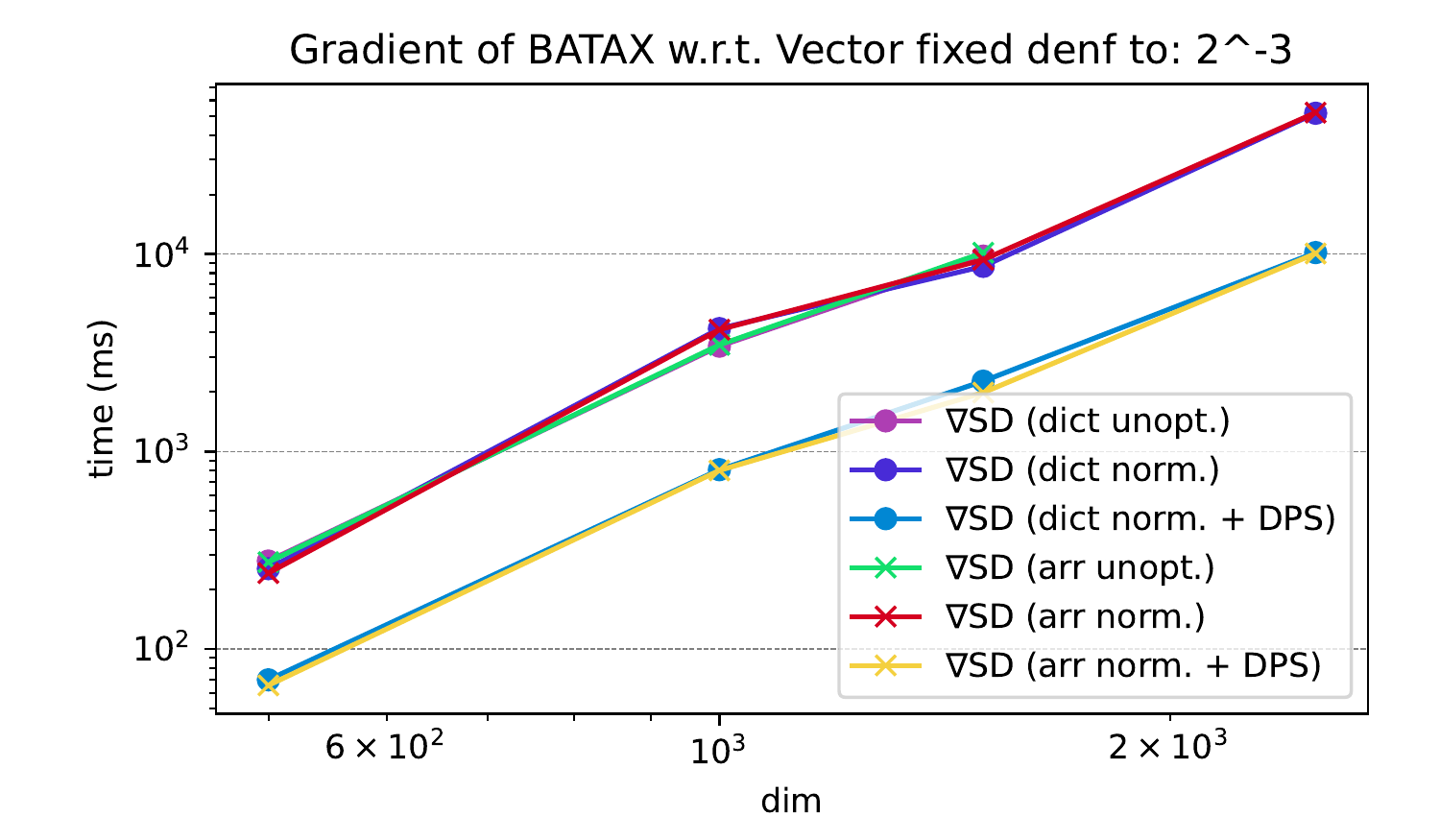} & \includegraphics[width=0.48\linewidth]{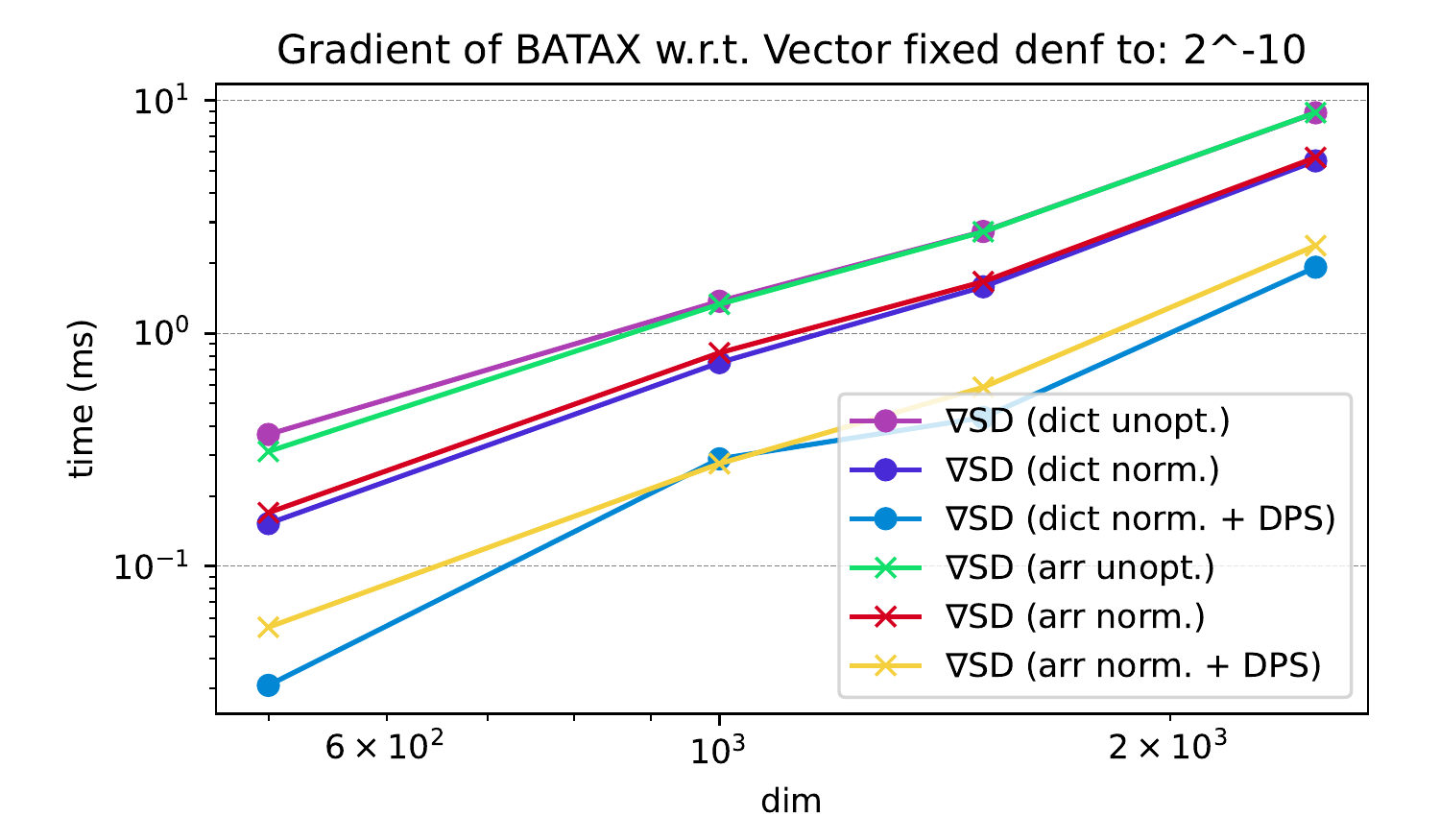} \\
    \end{tabular} 
    \vspace{-0.5cm}
    \caption{Impact of the normalization and DPS over the performance of \system on the BATAX kernel.}
    \label{fig:batax:opt}
    \vspace{-0.3cm}
\end{figure}

\subsection{Impact of Optimizations}
In this section, we investigate the impact of optimizations on the performance of \system{}. For each data representation, we consider the following alternatives for the generated C++ code: (1) without low-level transformations, (2) with multiplication normalization, and (3) with multiplication normalization and removal of intermediate tensors in the nested loops using DPS.

Figure~\ref{fig:batax:opt} shows the results for the BATAX kernel. Similarly to the previous sets of experiments, we see a slight advantage for the dictionary-based representation over the array-based one. This is due to the intermediate dictionary created in the array-based version (cf. Section~\ref{sub:dps-and-cpp}). Furthermore, each of the optimizations has a positive impact on performance. Especially for lower densities, the impact of optimizations becomes more pronounced.

\section{Related Work}


\smartpara{Automatic Differentiation}
There are several existing automatic differentiation (AD) frameworks and libraries for imperative and functional programming languages. ADIFOR~\cite{DBLP:journals/sp/BischofCCGH92} and Tapenade~\cite{DBLP:journals/toms/HascoetP13} perform AD for Fortran and C programs respectively, while Adept~\cite{DBLP:journals/toms/Hogan14} and ADIC~\cite{DBLP:journals/procedia/NarayananNW10} perform AD for C++ using expression templates. ForwardDiff~\cite{DBLP:journals/corr/RevelsLP16} uses vector forward-mode AD for differentiating Julia programs, while DiffSharp~\cite{DBLP:journals/corr/BaydinPS15} is an AD library implemented in F\# that provides both forward-mode and reverse-mode. Stalingrad~\cite{DBLP:journals/toplas/PearlmutterS08} is an optimizing compiler for a dialect of Scheme with a first-class AD operator and supports both forward mode and reverse mode of AD. Similarly, Karczmarczuk~\cite{DBLP:conf/icfp/Karczmarczuk98} presents a Haskell implementation for both forward and reverse mode AD, and Elliott~\cite{DBLP:journals/pacmpl/Elliott18} provides a generalization of AD based on category theory for implementing both forward and reverse-mode AD. There has been recent efforts on providing correct and asymptotically efficient reverse-mode AD for functional languages~\cite{DBLP:journals/pacmpl/KrawiecJKEEF22,DBLP:journals/pacmpl/SmedingV23,DBLP:journals/pacmpl/RadulPFJM23}, the ideas of which are implemented in JAX~\cite{jax2018github,frostig2018compiling} and Dex~\cite{paszke2021getting}.

Machine learning libraries like Tensorflow and PyTorch are implemented based on tensor abstractions. These systems come with a predefined set of efficient kernels for manipulating tensors and can use compilation backends for further optimization. Lantern~\cite{DBLP:journals/pacmpl/WangZDWER19} uses multi-stage programming to perform reverse-mode AD. However, none of the mentioned frameworks supports AD for sparse data structures with irregular storage formats~\cite{tensorflow:sp:issue,jax:sp:issue,pytorch:sp:issue}. There have been efforts on statically incorporating sparsities, however, this requires manually specifying the sparsity patterns by the programmers~\cite{10.1145/3578360.3580259} and do not work for large sparse matrices with arbitrary patterns~\cite{ghorbani2022compiling}.

\smartpara{Sparse Tensor Algebra} Sparse tensor algebra has been the focus of much research and development in recent years, leading to the emergence of several frameworks and systems designed to support it. TACO~\cite{kjolstad:2017:taco,DBLP:journals/pacmpl/ChouKA18} is a system capable of handling both sparse and dense computations over tensor algebra. Another noteworthy framework is the sparse polyhedral framework~\cite{DBLP:journals/pieee/StroutHO18}, which extends the capabilities of polyhedral compilation to support sparse tensor algebra. In addition, packages such as SciPy~\cite{DBLP:journals/corr/abs-1907-10121}, TensorFlow, PyTorch, and the MATLAB Tensor Toolbox~\cite{DBLP:journals/siamsc/BaderK07} offer support for various sparse matrix representations, enabling efficient computation on sparse tensor data structures commonly found in scientific and engineering applications.

Despite the progress made in this area, automatic differentiation for sparse tensors is still not widely supported. One of the few recent efforts~\cite{nytko2022optimized} provides manual gradients for a limited set of kernels. The primary challenge in differentiating sparse tensors is the irregular data representation, making differentiation a complex process. To address this issue, we propose separating the logical sparse tensor representation from its physical storage format, allowing for more efficient and effective differentiation.

\section{Conclusion \& Outlook}
In this paper, we present \system{}, the first framework that supports automatic differentiation for sparse tensors. Our main insight is to separate the logical concerns from the physical data storage representations. We provide a tensorized forward-mode transformation over the logical fragment of \lang, a recently introduced language that fuses the physical storage of sparse tensors with the logical specification of kernels. By benefiting from the algebraic optimizations globally applied using equality saturation we further improve the performance of differentiated programs and fuse their physical representations. We show experimentally that our framework outperforms the state-of-the-art differentiable tensor frameworks over both real-world and synthetic datasets.

For the future, we plan to add support for reverse-mode AD (by borrowing ideas from~\cite{DBLP:journals/pacmpl/RadulPFJM23,berg2022forward}). In addition, we plan to add the support for scheduling transforms to add the GPU backend~\cite{senanayake:2020:scheduling}. This way we can use our framework for training deep, yet sparse learning models such as Graph Neural Networks (GNNs)~\cite{nytko2022optimized}. Finally, we aim to add the support for the entire \sdql{}~\cite{DBLP:journals/pacmpl/ShaikhhaHSO22} (including boolean and integer semi-rings required for set and bag construction as well as tupling constructs) in order to provide automatic differentiation for relational databases and hybrid relational-linear algebra workloads.

\section*{Acknowledgments}
The first author thanks Huawei for their support of the distributed data management and
processing laboratory at the University of Edinburgh. The second author is supported by a
Royal Society University Research Fellowship.

\bibliography{refs}

\end{document}